\newif\ifready\readyfalse
\pgfplotsset{
    compat=1.3,
    legend image code/.code={
        \draw [#1] (0cm,-0.1cm) rectangle (0.6cm,0.1cm);
    },
}
\theoremstyle{plain}
\newtheorem{theorem}{Theorem}[section]
\newtheorem{lemma}[theorem]{Lemma}
\newtheorem{corollary}[theorem]{Corollary}
\newtheorem{definition}[theorem]{Definition}
\newtheorem{observation}[theorem]{Observation}
\newtheorem{example}{Example}
\newtheorem*{remark}{Remark}
\newcommand{\defn}[1]{\textbf{\textit{#1}}}
\crefname{theorem}{Theorem}{Theorems}
\Crefname{lemma}{Lemma}{Lemmas}
\Crefname{claim}{Claim}{Claims}
\Crefname{observation}{Observation}{Observations}
\Crefname{algorithm}{Algorithm}{Algorithms}
\Crefname{myalgctr}{Algorithm}{Algorithms}
\Crefname{challenge}{Challenge}{Challenges}
\algrenewcommand\algorithmicindent{1em}%
\newcommand{\alg}{\mathcal{A}}
\newcommand{\tO}{\widetilde{O}}
\DeclarePairedDelimiter\floor{\lfloor}{\rfloor}
\algnewcommand\algorithmicparfor{\textbf{parfor}}
\algnewcommand\algorithmicinput{\textbf{Input:}}
\algnewcommand\algorithmicoutput{\textbf{Output:}}
\algnewcommand\algorithmicphaseone{\textbf{Phase 1:}}
\algnewcommand\algorithmicphasetwo{\textbf{Phase 2:}}
\algnewcommand\algorithmicmaintain{\textbf{Maintain:}}
\algnewcommand\algorithmicreduceadd{\textbf{ReduceAdd}}
\algnewcommand\algorithmicpardo{\textbf{do}}
\algnewcommand\algorithmicendparfor{\textbf{end\ input}}
\algnewcommand{\parState}[1]{\State%
    \parbox[t]{\dimexpr\linewidth-\algmargin}{\strut #1\strut}}
\algnewcommand{\parComment}[1]{\Comment%
    \parbox[t]{\dimexpr\linewidth-\algmargin}{\strut #1\strut}}
\DeclareMathOperator{\poly}{poly}
\definecolor{mygreen}{RGB}{20,140,80}
\definecolor{linkcolor}{RGB}{0,0,230}
\definecolor{mylightgray}{RGB}{230,230,230}
\definecolor{verylightgray}{RGB}{245,245,245}
\newcounter{myalgctr}
\newtcolorbox{OuterBox}[1][]{%
    breakable,
    enhanced,
    frame hidden,
    interior hidden,
    left=-5pt,
    right=-5pt,
    top=-5pt,
    float=p,
    boxsep=0pt,
    arc=0pt
#1}%
\newtcolorbox{InnerBox}[1][]{%
    enforce breakable,
    enhanced,
    colback=gray,
    colframe=white,
#1}%
\newenvironment{tbox}{
\vspace{0.2cm}
\begin{tcolorbox}[width=\columnwidth,
                  enhanced,
                  boxsep=2pt,
                  left=1pt,
                  right=1pt,
                  top=4pt,
                  boxrule=1pt,
                  arc=0pt,
                  colback=white,
                  colframe=black,
	              breakable
                  ]%
}{
\end{tcolorbox}
}
\newcommand{\tboxhrule}[0]{\vspace{0.1cm} {\color{black} \hrule} \vspace{0.2cm}}
\newenvironment{titledtbox}[1]{\begin{tbox}#1 \tboxhrule}{\end{tbox}}
\newcommand{\batch}{\mathcal{B}}
\newcommand{\eps}{\varepsilon}
\newcommand{\prob}{\mathbf{P}}
\newcommand{\geom}{\mathsf{Geom}}
\newcommand{\hatT}{\widehat{T}}
\newcommand{\delvec}{\mathbf{d}}
\newcommand{\insvec}{\mathbf{i}}
\newcommand{\cS}{\mathcal{S}}
\newcommand{\fact}{\log^2 n}
\newcommand{\draw}{\geom(\exp(\eps/\fact))}
\newcommand{\expect}{\mathbf{E}}
\newcommand{\initial}{\eps^2}
\newcommand{\trep}{t_{\text{report}}}
\newcommand{\tdel}{t_{\text{delete}}}
\newcommand{\maxe}{error_{\text{max}}}
\newcommand{\mT}{\mathcal{T}}
\newcommand{\predict}{\text{predict}}
\newcommand{\matchP}{\overline{P}}
\newcommand{\realvec}{\mathbf{r}}
\newcommand{\predvec}{\mathbf{p}}
\newcommand{\assignedvec}{\mathbf{a}_0}
\newcommand{\find}{\textsc{Find}\xspace}
\newcommand{\union}{\textsc{Union}\xspace}
\newcounter{HALG@line}
\renewcommand{\theHALG@line}{\thealgorithm.\arabic{ALG@line}}
\begin{document}

\title{The Predicted-Updates Dynamic Model: Offline, Incremental, and Decremental to Fully Dynamic Transformations}
\date{}
\author{Quanquan C. Liu\thanks{Simons Institute at UC Berkeley: \href{mailto:quanquan@mit.edu}{\texttt{quanquan@mit.edu}}, supported by an Apple Research Fellowship.} ~and Vaidehi Srinivas\thanks{Northwestern University: \href{mailto:vaidehi@u.northwestern.edu}{\texttt{vaidehi@u.northwestern.edu}}, supported by the National Science Foundation (NSF) under grant Nos. CCF-1652491, CCF-1934931 and EECS-2216970.}}

\pagenumbering{gobble}
\clearpage
\maketitle
\sloppy

\begin{abstract}
    The main bottleneck in designing efficient dynamic algorithms is the unknown nature of the update sequence.  In particular, there are some problems, like triconnectivity, planar digraph all pairs shortest paths, $k$-edge connectivity, and others, where the separation in runtime between the best offline or partially dynamic solutions and the best fully dynamic solutions is polynomial, sometimes even exponential. 

    In this paper, we formulate the \emph{predicted-updates dynamic model}, one of the first \emph{beyond-worst-case} models for dynamic algorithms, which generalizes a large set of well-studied dynamic models including the offline dynamic, incremental, and decremental models to the fully dynamic setting when given predictions about the update times of the elements. In the most basic form of our model, we receive a set of predicted update times for all of the updates that occur over the event horizon.  We give a novel framework that ``lifts" offline divide-and-conquer algorithms into the fully dynamic setting with little overhead.  Using this, we are able to interpolate between the offline and fully
    dynamic settings; when the $\ell_1$ error of the prediction is linear in the number of updates, we achieve the offline runtime of the algorithm (up to $\poly\log n$ factors).  Provided a fully dynamic \emph{backstop} algorithm, our algorithm will never do worse than the backstop algorithm regardless of the prediction error. Furthermore, our framework achieves a smooth linear trade-off between $\ell_1$ error in the predictions and runtime.  These correspond to the desiderata of consistency, robustness, and graceful degradation of the algorithms-with-predictions literature.  We further extend our techniques to incremental and decremental settings, transforming algorithms in these settings when given predictions of only the deletion and insertion times, respectively.  Our framework is general, and we apply it to obtain improved efficiency bounds over the state-of-the-art dynamic algorithms for a variety of problems, for prediction error of reasonable magnitude.
    
    Our paper models real world settings, in which we often have access to side information that allows us to make coarse predictions about future updates. Our techniques are also of theoretical interest, as they interpolate between the offline/partially dynamic and fully dynamic settings, and provide natural extensions of the \emph{algorithms-with-predictions} paradigm to the dynamic setting.\footnote{The previous posted version of this manuscript focused on incremental to fully dynamic transformations.  The new version merges with a previous unpublished manuscript from April 2023 to include a more general framework including offline to fully dynamic transformations.}
\end{abstract}

\newpage
\tableofcontents
\newpage
\clearpage
\pagenumbering{arabic}

\section{Introduction}

Learning-augmented algorithms and traditional dynamic algorithms have fundamentally similar goals: to efficiently provide
up-to-date solutions to problems in the face of rapidly rapidly changing data. 
While learning-augmented algorithms typically use machine learning methods to take advantage of structure in data, traditional dynamic algorithms consider worst-case adversarial inputs, and algorithms are designed to be 
efficient in spite of such worst-case, potentially unstructured, instances.
In this paper, we combine these two 
paradigms under \defn{one framework} that encompasses the three major settings traditionally studied within dynamic algorithms: 
the \emph{offline dynamic}, \emph{incremental}, and \emph{decremental} settings. Specifically, we show that we can \emph{transform}
broad classes of offline, incremental, and decremental algorithms into \emph{fully dynamic} algorithms provided predictions of the update
times of dynamic events such as insertions and/or deletions. We show that we achieve the same runtimes as the original offline, incremental, 
and decremental algorithms, up to polylogarithmic factors, when the $\ell_1$ error of our predictions is near-linear in the number of updates 
that are performed. Not only does this allow us to achieve (sometimes exponential) improvements in runtime
for certain problems, %
but it also allows us to transfer algorithmic techniques between models in novel ways.

\emph{Algorithms with predictions} or \emph{learning-augmented algorithms} is a very rich area~\cite{MMV17, HIKV18, KBCDP18, Mit18, IVY19, ACEPS20, ADJKR20, BMRS20, BMS20, cao22learning, CGP20, JLLRW20, JPS20, LLMV20, lin2022learning, Roh20, Wei20, DILMV21, DLPV21, EINRSW21, LMRX21, Mit21, VKMK21, CEILMRSWWZ22, CSVZ22, IMMR22, DMVW23} that in recent times has seen many interesting and important results. 
In this paradigm, an algorithm solicits \emph{predictions} from an untrusted source (e.g.\ a machine learning model) to help make decisions.  The goal is to design algorithms that achieve the following three desiderata \cite{LV21}: 
\begin{enumerate}[(1)]
    \item \emph{(Consistency)} If the predictions are of high quality, the algorithm performs much better than a worst-case algorithm. 
    \item \emph{(Competitiveness)} If the predictions are of low quality, the algorithm does not perform any worse than a worst-case algorithm. %
    \item \emph{(Robustness)} The performance of the algorithm degrades gracefully as a function of the prediction error.
\end{enumerate}
Additionally, we want to solicit predictions that can be reasonably obtained in practice. An example 
is link-prediction for dynamic graphs (where edges are 
inserted and deleted)
which has been gaining great interest recently in the machine learning community, with a set of promising results (a small sample
includes~\cite{kumar2019predicting, nguyen2018continuous,
poursafaei2022towards, quach2021dyglip, rossi2020temporal,zhang2018link, wang2021inductive}). 
A detailed overview of the field is given in~\cite{BWCA-AlgosWithPredictions}.

These desiderata present a challenge. While we would like to achieve all three for some reasonable notion of prediction, it is not a priori clear that such a guarantee is even possible.  For some problems and settings, it is indeed not possible, 
and algorithms are designed to trade off these objectives.
Specifically, in the setting of \emph{online algorithms with predictions}, which is closely related to our work, 
we see inherent tradeoffs between consistency and robustness  (e.g. in the the classic problem
of rent-or-buy \cite{KPS18, GP19, WZ20}). Such tradeoffs are unavoidable, as online problems often involve an aspect of ``commitment."  An algorithm must commit to either following the worst-case strategy or following 
the prediction; in either case, the adversary can then choose inputs to make the algorithm lose in either consistency or robustness based on this choice.  Thus, the main research direction for online algorithms with predictions is in quantifying these tradeoffs.  

Another setting that is closely related to our work is that of \emph{warm starts}, where we design algorithms for static problems that take advantage of a predicted solution.  The goal in this setting is to minimize runtime on a static input.  Here, it is usually possible to achieve both consistency and robustness, by simply running the learning-augmented procedure in parallel with a worst-case procedure, and outputting the answer of the algorithm which completes first.  However, 
unlike the online setting where we only need predictions of future events, 
warm starts often require the prediction to be very large 
\cite{DILMV21, CSVZ22, DMVW23}, typically encoding a \emph{solution} to the problem.
In the context of graph problems, such solutions could have size proportional to the size of the graph, and even reading the prediction can take nontrivial time, much less verifying it. 
We provide a more detailed 
comparison of these settings with our work in~\cref{app:related-work}.

Thus, it is particularly interesting that for dynamic algorithms, it is indeed possible to  
achieve the best of online algorithms and warm starts: we only need predictions of future events 
(instead of future solutions) and we \emph{simultaneously achieve all three desiderata} of algorithms with predictions. Our model has the added benefit that the requested predictions are not 
problem-specific, and the same predictions could be used multiple times to solve different unrelated problems (e.g.\ one set of predictions about a dynamic graph can be used to solve a collection of graph problems). %

Our framework expands upon the traditional model used for dynamically changing data.
The main bottleneck of designing traditional \emph{dynamic algorithms} is the unknown nature of the update sequence.
This is demonstrated by the large gap in efficiency between algorithms for the offline and partially dynamic settings 
and the fully dynamic (online) setting for many graph problems (see e.g.,~\cite{abraham2017fully, bernstein2018incremental, baswana2019dynamic, 
bhattacharya2020improved, ChenDWZZ18, chen2020fast, das2022near, EGIN97, forster2021dynamic,
ForsterNG23, Goranci19, GHS19, GIS99, gao2023fully, goldberg1998beyond, goranci2021expander, HKMST12, holm2020worst, JS21,
klein1998fully, PSS17}).
To address this, there has been much
interest in using the techniques from the offline and partially dynamic settings to obtain 
fully dynamic algorithms for certain problems (see e.g.~\cite{abraham2016fully, abboud2014popular, bentley1979decomposable,
baswana2012fully, bentley1980decomposable, dobkin1991maintenance, henzinger1999randomized,
henzinger2016dynamic, henzinger2015unifying, holm2001poly, king1999fully, kopelowitz2016higher, overmars2005dynamization, overmars1979two,
probst2020new, roditty2004dynamic}). 

Following this line of work, Chan~\cite{chan2011three} and, very recently, 
Peng and Rubinstein~\cite{PR23} gave an elegant systematic transformation for \emph{incremental} algorithms to 
fully dynamic algorithms in the \emph{known-deletion} model. %
An incremental algorithm is one that produces answers
after insertions of elements, while a fully dynamic algorithm is one that produces answers after both element insertions 
and deletions. 
The \emph{known-deletion} model is an intermediate model, between incremental and fully dynamic, in which an inserted element arrives tagged with the precise timestamp of when it will be deleted.\footnote{\cite{PR23} refer to this model as the \emph{deletion lookahead model}.}  

In this paper, we define the \emph{predicted-updates model} to bring \emph{beyond-worst-case analysis} \cite{BWCA-full-book} to the study of dynamic algorithms.  In particular, we provide transformations from 
offline, incremental, 
and decremental dynamic algorithms to fully-dynamic algorithms in our new model.  
In the offline dynamic setting, the entire update sequence is provided to the algorithm at once, and the algorithm's task is to compute all of the solutions corresponding to each day in the sequence.  In the incremental setting, an algorithm only needs to handle element insertions and not deletions.  Similarly, in the decremental setting, an algorithm only needs to handle element deletions.  Our framework can convert an offline dynamic algorithm into a fully-dynamic algorithm using predicted time-stamps for all types of updates.  Our framework can also be extended to convert an incremental algorithm into a fully-dynamic algorithm, with only predictions of deletion events, and a decremental algorithm into a fully-dynamic algorithm, with only predictions of insertion events.  In particular, our result for incremental algorithms generalizes the aforementioned result of \cite{PR23}.

The techniques in our paper bring the study of 
beyond-worst-case analysis for dynamic graph algorithms into focus. 
Much of the study of dynamic algorithms has, thus far, focused on worst-case analysis with many fundamental problems hitting boundaries in efficiency.
Furthermore, state-of-the-art dynamic solutions are often quite complex and require the use of heavy machinery.
The framework we introduce in this paper is fundamentally \emph{simple}, \emph{does not} use heavy machinery, but is applicable
to a \emph{broad} range of problems and settings. 
With the help of predictions, we are able to lift simpler, more implementable, and faster algorithms 
from the offline and partially dynamic settings to the fully dynamic setting. 

Our work also provides motivation for a new 
\emph{noisy setting} in offline and partially dynamic algorithms, with connections to differential privacy~\cite{DMNS06} 
and sensitivity analysis~\cite{VY23,kumabe2022lipschitz}, where 
the input sequence to the algorithm is drawn from a stochastic model; algorithms in these settings may directly have implications in our model.  In particular, our framework gives worst-case guarantees in terms of the \(\ell_1\)-error of the prediction.  This already subsumes an interesting stochastic model, where the realized timestamp of each event is drawn from a distribution centered at the predicted time of the event.  Similar models have been studied for dynamic data structures in \cite{lin2022learning, cao22learning}.

Finally, our framework gives polynomial, 
sometimes even exponential, improvements in runtime over the best fully 
dynamic algorithm for APSP, triconnectivity, dynamic DFS, maxflow/mincut, 
submodular maximization, $k$-edge connectivity,
and others (see~\cref{table:problems}). Our techniques are also inherently parallel (over small $\poly(\log n)$
depth) and may have additional implications in scalable models like the work-depth and distributed models.

\paragraph{Comparison with Concurrent, Independent Work} Recent concurrent work of van den Brand et al.~\cite{BFNP23} 
and Henzinger et al.~\cite{henzinger2023complexity} also study algorithms in dynamic graph models with prediction. 
Henzinger et al.~\cite{henzinger2023complexity} focus on lower bounds in their paper under different prediction 
models from our work. van den Brand et al.~\cite{BFNP23} also study different types of prediction models including
a model very similar to our predicted-deletions model, where instead of using the $\ell_1$-error of a prediction, they 
instead look at the number of element-wise inversions between the predicted \emph{deletion} sequence and the real sequence. 
They give a deterministic algorithm for incremental algorithms with predicted deletions
based on Peng and Rubinstein~\cite{PR23}.  Their reduction maintains the state of an incremental algorithm, in which elements are inserted in approximately reverse deletion order.  In this paper,
we give a reduction from the decremental setting with predicted \emph{insertions} to the incremental setting that can also be 
applied to their result.  The reduction applied to their construction can be interpreted as doing essentially the reverse: it maintains the state of a decremental algorithm, in which elements are deleted in approximately reverse insertion order.

Their framework does not handle offline to fully dynamic transformations for two reasons:
1) it is unclear how to simultaneously insert elements in both approximately reverse deletion order \emph{and} reverse insertion
order when given noisy predictions, and 2) their reduction is tied to incremental algorithms where it is even unclear what types of algorithms
to ask for in the offline setting. We sidestep both issues in our framework with the key data structure used in our 
solution: the random \emph{partition-tree} (\cref{def:random-partition-tree}).  A more detailed comparison of the approaches is provided in \Cref{sec:comparison-to-vdB-etal}

Recent independent and concurrent work of Agarwal and Balkanski \cite{AB23} studies the problem of \emph{dynamic submodular function maximization with cardinality constraints} in a model similar to our predicted update model.  Their update time is given in terms of two parameters: \(w\) which constitutes the magnitude of a ``small" error, and \(\eta\) which is the number of elements with error larger than \(w\).  Their update time is then polylogarithmic in \(\eta\) and \(w\).  This form of update time is incomparable to runtimes in the form of our result.  They also study a different form of submodular maximization than we do in this work, we consider matroid constraints as opposed to cardinality.

\section{Technical Overview and Summary of Contributions}\label{sec:tech}
First, we define and motivate the \emph{predicted-updates dynamic model}.  Then, we introduce and provide a technical overview of our framework to design algorithms for this model.  Finally, we describe specific problems for which our framework is able to outperform state-of-the-art fully dynamic algorithms.  

\subsection{The Predicted-Updates Dynamic Model} 
We introduce the \emph{predicted-updates dynamic model}, which is a general model for algorithms with predictions in the dynamic setting that applies to a wide range of problems in the traditional offline, incremental, and decremental dynamic settings.  
In this paper, we refer to \emph{days} to indicate an ordering to the update operations. The updates can easily be presented as 
an ordered sequence instead of being tied to days. However, we find that speaking in terms of \emph{days} is a useful abstraction 
for describing our results. Throughout, we use \emph{time} and \emph{timestamp} interchangeably with \emph{day}.
We define an \defn{event} to be an update of a certain type on an element $e \in \cS$ where $\cS$ is the ground set of all possible
elements in the dataset. We differentiate between \defn{real} and \defn{predicted} events to be updates that actually 
occur on a day versus an update that is predicted to occur on a day.

\begin{restatable}[Predicted-Updates Dynamic Model]{definition}{predictedupdatesdynamicmodel}
    In the \defn{predicted-updates dynamic model}, we consider a ground set $\cS$ of size $|\cS|$. We are guaranteed updates 
    for three different types of algorithms:

    \begin{enumerate}
        \item Offline Dynamic Algorithms: We are given a \emph{predicted sequence of dynamic updates} $P$ consisting of tuples $(e, type, day, i)$ where
        $e \in \cS$ is the element that the event is performed on, $type$ is the type of event, $day$ is the predicted day, and $i$ (initially 
        set to $0$) is a counter 
        for the number of times the prediction for this event is updated ($i$ is a parameter that is used in our algorithms). The
        predictions can be given to us, online, in sets $P_1, P_2, \dots, P_{\log_2(T)}$ where $P_i \subseteq P_{i + 1}$ and $|P_{i+1}| = 2 \cdot 
        |P_i|$ as we see more real events. In other words, as we see more real events, we receive more predictions. The 
        only requirement on the sets $P_1, \dots, P_{\log_2(T)}$ is that the predicted days cannot be earlier than the day each set of 
        predictions is given.  In particular, it is possible that multiple events are predicted for the same day.  On each day, a real event occurs. \label{item:increasing-bundles}
        \item Incremental Partially Dynamic Algorithms: 
        On each day exactly one of the following occurs:
            \begin{enumerate}
                \item An element \(e \in \cS\) is inserted (a real insertion), 
                and reports a prediction, $(e, day, i)$, of the day on which it will be deleted (\emph{a predicted deletion}).
                \item A previously inserted element is deleted (a real deletion).
            \end{enumerate}
        \item Decremental Partially Dynamic Algorithms: %
        At the outset, the algorithm is given a set $P$ of all the elements that are predicted to ever appear in the system, and predictions for when each of the elements in $P$ will be inserted.
        
        Then, on each day, exactly one of the following occurs:
        \begin{enumerate}
            \item An element \(e\), either in $P$ or not, is inserted, 
            \item A previously inserted element is deleted, and provides a prediction of when it will be reinserted (if ever).  
        \end{enumerate}
        
        As in~\cref{item:increasing-bundles} (with the same restrictions), 
        the original insertion predictions can also
        be given to us in sets $P_1, P_2, \dots, P_{\log_2(T)}$, online, as we see more real deletions. On each day, a real insertion or
        deletion is performed.
    \end{enumerate}
    
An algorithm computes a function \(f(\cdot)\), which is a solution to a problem $\mathcal{P}$, 
in the predicted-updates dynamic model, if on every day \(t\), the algorithm outputs \(f(S)\), where \(S \subseteq \mathcal{S}\) is 
the \emph{working subset} induced by the true (not predicted) sequence of element insertions and deletions that occur in time-steps $t \in [T]$.
\label{def:predicted-updates-dynamic-model}
\end{restatable} 

Note that an algorithm in this model \emph{must} compute each day's output \emph{correctly}.  
The runtime of the algorithm, however, may depend on the prediction error.  
While there is significant recent work surrounding algorithms with predictions, this, along with concurrent, independent works~\cite{BFNP23,henzinger2023complexity}, to the best of our knowledge, are the first to study the dynamic model.\footnote{Specifically, dynamic problems that compute functions over a subset \(S\) of some ground set \(\mathcal{S}\), where \(S\) is subject to insertions and deletions. See e.g.\ this recent survey~\cite{HHS22} for examples of such algorithms.} 
We motivate the predicted-updates dynamic model by examining some key features.

\paragraph{Modularity.}  In many settings of the algorithms-with-predictions domain, the requested predictions are problem-specific.  This is necessary for some problems, where the purpose of the prediction is to provide a partial solution.  However, this requires algorithm designers to provide a custom prediction model, error metric, and algorithm tailored to each new problem.  In our model, our requested predictions of element update times are not problem-specific, and are instead applicable to any dynamic problem.  Hence, techniques can translate across wide classes of problems, as demonstrated by our framework.  We also emphasize that our model makes no distributional assumptions about the input.  Rather, our guarantees are in the form of \emph{worst-case analysis} with respect to a new parameter.  

\paragraph{Practical and efficient machine learning.} A main motivation of algorithms-with-predictions is to take advantage of the predictive power of machine learning models.  Our model and framework allows us to train a model once e.g., to predict insertion/deletion times of edges in a graph
(as a vector of values), and use it many times for a wide range of problems e.g., a collection of dynamic graph problems.  Additionally, our framework can handle predictions that are ill-formed, i.e. not feasible update sequences, so we do not have to take into account possibly complicated feasibility requirements in this induced learning problem. 
 These properties allow us to avoid the costly process of designing and training models that predict problem-specific parameters for each new problem.  Additionally, since our requested predictions do not depend on the solutions to a specific problem \(f(\cdot)\), we \emph{do not} need to compute \(f(\cdot)\) on the dynamic graphs we encounter in the training stage.  This could be a very large saving, as computing \(f(\cdot)\) on each training instance, in many cases, takes time superlinear in the size of the instance.  Finally, we note that a rapidly growing area of empirical research shows that it is possible to use machine learning to generate high-quality predictions of edge insertions and deletions in graphs \cite{kumar2019predicting, nguyen2018continuous,
poursafaei2022towards, quach2021dyglip, rossi2020temporal,wang2021inductive,zhang2018link}.  

\paragraph{Interpolation. }  This model provides a \emph{beyond-worst-case paradigm} for dynamic algorithms that interpolates between the offline and partially dynamic settings (with zero prediction error) and fully dynamic setting (with large prediction error).  
Beyond-worst-case models can give us insight into what bottlenecks are inherent to a problem and model, and inspire new algorithmic techniques that could transfer between models (see \cite{BWCA-full-book} for an in-depth discussion).
We hope that studying the predicted-updates dynamic model can provide insights that inspire better fully dynamic algorithms and lower bounds.  %

\subsection{Framework for Predicted-Updates Dynamic Algorithms}
\label{sec:framework-technical-overview}

In conjunction with our model, we design an algorithmic framework that ``lifts" offline, incremental, and decremental
algorithms to the predicted-updates, fully dynamic setting. In the technical section of our paper, we use the term \emph{work} to 
denote \emph{runtime}.

\begin{theorem}[Predicted-Updates Dynamic Model Framework (Informal)]

    Given a balanced, offline divide-and-conquer algorithm, $\alg$, which performs $\tO(|W|^c)$ work per subproblem $W$ of size $|W|$, 
    we can construct an algorithm in the predicted-updates model that, over $T$ real events, does total work,

    \begin{align*}
        \tO(T + ||\mathrm{error}||_1), \;\;\; \text{when } c \leq 1,\\
        \tO((T + ||\mathrm{error}||_1) \cdot T^{c-1}), \;\;\; \text{when } c > 1,
    \end{align*}

    in expectation and with high probability, where \(||\mathrm{error}||_1\) is the sum over all elements of the absolute difference between the predicted deletion day and the actual deletion day. 
    
    Given an incremental algorithm, $\alg$, with worst-case update time \(\mathrm{update}(\alg)\), we can construct an algorithm in the predicted-updates model that, over \(T\) real events, does total work, in expectation and with high probability,

    \begin{align*}
        \tO\left( (T + ||\mathrm{error}||_1) \cdot \mathrm{update}(\alg) \right).
    \end{align*}

    Given a decremental algorithm, $\alg$, with worst-case update time \(\mathrm{update}(\alg)\), we can construct an algorithm in the predicted-updates model that, over \(T\) real events, does total work, in expectation and with high probability,

    \begin{align*}
        \tO\left( (T + ||\mathrm{error}||_1) \cdot \mathrm{update}(\alg) \right).
    \end{align*}

    Furthermore, given a ``backstop" fully-dynamic algorithm \(B\) for the problem, that does at most \(R_B(T)\) total work by day \(T\), we can get an algorithm that, over \(T\) updates, does total work
    \[\widetilde{O}\left(\min \left\{(T + ||\mathrm{error}||_1) \cdot \mathrm{update}(\alg), \; R_B(T) \right\} \right),\]
    in expectation and with high probability (where $\mathrm{update}(\alg)$ depends on $c$, see above, in the offline case).
    (We use \(\widetilde{O}(\cdot)\) to hide polylogarithmic factors in \(|\cS|\) and \(T\).)
    \label{thm:informal-pred-update-framework}
\end{theorem}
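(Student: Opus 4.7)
The plan is to build everything on top of the random partition tree (\cref{def:random-partition-tree}): a balanced binary tree of depth $O(\log T)$ over the $T$-day event horizon, where the cuts between sibling subtrees are randomly shifted. Every predicted event is assigned to the leaf corresponding to its predicted day, and each internal node $v$ owns the time interval $I_v$ of its leaves. The key structural property I would prove is that for any predicted event $e$ with $\ell_1$-error $\delta_e := |\text{predicted day}(e) - \text{real day}(e)|$, the expected number of tree nodes $v$ such that $e$'s ``real'' and ``predicted'' intervals disagree inside $I_v$ is $O(\log(\delta_e+1))$. This is where the random shifts do their work: without randomness, an adversary could align errors with tree cuts to force every event with error $\delta$ to touch $\Omega(\log T)$ nodes; with random shifts the expected number is only $O(\log(\delta_e+1))$, so summing over elements gives an $\widetilde{O}(T+\|\mathrm{error}\|_1)$ bound on the total ``misalignment mass.''

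For the offline-to-fully-dynamic reduction, I would run the given balanced divide-and-conquer algorithm $\alg$ on every node of the tree, where the subproblem at node $v$ consists of the elements whose predicted lifetime touches $I_v$, plus a small set of ``patch'' elements introduced to repair prediction mistakes revealed before $I_v$ starts. When the algorithm at $v$ does $\widetilde{O}(|W|^c)$ work for a subproblem of size $|W|$, the cost at each level of the tree is dominated by (i) the correctly-predicted mass, summing to $\widetilde{O}(T)$ (respectively $\widetilde{O}(T\cdot T^{c-1})$) across a level via the standard master-theorem calculation for balanced D\&C, and (ii) the misaligned mass, which by the random-partition-tree lemma contributes an extra $\widetilde{O}(\|\mathrm{error}\|_1)$ (respectively $\widetilde{O}(\|\mathrm{error}\|_1\cdot T^{c-1})$). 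Whenever a prediction is revised (the counter $i$ in the tuples increases), the affected element is re-inserted into the tree; a standard charging argument bounds the total number of revisions by $\widetilde{O}(T+\|\mathrm{error}\|_1)$, since each unit of displacement can trigger at most $O(\log T)$ movements across tree boundaries.

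For the incremental-to-fully-dynamic reduction, I would bundle inserted elements by their predicted deletion day into the leaves of the partition tree, so that on each internal node $v$ we maintain a single incremental data structure $\alg_v$ over the elements predicted to be alive throughout $I_v$. A query at day $t$ is answered from the instance at the leaf containing $t$ (plus $O(\log T)$ ancestors, handled by standard composition). A correct deletion simply expires with its bundle, costing nothing extra, while an error of size $\delta_e$ forces us to rebuild $O(\log(\delta_e+1))$ instances from scratch along the tree path on which the mismatch occurs, each rebuild costing $\widetilde{O}(|W|\cdot \mathrm{update}(\alg))$. Summing and using the partition-tree lemma yields the claimed $\widetilde{O}((T+\|\mathrm{error}\|_1)\cdot\mathrm{update}(\alg))$ bound. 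The decremental case follows by a time-reversal duality: I would show that the same tree construction, with predicted insertion times playing the role of predicted deletion times and the update stream reversed locally at each node, lets us run a decremental algorithm on each subtree with identical accounting. Finally, the backstop guarantee is obtained by running $B$ in parallel on an interleaved schedule and terminating whichever procedure finishes each day first, so the $\min$ is achieved up to a factor of two.

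The main obstacle I expect is the random-partition-tree lemma bounding misalignment mass: proving that random shifts reduce the per-element expected disruption from $O(\log T)$ to $O(\log(\delta_e+1))$, uniformly over worst-case prediction errors, and doing so while still permitting predictions to arrive online in doubling bundles $P_1,P_2,\dots$. Handling the online prediction arrivals requires redefining the tree so that previously processed subtrees are not invalidated when new predictions appear; I would do this by lazily materializing subtrees and charging each lazy expansion to the triggering real event, ensuring the total expansion cost stays within $\widetilde{O}(T+\|\mathrm{error}\|_1)$ in expectation and, by standard Chernoff/concentration, with high probability.
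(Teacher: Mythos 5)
Your high-level blueprint matches the paper's: a random tree over the time horizon whose randomness localizes the recomputation triggered by a prediction error, a guess-and-double rescheduling of late events whose displacements telescope against the $\ell_1$ error, and an interleaved ``race'' against a backstop. However, there are concrete gaps. First, the lemma you plan to prove (expected number of tree nodes on which the real and predicted days disagree is $O(\log(\delta_e+1))$) is not the quantity that drives the work bound; what is needed is that the expected \emph{total size} of the windows that must be recomputed is $\tO(\delta_e)$, which the paper gets from \cref{lem:random-tree-preserves-lengths} (the smallest window containing both days has expected size $O(|t_2-t_1|\log T)$) together with the $O(\log T)$ depth bound; your later ``misaligned mass'' accounting silently assumes this stronger statement. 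Second, you never handle the overscheduling problem: the model allows ill-formed predictions with arbitrarily many events (and deletions before insertions) on a single predicted day, so assigning predicted events to the leaf of their predicted day breaks the balance condition that work at a window scales with its size in days. The paper spends \cref{lem:preprocessing-schedule} repairing this via online metric matching on the line (the harmonic algorithm, implemented with union-find), paying only an $O(\log T)$ blowup in $\ell_1$ error. Third, the high-probability claim does not follow from ``standard Chernoff'': the recomputation cost of a single error has a heavy tail (with probability about $\delta_e/T$ its lowest common window is the root), and all errors share one random tree, so the costs are neither bounded nor independent. The paper instead boosts by running $O(\log T)$ independent instantiations interleaved via the backstop meta-algorithm (\cref{lem:boosting}, \cref{thm:backstop}) and applying Markov to each.

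The partially dynamic reductions also deviate in ways that fail for general algorithms. Your incremental construction keeps a separate structure per node over the elements bundled there and answers a query by ``standard composition'' of the $O(\log T)$ structures on the root-to-leaf path; composing answers from disjoint element sets is only possible for decomposable problems, which the theorem does not assume. The paper avoids this via permanent elements (\cref{lem:divide-and-conquer-from-incremental}): each window inserts its newly permanent elements into the incremental state inherited from its parent (persistent memory passed down the path, never shared with siblings), so the leaf holds one full incremental state and queries need no composition; the expected per-window work stays $O(\mathrm{update}(\alg)\cdot|W|)$ because the permanent elements of $W$ are bounded by the size of $W$'s sibling. It also handles arbitrary, unpredicted insertions by computing windows just in time. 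For the decremental case, ``time-reversal duality'' is not available online -- you cannot process a reversed stream whose future you have not seen -- whereas the paper's reduction reinterprets the decremental algorithm as an incremental algorithm on anti-elements (so predicted insertions become predicted deletions with no reversal), and additionally accounts for elements that were never predicted via the $K\cdot\mathrm{initialize}_{\alg}(T)$ and $TK$ terms in \cref{thm:decremental-to-fully-dynamic}, a cost your sketch omits.
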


Surprisingly, we are able to achieve all three desiderata of algorithms-with-predictions: consistency, competitiveness, and robustness, simultaneously.  Note also that this guarantee holds \emph{in hindsight}.  That is, the algorithm does not require an estimate of the magnitude of \(||\mathrm{error}||_1\) to achieve this goal.  Additionally, the potential to include a backstop means that, asymptotically, one can design algorithms that can take advantage of predictions \emph{for free}, without losing the assurance of a worst-case guarantee.  We believe that this framework is practically implementable, and does not rely on any algorithmic ``heavy machinery."  This is particularly attractive, as the intention of the model is to provide algorithms that perform well in practice. Furthermore, it is often the case that offline, incremental, and decremental 
algorithms are simpler than their fully dynamic counterparts, allowing us to utilize these algorithms in real-world settings.

\paragraph{The \(\ell_1\) Error Metric}  The results of \Cref{thm:informal-pred-update-framework} give runtime bounds that depend on the \emph{\(\ell_1\) norm} of the prediction error.  We use this as a shorthand to denote the sum of absolute differences between the predicted time and realized time of each event.  One way to interpret this, is that if \(\mathbf{u}\) is a vector indexed by possible events that contains the true update times of each event, and \(\mathbf{p}\) is another vector indexed by possible events that contains the predicted update times of each event, the \(\ell_1\) error of the prediction is 
\(||\mathbf{u} - \mathbf{p}||_1.\)
If an event occurs but is never predicted, or vice versa, that event contributes \(T\) to the \(\ell_1\) error.  Also, for events that occur multiple times, it suffices to consider the closest matching of duplicated events.  

The \(\ell_1\) error metric is a natural one that has been well-studied in applications to warm-starts \cite{DILMV21, DMVW23}.  To contextualize the bound, consider predictions that, on average, are within $\poly(\log(T))$ factor of days of the true event times.  Then, the \(\ell_1\) error will be near-linear in \(T\), and asymptotically, we achieve the runtime of the offline algorithm (up to $\poly(\log(T))$ factors).  On the other hand, the \(\ell_1\) error of any prediction is \(O(T^2)\).  So in the worst case, the algorithm will take runtime equivalent to solving \(T\) independent offline instances (which can be improved by providing the algorithm a worst-case backstop).  

Another related error metric that we could have considered is bounding the runtime by a factor of \(T \cdot ||\mathbf{u} - \mathbf{p}||_\infty\), where \(||\mathbf{u} - \mathbf{p}||_\infty\) is the largest absolute error over all predictions.  This bound is always at least the \(\ell_1\) error and is sometimes comparable in magnitude.  In general, however, it is a significantly weaker bound.  As an example, the \(\ell_1\) error can be (near-)linear in \(T\) even when there are a polylogarithmic number of events, which occur, that were never predicted.  In this same case, \(T \cdot ||\mathbf{u} - \mathbf{p}||_\infty \) would be \(\Omega(T^2)\), giving a very different bound.  Thus, a main focus of our work is getting this tighter dependence on \(\ell_1\) error rather than \(T \cdot \ell_\infty\).

\paragraph{Random Partition-Tree Framework}
Our main technical contribution is the \defn{random partition-tree} data structure (\cref{def:random-partition-tree}). 
Our random partition-tree data structure is a tree drawn uniformly-at-random over the predicted event sequence; such 
a structure crucially mimicks a divide-and-conquer data structure. We show 
that such a simple data structure allows us to \emph{confine the effects of prediction errors ``locally.''} Specifically, we present a way to \emph{fix} the divide-and-conquer data structure on the fly, as prediction errors 
become apparent. We fix the structure by recomputing nodes (representing
recursive subproblems in the divide-and-conquer) in the partition-tree and their descendants. Splitting the recursive subproblems
\emph{randomly} guarantees that the expected amount of recomputation that is triggered by prediction errors is $\tO(|t' - t|)$, where
$t'$ is the predicted time and $t$ is the real time of the event. Hence, we perform as much work per element as the prediction error,
$|t'-t|$, associated with that element. This, fundamentally, allows our total $\ell_1$ prediction error to be as large as the 
number of real updates, $\tO(T)$, while maintaining work matching that of the corresponding offline, incremental, or decremental 
algorithm. %
Our result also makes an interesting connection to \emph{metric tree-embeddings} where our metric of interest is time.
We describe this connection in more detail in~\cref{app:metric-embedding}.

\paragraph{Other Technical Contributions of the Framework} In addition to our main contribution, we also solve a variety of other 
technical challenges listed below. 

\begin{itemize}
    \item \emph{(Early deletion problem):} an element's true deletion time is earlier than its predicted deletion time.  That is, we arrive at day \(t\) to see that element \(e\) is deleted, but we expected \(e\) to be active until some later day \(t' > t\). This means that some subproblems in our divide-and-conquer recursion tree were based on an incorrect list of events, and are therefore not correct.  These subproblems and their descendants need to be recomputed. Our random partition-tree, as described above, allows us to perform the computation
    in work linearly proportional to the prediction error, with small overhead. 

    \item \emph{(Late update problem):} an element's true update time is later than its predicted update time.  That is, we reach a day \(t\) when we expect to see an update to element \(e\), but element \(e\) is not updated on that day (i.e.\ there is no real event on element $e$). 

    We address this by reinserting \(e\) as a prediction for a later time.  We do this via a ``guess-and-double" procedure.  That is, the first time \(e\) is reinserted, we reinsert it for 1 day in the future, then 2 days in the future, then 4 days, etc.  
    Then, we can again fix the data structure to account for this rescheduling using the random partition-tree, like the early deletion problem.

    \item \emph{(Overscheduling problem):} the predicted event times in $P$ could schedule arbitrarily many events for one day \(t \in [T]\). This breaks our divide-and-conquer framework, which distributes work based on the crucial assumption that a small number of events happens per day.  This problem is exacerbated by the repeated rescheduling we do to address the early deletion problem.  

    We show how to reassign the scheduled element deletions, in a preprocessing step, so that not too many are scheduled for one day.   We do this by framing our scheduling problem as an instance of \emph{online metric matching} for the line metric (representing time), and using a known competitive algorithm of \cite{GL12}.  Finally, we account for the rescheduling by assigning a \defn{batch} of \(O(\log{T})\) events to each day, and accounting for these in our total work.

    \item \emph{(Boosting and backstopping for competitiveness):} our work bounds are proven in expectation and our framework may perform worse than state-of-the-art \emph{without backstopping}. We show how to boost our expected work bounds to work bounds with 
    high probability by running $O(\log(T))$ independent instances of our algorithm. 
    Furthermore, we show how to ``backstop" the algorithm by observing that we can compose two dynamic algorithms \(A\) and \(B\) to get one algorithm that, asymptotically, achieves the minimum amortized runtime of both \(A\) and \(B\).  
    We compose our framework with a traditional fully dynamic algorithm using our backstop procedure to get a composite algorithm that has runtime comparable to the state-of-the-art.

    \item \emph{(Handling unknown $T$):} $T$ is often unknown in real-life sequences. We give another ``guess-and-double'' procedure in~\cref{alg:high-probability} that allows us to handle \emph{unknown} $T$, with high probability, by guessing an upper bound on $T$ in 
    successive powers of $2$.

\end{itemize}

The only use of randomness in our framework is for preprocessing our inputs and for constructing our random partition-tree. Implications of using randomness, and potential avenues to derandomize the framework are discussed in~\cref{sec:use-of-randomness}.

\paragraph{Reducing Incremental to Divide-and-Conquer}
Using an incremental algorithm, we show that our framework can be used to design algorithms for a stronger setting, in which elements are inserted fully arbitrarily, and our algorithm only receives predictions about deletion times.  That is, when an element is inserted, it is tagged with a predicted deletion time.  Utilizing our offline framework, we begin by using the incremental algorithm to design an offline dynamic algorithm for this problem.  Our previous theorem shows that we can lift such an algorithm to the setting where we are given predictions for both insertion and deletion events.  We then make two main observations that allow us to drop the need for predictions of insertion events.  
\begin{itemize}
    \item In our original offline to fully-dynamic reduction, we prepare a tentative schedule of events during preprocessing.  This is no longer possible, because we do not have access to any predicted information before the algorithm starts.  However, we achieved this using an \emph{online} algorithm.  Thus, we can use the online algorithm to create the schedule, as we learn about new events. 
    \item In our original offline to fully-dynamic reduction, we run a tentative version of the divide-and-conquer algorithm during preprocessing.  This is no longer possible, again because we do not have access to the predicted events.  We observe that, because of the specific structure of the offline algorithms that arise from incremental algorithms, we can run the computations associated with the nodes of the divide-and-conquer tree using a \emph{just-in-time} approach.  
\end{itemize}

This extension of our framework 
is described in detail in \cref{sec:incremental}.

\paragraph{Reducing Decremental to Divide-and-Conquer}  We also apply our framework to lifting decremental algorithms to the fully-dynamic setting, given predictions of insertion events.  We do this via a simple reduction to the incremental case.  In particular, we reinterpret a decremental algorithm over elements as an incremental algorithm over ``anti-elements."  Then, we are able to directly apply the previous result.  This is described in detail in \cref{sec:decremental}.

\subsection{Applications} %

Our framework transforms a variety of incremental 
algorithms into the predicted-updates, fully dynamic model to obtain guarantees (often)
much better than their fully dynamic counterpart. Our results are 
shown in~\cref{table:problems}.

\begin{table}[htb!]
\renewcommand\arraystretch{2}
    \centering
    \scalebox{0.76}{
    \begin{tabular}{| c || c | c |c|c|}
         \hline
         Problem &  \multicolumn{2}{|c||}{Best Fully Dynamic Runtimes}& \multicolumn{2}{|c|}{New Predicted-Update Runtimes (\cref{thm:predicted-updates-apps,thm:app-runtimes,thm:app-runtimes-deletion})}  \\
         \hline
         \hline
         Planar Digraph APSP & $\tO\left(n^{2/3}\right)$& \cite{fakcharoenphol2006planar, klein2005multiple} & $\tO(\sqrt{n})$ & \cite{das2022near}\\
         \hline
         Triconnectivity & $O(n^{2/3})$ &\cite{GIS99} & $\tO\left(1\right)$ & \cite{holm2020worst,PSS17} \\
         \hline
         $k$-Edge Connectivity & $n^{o(1)}$ & \cite{JS21} & $\tO(1)$ & \cite{chalermsook2021vertex} \\
         \hline
         Dynamic DFS Tree & $\tO\left(\sqrt{mn}\right)$ &\cite{baswana2019dynamic} & $\tO\left(n\right)$ & \cite{baswana2019dynamic,ChenDWZZ18}\\
         \hline
         Minimum Spanning Forest & $\tO(1)$ & \cite{holm2001poly} & $\tO(1)$ & \cite{eppstein1994offline}\\
         \hline
         APSP & \shortstack{$\left(\frac{256}{k^2}\right)^{4/k}$-Approx \\ $\tO\left(n^k\right)$ update\\ $\tO(n^{k/8})$ query} & \cite{FGNS23} & \shortstack{$(2r-1)^k$-Approx \\ $\tO\left(m^{1/(k+1)}n^{k/r}\right)$} & \cite{chen2020fast} \\
         \hline
         AP Maxflow/Mincut & \shortstack{$O(\log(n) \log\log n)$-Approx \\ $\tO\left(n^{2/3 + o(1)}\right)$} & \cite{chen2020fast} & \shortstack{$O\left(\log^{8k}(n)\right)$-Approx.\\ $\tO\left(n^{2/(k + 1)}\right)$} & \cite{Goranci19, GHS19}  \\
         \hline 
         MCF & \shortstack{$(1+\eps)$-Approx \\ $\tO(1)$ update \\ $\tO(n)$ query} & \cite{chen2020fast} & \shortstack{$O(\log^{8k}(n))$-Approx.\\
         $\tO\left(n^{2/(k+1)}\right)$ update \\ $\tO(P^2)$ query} & \cite{Goranci19, GHS19}  \\
         \hline
         Strongly Connected Components & $\Omega(m^{1-\eps})$ \text{query or update} & \cite{abboud2014popular} & $\tO(m)$ & \cite{10.5555/2627817.2627899} \\
         \hline
         Uniform Sparsest Cut & \shortstack{$2^{O(\log^{5/6}(n))}$-Approx \\ $2^{O(\log^{5/6}(n))}$ update \\ $O(\log^{1/6}(n))$ query} & \cite{goranci2021expander} & \shortstack{$O\left(\log^{8k}(n)\right)$-Approx \\ $\tO\left(n^{2/(k+1)}\right)$ \\ $O(1)$ query} & \cite{Goranci19, GHS19} \\
         \hline
         Submodular Max & \shortstack{$1/4$-Approx \\ $\tO(k^2)$} & \cite{dutting2023fully} & \shortstack{$0.3178$-Approx \\ $\tO\left(\poly(k)\right)$} & \cite{FeldmanLNSZ22} \\
         \bottomrule
    \end{tabular}}
    \caption{Table of the best fully dynamic update runtimes for a variety of problems vs.\ our update times obtained via our framework in the 
    predicted-deletion model assuming $||\mathrm{error}||_1 = \tO(T)$. Our query times match the fully dynamic query times for every problem. The acronyms are as follows: APSP: all-pairs shortest
    paths, DFS: depth-first search, AP: all-pairs, MCF: multi-commodity concurrent flows. The variable $P$ is the number of queried pairs in the multi-commodity flow
    result. If only one runtime is shown, the same runtime holds for both update and query. Problem definitions are provided in~\cref{sec:problems}. Here,
    $m$ is the maximum number of edges in the graph at any time. The 
    submodular maximization problem's runtime is measured in terms of the number of function evaluations and $k$ is the rank of the matroid constraint. Some of our results match the state-of-the-art, although, they are potentially simpler
    to implement than the state-of-the-art so we present them for completeness.} %
    \label{table:problems}
\end{table}

\section{Offline Divide-and-Conquer to Fully-Dynamic Transformation}

\subsection{Preliminaries}

In this paper, we use the following notation throughout. We denote the \emph{total} number of updates in our update sequence by $T$.
We use $[t]$ to denote the set of positive integers up to $t$, ie.\ $\{1, \dots, t\}$. We use $\tO(\cdot)$ to hide $\poly(\log(|\cS| \cdot T))$ factors where $\cS$ is our ground set for the elements and $T$ is the total number of updates. Often, it holds that $|\cS| = \poly(T)$.
Our explicit bounds are presented in the formal theorem statements.

Below, we first define the \defn{partition-tree} (which is generated randomly in
our algorithm), the main data structure in our framework. 

\begin{definition}[Partition-Tree]\label{def:random-partition-tree}
A partition-tree of a sequence of days is a binary tree such that every node in the tree is associated with a window (interval of time), with the following properties:
\begin{enumerate}
    \item The root window of the tree is the full sequence of \(T\) days.  
    \item For each internal (parent) window \(W\), its children windows partition \(W\). We refer to the parent of window \(W\) as \(\mathrm{parent}(W)\). 
    \item Each leaf window is associated with a single day.
\end{enumerate}
We will refer to the size of a window associated with a node in the partition-tree as the \emph{size} of the node. Henceforth, we will refer to all nodes in 
the partition-tree as \defn{windows}.
\end{definition}

Our framework works for any \defn{$c$-divide-and-conquer} algorithm defined below. For our offline to fully dynamic transformations, we 
show a number of interesting algorithms fall under this definition. For our incremental and decremental to fully dynamic transformations, 
we show that \emph{any} incremental or decremental algorithm with \emph{worst-case} update time can be framed as a $c$-divide-and-conquer 
algorithm and can, hence, be used in our framework. 

\begin{definition}[Divide-and-conquer algorithm]
Given a partition-tree \(T\), a \(c\)-\emph{divide-and-conquer} algorithm \(\alg\), for some constant \(c > 0\),
computes a function \(f(\cdot)\) (to solve a problem $\mathcal{P}$) 
over subsets of the ground set \(\mathcal{S}\) (generally, $|\cS| = \poly(T)$), 
such that: %
\begin{enumerate}
    \item A polynomial amount of read-write memory \(\mathbf{M}\) is given to the root window; 
    \item Algorithm \(\alg\) associates a computation with each window \(W\) of $T$, where the computation is given only the \emph{unordered} set of events occurring in \(W\), and read-write access to \(\mathbf{M}_{\mathrm{parent}(W)}\) which is the state of \(\mathbf{M}\) after \(\mathrm{parent}(W)\) has terminated; Notably, \(\mathbf{M}\) is passed  down to \(W\) on a path from the root, and is \emph{not} mutated by any sibling windows;
    
    \item And the \emph{expected} work done at window \(W\) is \(O(\Gamma \cdot |W|^c)\), where $\Gamma$ represent
    $O(\poly\log(|\cS| \cdot T))$ factors or $\text{update}(\alg)$ where $\alg$ is a worst-case incremental or decremental algorithm.\footnote{$\text{update}(\alg)$ is the worst-case update time of an incremental or decremental algorithm.}
    For simplicity of expression in our proofs,\footnote{We omit the factor since $\Gamma$ shows 
    up repeatedly in all of our expressions, and this expression would fall under our $\tO(\cdot)$ bound
    for the offline transformations and is explicitly given for our incremental/decremental transformations.} 
    we sometimes omit $\Gamma$ in our calculations.\label{item:balance-condition}  %
    \item The solution to $\mathcal{P}$ is obtained from the windows in the tree and the memory, $\mathbf{M}$, within each window.
\end{enumerate}
\label{def:divide-and-conquer}
\end{definition}

The requirement in \Cref{item:balance-condition} ensures that the work done by the divide-and-conquer algorithm is evenly distributed across the tree.  This is the analogue of our condition for the incremental/decremental to fully-dynamic setting, where we require the incremental algorithm to have a worst-case (not amortized) bound.

\begin{algorithm}
\caption{Fully Dynamic Algorithms with Predictions from Offline Divide-and-Conquer Algorithms}\label{alg:predicted-insertion-deletion}
    \begin{algorithmic}[1]
        \Require{\emph{Offline (during preprocessing):} partition-tree $\mT$, 
        divide-and-conquer algorithm $\alg$ that computes $f(\cdot)$, and predicted sequence of dynamic updates $P$. 
        \emph{Online:} Online sequence of dynamic updates $U = [E_1, \dots, E_T]$ where each event $E = (e, type)$ is a \emph{real} event.}
        \Ensure{After each day $t \in [T]$, output $f(U_t)$.}

        \State $\matchP \leftarrow \textsc{PreprocessPredictions}(P)$. \Comment{Converts 
        $P$ into a feasible set of predictions}\label{main:preprocess}
        \State Insert all events in $\matchP$ as \emph{predicted} events on their corresponding days.
\\
    \Procedure{retrigger}{$t_1, t_2$}
        \State Find $W$, the smallest window in $\mT$ that contains both $t_1$ and $t_2$.\label{offline:find-W}
        \State $S \leftarrow \{W\}$ \Comment{Set of windows to process.}
        \While{$S \neq \emptyset$}
            \State Remove a window $W'$ from $S$.
            \For{each child $C$ of $W'$}
                \State $\textsc{ProcessEvents}(C)$.\label{offline:process-event} %
                \State $S \leftarrow S \cup \{C\}$.
            \EndFor
        \EndWhile
    \EndProcedure
\\
    \Procedure{ProcessEventEarlierThanPrediction}{$E, t, t_{\text{predict}}$}
        \State Remove \emph{predicted} event $E$ from day $t_{\predict}$.\label{offline:delete-predict}
        \State Add $E$ to day $t$ as a \emph{real} event. \label{offline:add-real-event}
        \State \textsc{Retrigger}$(t, t_{\text{predict}})$.\label{offline:earlier-retrigger}
    \EndProcedure
  \\  
    \Procedure{ProcessEventLaterThanPrediction}{$E, i, t$}
        \State Remove \emph{predicted} event $E$ from day $t$.\label{offline:remove-early-prediction}
        \State Add $E$ to day $(t+2^i)$ as a \emph{predicted} event.\label{offline:reschedule-later-prediction}
        \State Change the prediction for $E = (e, T)$ in $\matchP$ 
        to $(e, T, t+2^i, i)$.\label{offline:change-later}
        \If{$e$ has a corresponding \emph{predicted} deletion event that occurs earlier than $(t + 2^i)$}\label{offline:out-of-order}
            \State Remove $e$'s \emph{predicted} deletion event from the day $t'$ that it was scheduled for.\label{offline:reschedule-1}
            \State Add $e$'s \emph{predicted} deletion event to day $(t+2^i)$.\label{offline:reschedule-2}
            \State Let $(e, deletion, t', i_{del})$ be the entry for $e$'s deletion event in $P$.\label{offline:reschedule-3}
            \State Change $e$'s deletion event prediction in $\matchP$ to $(e, deletion, (t + 2^i), i_{del} + 1)$.\label{offline:reschedule-4}
        \EndIf
        \State \textsc{Retrigger}$(t, t+2^i)$.\label{offline:later-retrigger}
    \EndProcedure
\\
    \For{day $t \in [T]$}\label{offline:iterate-day}
        \For{each \emph{real} event $E = (e, T)$ on day $t$}\label{offline:real}
            \State Suppose event $E$ is an event of type $T$ on element $e$.\label{offline:type-element}
            \State Find the corresponding prediction $(e, T, t_{\predict}, i) \in \matchP$ for event $E$.\label{offline:find-prediction}
            \If{$t < t_{\predict}$}\label{offline:t-less-than-predict}
                \State \textsc{ProcessEventEarlierThanPrediction}$(E, t, t_{\predict})$.\label{offline:call-process-earlier}
            \EndIf
        \EndFor
        \For{each \emph{predicted} event $E = (e, T)$ on day $t$}\label{offline:predict}
            \If{event $E$ wasn't a \emph{real} event on day $t$}\label{offline:not-real}
                \State \textsc{ProcessEventLaterThanPrediction}$(E, i+1, t)$.
            \EndIf
        \EndFor
        \State Return \textsc{Output}($t$).\label{offline:output}
    \EndFor
    \end{algorithmic}
\end{algorithm}

\subsection{Algorithm Description}\label{sec:offline-alg-description}

We obtain two separate categories of inputs to the algorithm: offline inputs received during preprocessing and online inputs.
\cref{alg:predicted-insertion-deletion} uses the partition-tree $\mT$, the divide-and-conquer algorithm $\alg$ that computes $f(\cdot)$, 
and the predicted sequence of dynamic updates $P$. The predicted sequence is given as tuples $(e, type, day, i)$ where $e$ is the element,
$type$ is the type of update (an insertion or deletion), $day$ is the predicted day of the update, and $i$ is the number of 
times the event has been rescheduled. Initially, all $i = 0$ for all events in the prediction sequence. 
Then, the 
actual set of updates is given as an online sequence of dynamic updates $U$. The algorithm outputs an answer to the function $f(\cdot)$
after each update on each day $t \in [T]$ where $U_t$ consists of the set of updates that occurred on days $[1, \dots, t] \subseteq [T]$.

Given a raw set of update predictions $P$, we must first convert it
into a feasible sequence of events in our model with error comparable
to the original set of update predictions. In particular, the 
error of our converted sequence of predictions has error at most 
$O(\log(T))$ factor worse than the original set $P$. 
In~\cref{main:preprocess}, we do this preprocessing. We include
the pseudocode for $\textsc{PreprocessPredictions}(P)$ in~\cref{appendix}.

To do this, we first convert $P$ into an instance of metric online bipartite 
matching where $P$ represents the requests and the set of days $t \in 
[T]$ are our servers. We would like to produce a matching such that 
at most one event in $P$ is matched to each $t \in [T]$. We use the harmonic algorithm of Gupta and Lewi~\cite{GL12}, which 
iteratively matches each request to its nearest open server on the 
right or left, with harmonic probabilities.  We can run the iterations 
efficiently using a union-find data structure to produce a matching via the
online bipartite matching procedure. 
 Then, 
finally, we do a post-processing on the produced matching. In
the order of the days, we perform a linear scan to check if any deletions
of events occur before their respective insertions. For any such 
deletions, we move the event to the same day as the insertion. We 
can do this in $O(T)$ time by maintaining hash maps keyed by the 
events. After doing this post-processing, we obtain our feasible 
sequence of events $\matchP$ which consists of at most 
two events per day: at most one insertion and at most one deletion.
Then, we insert each predicted event in $\matchP$ into their corresponding
predicted days. This procedure can also be applied to any constant number of event types 
with precedence constraints (i.e.\ one event type must be performed before another).

\cref{alg:predicted-insertion-deletion} iterates through each day $t \in [T]$ (\cref{offline:iterate-day}) and checks for each \emph{real} (\cref{offline:real})
and \emph{predicted} event (\cref{offline:predict}) assigned to day $t$. A \emph{real}
event is an event that is assigned to day $t$ by the update sequence. A 
\emph{predicted} event is one that assigned by our algorithm (i.e.\ an event
that is predicted to occur on day $t$). For each real event of 
type $T$ and on element $e$ (\cref{offline:type-element}), we find the 
corresponding prediction for this event in $\matchP$ (\cref{offline:find-prediction}). 
If this event is not in $\matchP$, then we assign the default prediction of 
$(e, T, \infty, 0)$ where $\infty$ indicates that we predict the event to occur
at the very end of the update stream. If our current day $t$ is less than 
the predicted day $t_{\predict}$ (\cref{offline:t-less-than-predict}), then
we process the event as one that occurs \emph{earlier} than predicted. 
We use the procedure $\textsc{ProcessEventEarlierThanPrediction}(E, t, 
t_{\predict})$ (\cref{offline:call-process-earlier}), explained below. 

After iterating through the real events, we now iterate through 
the \emph{predicted} events on day $t$ (\cref{offline:predict}).
If the predicted event $E = (e, T)$ was not a real event that occurred on 
day $t$ (\cref{offline:not-real}), then, we update our prediction since the 
real event will occur on a later day. To update our prediction, we call
the procedure, \textsc{ProcessEventLaterThanPrediction},
for handling events that occur \emph{later} than the predicted
day. Finally, we call the function $\textsc{Output}$ that is obtained from 
$\alg$ for computing the output of our dynamic algorithm at $t$ 
(\cref{offline:output}). 

We now describe each of our individual procedures that is called within our main
algorithm:

\begin{itemize}
    \item \textbf{$\textsc{Retrigger}(t_1, t_2)$}: This procedure retriggers 
    the processing of all events in \emph{every descendant} of the smallest window 
    $W$ in $\mT$ that contains both $t_1$ and $t_2$. The processing of the events uses the procedure \textsc{ProcessEvents}
    (\cref{offline:process-event})
    that is obtained from $\alg$.
    \item \textbf{ProcessEventEarlierThanPrediction$(E, t, t_{\text{predict}})$}:
    This procedure processes a real event that occurs on a day $t$ prior to its
    predicted day $t_{\text{predict}}$. First, we delete the predicted event $E$
    from day $t_{predict}$ (\cref{offline:delete-predict}). Then, we 
    add $E$ to the day $t$ as a real event (\cref{offline:add-real-event}).
    Finally, we find the smallest window in $\mT$ that contains both $t$ and $t_{\text{predict}}$ (\cref{offline:find-W}); let this window be $W$.
    We then call retrigger on $W$ (\cref{offline:earlier-retrigger}).
    \item $\textbf{ProcessEventLaterThanPrediction}(E, i, t)$: 
    This procedure reschedules events which occur later than the prediction; specifically, the event is rescheduled
    to a future day.%
    We first remove the predicted event $E$ from day $t$ (\cref{offline:remove-early-prediction}). We then increase the predicted day for the corresponding event 
    from $t$ to $t+2^i$ where $i$ the number of times it has been rescheduled
    before (\cref{offline:reschedule-later-prediction}). Then, we find the smallest window $W$
    that contains both $t$ and $(t + 2^i)$ (\cref{offline:find-W})
    and change the corresponding prediction to reflect the new prediction (\cref{offline:change-later}). We now need to manage settings where there are
    multiple possible updates associated with an element and ordering constraints among
    the types of updates on the elements. In the case of edge insertions and deletions, 
    the insertion for an edge must happen before the deletion of the edge (such is an 
    example of an ordering constraint). Thus, we look for all corresponding predicted deletion
    events that occurs on an earlier day than $(t + 2^i)$ (\cref{offline:out-of-order}) and
    reschedule them to day $(t + 2^i)$ (\cref{offline:reschedule-1,offline:reschedule-2,offline:reschedule-3,offline:reschedule-4}). Note that the conditional statement
    given in~\cref{offline:out-of-order} can apply for any ordering constraints on 
    any type of updates, not only insertions and deletions. Finally, retrigger on $W$ 
    (\cref{offline:later-retrigger}). 
\end{itemize}

\subsection{Preprocessing and Scheduling of Predictions}
Our preprocessing step in~\cref{main:preprocess} of~\cref{alg:predicted-insertion-deletion} performs two main functions.  The first is to create an initial schedule of events.  The second is to compute an initial set of solutions for these scheduled events using a divide-and-conquer algorithm.  Both the schedule and the solutions will be updated throughout the run of the algorithm.  

We begin by rescheduling the input predictions to ensure feasibility.  That is, the input predictions could be malformed--either by having many events predicted for a single day, or by having deletion events predicted to be before insertion events. Having too many predicted events on a single day leads to issues in showing our expected work bounds using 
our partition-tree analysis (discussed later). Thus, we preprocess these predictions to form a feasible sequence of events, that will be a well-formed input to our divide-and-conquer algorithm.  We show that it is possible to do this in near-linear time, while only blowing up the \(\ell_1\) error of the predictions by a polylogarithmic factor.

The ability to handle ill-formed predictions is a key functionality of our algorithm.  This allows the associated learning problem to be the simple problem of learning a vector that is close to our target prediction vector in \(\ell_1\) distance, without having to account for potentially complicated feasibility constraints. 

We can formulate our predicted sequence of dynamic updates to a prediction vector $\predvec$ by assigning each event two unique positions in $\predvec$, one for insertions
and one for deletions. Then, in the corresponding position, we predicted day in $[T]$ that the event occurs. Then, our preprocessing produces a vector $\assignedvec$ where
each day contains at most two events (at most one insertion and one deletion). We can also create a vector $\realvec$ of the real events that occur online. 
For each real event that was not predicted in $P$, we assign the event positions at the end of the vectors $\assignedvec$ and $\realvec$ and give them all a predicted
day of $\max(\realvec) + 1$, which means that we predict the event to occur at the very end, after the last real event. Note that we generate these vectors \emph{solely} for the
purpose of computing the $\ell_1$ error. Our preprocessing procedure only needs to output a sequence $\matchP$ consisting of tuples where predictions in $P$ are assigned to days
such that at most two events occur on any day.

\begin{restatable}[Initial Scheduling Quality and Runtime]{lem}{initial}
\label{lem:preprocessing-schedule}
Let $\predvec$ be the vector of predictions, mapping each event to a day in \([T]\).  Let $\realvec$ be the (unknown to the algorithm) true vector of real events, mapping each event to the day in \([T]\) that it actually occurs.

Given $\predvec$, we can compute an assignment vector $\assignedvec$ such that $\assignedvec$ assigns at most one insertion event and at most one deletion event to each day, 
assigns all insertions events before deletion events, and has error
\[\mathbf{E}\left[ ||\assignedvec - \realvec||_1 \right] = O \left(||\predvec - \realvec||_1 \cdot \log (T)\right). \]

Furthermore, we can compute this in time \(O(T \log^* (T))\).  
\end{restatable}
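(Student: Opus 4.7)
The plan is to split the task into a scheduling phase (assigning events to distinct days) and a post-processing phase (fixing insertion-before-deletion violations), analyzing each separately. In the scheduling phase, I will treat insertions and deletions as two independent instances of online metric bipartite matching on the line metric $[T]$: each predicted event is a request that arrives at its predicted day $\predvec_e$, and each day in $[T]$ is a server that can be used at most once. I will feed the requests to the harmonic algorithm of Gupta--Lewi \cite{GL12}, which is $O(\log T)$-competitive in expectation on line metrics. The assignment $\assignedvec$ will be (an interleaving of) the two resulting matchings.

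The main step is to bound the cost of the optimal offline matching by $\|\predvec - \realvec\|_1$. For this I will exhibit a feasible matching whose cost is at most $\|\predvec - \realvec\|_1$: namely, match each predicted insertion to the day of its corresponding real insertion (and similarly for deletions); events with no real counterpart already contribute $T$ to $\|\predvec-\realvec\|_1$, so any assignment of them costs at most $T$. Because the real schedule has at most one insertion and one deletion per day (updates happen one per day), this candidate matching is feasible, so $\mathrm{OPT}_{\text{ins}} + \mathrm{OPT}_{\text{del}} \le \|\predvec - \realvec\|_1$. Composing with the $O(\log T)$ competitive ratio of Gupta--Lewi yields an assignment $\assignedvec'$ with expected cost $O(\|\predvec - \realvec\|_1 \cdot \log T)$ relative to $\realvec$.

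Next I will handle the ordering constraint by a single left-to-right linear scan: for each element $e$ whose predicted deletion day $\assignedvec'_{e,\text{del}}$ is strictly earlier than its predicted insertion day $\assignedvec'_{e,\text{ins}}$, move the deletion to the same day as the insertion. The key observation is that this relocation only \emph{decreases} the per-event error, up to at worst the error of the insertion: letting $a_i = \assignedvec'_{e,\text{ins}}$, $a_d = \assignedvec'_{e,\text{del}}$, $r_i = \realvec_{e,\text{ins}}$, $r_d = \realvec_{e,\text{del}}$, we have $a_d < a_i$ and $r_i \le r_d$, so if $a_i \le r_d$ the new deletion error $r_d - a_i \le r_d - a_d = |a_d - r_d|$, and if $a_i > r_d$ then $a_i - r_d \le a_i - r_i = |a_i - r_i|$. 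In both sub-cases the new error is bounded by the sum of the old insertion and deletion errors, so the total $\ell_1$ error grows by at most a constant factor. This yields the claimed $O(\|\predvec - \realvec\|_1 \cdot \log T)$ expected error bound.

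For runtime, the bottleneck is implementing Gupta--Lewi online. Each iteration needs the nearest unoccupied day to the left and right of a query point on the line; using a union-find structure where each occupied day is unioned with its successor, both neighbors can be found in amortized inverse-Ackermann time per request, and the iterated-log bound follows from Tarjan's classical analysis. The post-processing scan is linear using hash maps keyed by element. Summing gives the stated $O(T \log^* T)$ bound. The main technical obstacle I anticipate is the ordering-constraint argument: making sure the case analysis is tight enough that the multiplicative blow-up in the $\ell_1$ error from the relocation step is only a constant (so that the $\log T$ from competitiveness is the sole polylogarithmic factor in the final bound) rather than accidentally introducing a second logarithmic factor via a coarser triangle-inequality argument.
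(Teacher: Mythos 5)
Your proposal follows essentially the same route as the paper: schedule the predicted events via the Gupta--Lewi harmonic algorithm implemented with union-find in $O(T\log^*(T))$ time, bound the offline optimum by exhibiting the real schedule as a feasible matching so that the $O(\log T)$ competitive ratio plus a triangle inequality gives expected error $O(\|\predvec-\realvec\|_1\cdot\log T)$, and then repair insertion-before-deletion violations with a linear scan whose per-pair case analysis shows only a constant-factor blow-up (your two cases are in fact slightly tighter than the paper's, which settles for a factor of $2$).

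The one concrete discrepancy is your choice to run \emph{two independent} matching instances, one for insertions and one for deletions, each with a full set of $T$ servers. After your repair step, an element $e$'s deletion gets moved onto $e$'s insertion day $d$; but day $d$ may already carry a \emph{different} element's deletion assigned there by the deletion-instance matching, so a day can end up with one insertion and two deletions. This violates the lemma's literal guarantee of at most one deletion per day (though a constant number of events per day is all that is used downstream in the retrigger analysis). The paper sidesteps this by running a \emph{single} combined matching in which each day receives at most one event of any type, so that after moving deletions onto insertion days every day holds at most one insertion and one deletion. Your argument is otherwise sound, and either adopting the combined instance or restating the conclusion as ``$O(1)$ events per day'' closes the gap.
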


The formal proof of this lemma is given in~\cref{app:preprocess}.  At a high level, the problem of reallocating the events so that there is at most one on each day is an instance of bipartite matching over the line metric.  We wish to solve this instance in time near-linear in \(T\), otherwise this preprocessing could dominate the total runtime of the algorithm.  We observe that, interestingly, the online algorithm of \cite{GL12} can be used as a fast \(O(\log T)\)-approximation algorithm for this offline problem.  

For predictions where deletions are listed before their corresponding insertions, we perform one additional pass of $\predvec$ and reassign all deletion events which occur before their 
corresponding insertion event to the day the corresponding insertion event occurs. This produces an assignment where at most two events are assigned to the same day and requires 
only $O(T)$ additional work. 

After we obtain a feasible sequence of predictions, $\matchP$, we now construct our partition-tree over our $c$-divide-and-conquer algorithm $\alg$ (\cref{def:divide-and-conquer}). 
We now compute the work of constructing our partition-tree, $\mT$, using $\matchP$ and prove the depth of the constructed tree. We first prove the work to 
compute the partition-tree.

\begin{lemma}[Work to Compute Initial Divide-and-Conquer Solutions]
The full divide-and-conquer algorithm over the partition-tree can be computed in expected time 
\begin{align*}
    O( T^c \cdot \log^{3} (T) \cdot \log\log(T \cdot |\mathcal{S}|)) \;\;\; &\text{when } c > 1 \\
    O(T \cdot \log^3 (T) \cdot \log\log(T \cdot |\mathcal{S}|)) \;\;\; &\text{when } c \leq 1.
\end{align*}
\label{lem:preprocessing-partial-solutions-work}
\end{lemma}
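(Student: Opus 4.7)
The plan is to bound the expected total work as a sum of expected work over all windows of the random partition-tree $\mT$, and separately bound the $\mathbb{E}\left[\sum_W |W|^c\right]$ term using standard techniques from analyses of random binary trees (of the QuickSort / random BST flavor).

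First I would fix a realization of the random partition-tree and apply linearity of expectation to the per-window expected work guarantee from \cref{def:divide-and-conquer}. Conditional on the tree, the total expected work is $O\!\left(\Gamma \cdot \sum_{W \in \mT} |W|^c\right)$, where $\Gamma = O(\poly\log(T \cdot |\mathcal{S}|))$ absorbs problem-dependent polylogarithmic overhead and the additional $O(\log T)$ cost of tree bookkeeping operations (e.g.\ locating the smallest window containing a given pair of days, and reading/writing the memory $\mathbf{M}$ on the root-to-window path of length $O(\log T)$ w.h.p.). So the task reduces to bounding $\mathbb{E}\left[\sum_W |W|^c\right]$ over the randomness in $\mT$.

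Next, I would set up the recursion $F(n) \defined \mathbb{E}\left[\sum_{W} |W|^c\right]$ for a subtree of $\mT$ rooted at a window of size $n$. Since each internal window is split at a uniformly random point, we have
\begin{equation*}
F(n) \;=\; n^c \;+\; \frac{1}{n-1} \sum_{k=1}^{n-1} \bigl( F(k) + F(n-k) \bigr) \;=\; n^c \;+\; \frac{2}{n-1} \sum_{k=1}^{n-1} F(k).
\end{equation*}
I would then analyze this in three regimes: (i) for $c>1$, an induction shows $F(n) = O(n^c)$ since the recursion geometrically decreases down the levels; (ii) for $c=1$, the recursion solves to $F(n) = O(n \log n)$, mirroring the classical QuickSort expected comparison bound; (iii) for $c<1$, concavity of $x^c$ combined with the fact that each level's windows sum to at most $n$ yields $F(n) = O(n)$ (the per-level sum $\sum |W|^c$ is maximized by balanced splits and then telescopes as a geometric series in $2^{\ell(1-c)}/ 2^{\ell}$ summed over depth). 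This delivers $F(T) = O(T^c)$ when $c>1$ and $F(T) = O(T \log T)$ when $c \leq 1$.

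Finally, I would combine with the $\Gamma$ factor and the tree-operation overheads. The depth of the random partition-tree is $O(\log T)$ with high probability (standard for random binary search trees), so path operations cost $O(\log T)$, and each window's computation additionally incurs $O(\log\log(T \cdot |\mathcal{S}|))$ for hash-map / successor lookups on $\matchP$ indexed by elements of $\mathcal{S}$ and days in $[T]$. Multiplying $F(T)$ by the $O(\log^2 T \cdot \log\log(T\cdot|\mathcal{S}|))$ overhead yields the stated bounds, with one case giving $O(T^c \log^3 T \cdot \log\log(T\cdot|\mathcal{S}|))$ for $c>1$ and the other $O(T \log^3 T \cdot \log\log(T \cdot |\mathcal{S}|))$ for $c \leq 1$ (the extra $\log T$ in the $c=1$ case comes directly from $F(T) = O(T \log T)$).

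The main obstacle I anticipate is the bookkeeping in the $c<1$ case: showing that even though concavity inflates per-level sums relative to the unbalanced worst case for a fixed tree, the random splitting still gives $F(n) = O(n)$ in expectation rather than $O(n \log n)$. This can be handled by the standard substitution / generating function argument used for random BSTs adapted to $x^c$, but the inequality manipulations require care to verify that the constant in the induction hypothesis depends on $c$ but not on $n$. The rest of the proof is largely accounting of polylogarithmic overheads from standard data-structural primitives.
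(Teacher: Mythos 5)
Your proposal is correct in substance but takes a genuinely different route from the paper. The paper's proof of this lemma is a one-line reduction: the initial computation of all windows is exactly a call to $\textsc{Retrigger}(1,T)$, so the bound follows as a special case of \cref{lem:single-retrigger-work}, which itself bounds the work per level crudely by $\sum_i S_i^c \le S\cdot T^{c-1}$ (for $c>1$) or by $S$ (for $c\le 1$), multiplies by the $O(\log T)$ depth from \cref{lem:partition-tree-depth}, and handles the low-probability deep-tree event separately. You instead analyze $F(n)=\mathbb{E}\bigl[\sum_W |W|^c\bigr]$ directly via the QuickSort-style recursion $F(n)=n^c+\tfrac{2}{n-1}\sum_{k<n}F(k)$. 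Your route is self-contained, avoids conditioning on the depth event for the main term, and actually yields sharper intermediate bounds ($O(T^c)$ for $c>1$, $O(T\log T)$ for $c=1$, $O(T)$ for $c<1$) than the paper's per-level argument; the paper's route buys brevity and reuse, since \cref{lem:single-retrigger-work} is needed anyway for the online phase.

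One caveat on your regime (iii): the parenthetical justification that ``the per-level sum is maximized by balanced splits and then telescopes'' is not a valid argument on its own. Balanced splits maximize $\sum_i S_i^c$ \emph{per level}, but an unbalanced tree has more levels; for a fixed path-shaped tree one gets $\sum_W|W|^c=\Theta(n^{1+c})\gg n$, so the $O(n)$ bound is genuinely a statement about the expectation over random trees and must come from the recursion with a strengthened induction hypothesis (e.g.\ $F(n)\le An-Bn^c$ with $B\ge\tfrac{1+c}{1-c}$), which you correctly flag as the delicate step. Note also that for the stated lemma the weaker deterministic bound $\sum_W|W|^c\le T\cdot(\mathrm{depth}+1)$ for $c\le 1$, combined with $\mathbb{E}[\mathrm{depth}]=O(\log T)$, already suffices and sidesteps the issue entirely. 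Finally, the $\log\log(T\cdot|\mathcal{S}|)$ factor in the paper is attributed to persistent memory for $\mathbf{M}$ rather than to successor lookups, but this is immaterial bookkeeping.
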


\begin{proof}
    This is a special case of \Cref{lem:single-retrigger-work} where we can think of the original windows that are computed in $\mT$ from $\matchP$ are computed as a result of calling \(\textsc{Retrigger}(1, T)\). 
\end{proof}

We now bound the depth of the partition-tree over the predictions in $P$. Bounding the depth is necessary in order to minimize the work of performing recomputation when the 
real update sequence deviates from the predictions. As the below analysis draws inspiration from the analysis of randomized quicksort, we relegate its proof to~\cref{app:preprocess}.

\begin{restatable}[Partition-Tree has \(O(\log(T))\) Depth]{lemma}{partitiontreedepth}
\label{lem:partition-tree-depth}
    The depth of a random partition-tree drawn over \(T\) days is \(O(\log (T))\) in expectation. Furthermore, for any constant \(k \ge 36\),
    the depth is
    \[\le k \ln(T) \qquad \qquad \text{with probability} \qquad \ge 1 - T^{-\frac{k}{24}}.\]
\end{restatable}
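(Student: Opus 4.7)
The plan is to mirror the classical depth analysis of a random binary search tree (equivalently, the recursion tree of randomized quicksort), exploiting the fact that at each internal node a uniformly random split point falls in the ``middle half'' with probability at least $1/2$.

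First I would fix an arbitrary day $t^\star \in [T]$ and focus on the root-to-leaf path in the random partition-tree whose leaf is the singleton window $\{t^\star\}$. For each internal node $v$ on that path, let $w_v$ be the size of the window associated with $v$ and let $i \in [w_v]$ be the position of $t^\star$ within that window. The split at $v$ chooses a uniformly random cut point among the $w_v - 1$ positions; an elementary counting argument shows that, regardless of $i$, the cut falls inside $[w_v/4, 3w_v/4]$ with probability at least $1/2$, and in that case the child containing $t^\star$ has size at most $(3/4)\,w_v$. Call such a split \emph{balanced} with respect to $t^\star$, and let $X_j$ be the indicator that the $j$-th split along the path is balanced.

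Second, because splits at distinct nodes of the partition-tree are mutually independent by construction, conditioning on the path having reached level $j$ leaves the split at that level uniformly random, so the $X_j$'s stochastically dominate independent $\mathrm{Bernoulli}(1/2)$ variables. Writing $W_j$ for the window size at level $j$ (with $W_0 = T$), each balanced split multiplies $W$ by at most $3/4$, so once $\sum_{\ell \le j} X_\ell \ge \log_{4/3}(T) \approx 3.48\,\ln(T)$, we have $W_j \le 1$ and the path has already terminated. Consequently, the event $\{\mathrm{depth}(t^\star) > k \ln T\}$ is contained in the event that fewer than $\log_{4/3}(T)$ successes occur in $k \ln(T)$ i.i.d.\ $\mathrm{Bernoulli}(1/2)$ trials. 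For $k \ge 36$ the expected number of successes, $k \ln(T)/2 \ge 18\ln(T)$, exceeds the threshold by a constant factor, and a standard multiplicative Chernoff bound yields
\[
  \Pr\bigl[\mathrm{depth}(t^\star) > k \ln T\bigr] \;\le\; T^{-k/c}
\]
for an appropriate absolute constant $c$. A union bound over the $T$ choices of $t^\star$ (and tuning the Chernoff parameter) yields the stated $1 - T^{-k/24}$ bound on the overall depth.

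For the expectation bound $O(\log T)$ I would either integrate the above tail, or argue directly via the random-BST recurrence: if $D(w)$ is the expected root-to-leaf depth of any fixed day in a random partition-tree over a window of size $w$, then conditioning on the uniform split gives $D(w) \le 1 + \tfrac{1}{w-1}\sum_{i=1}^{w-1} D(\max(i, w-i))$, which unwinds to $D(w) = O(\log w)$ by the usual harmonic-sum computation. The main technical obstacle is Step~2 — verifying the $1/2$ probability of a balanced split \emph{uniformly} over the position $i$ of $t^\star$ inside its window, and then matching the slack constants in the Chernoff bound to the specific exponent $k/24$ in the statement (the analysis is loose enough to leave room for this). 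Everything else is bookkeeping.
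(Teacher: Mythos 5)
Your proposal is correct and follows essentially the same route as the paper: fix a day, observe that each split along its root-to-leaf path is ``balanced'' (both children of size at most $3/4$ of the parent) with probability at least $1/2$ independently of where the day sits, note that at most $\log_{4/3}(T) < 4\ln T$ such splits can occur before the window is a single day, apply a Chernoff bound to the number of balanced splits among the first $k\ln T$ levels, and finish with a union bound over the $T$ days plus a tail-integration (or direct recurrence) for the expectation. The only difference is that the paper carries out the Chernoff calculation explicitly (obtaining $T^{-k/12}$ per day and absorbing the union-bound factor into $T^{-k/24}$), whereas you leave the constants to be tuned, which the slack indeed permits.
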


Note that we do not need knowledge of $T$ when computing the aforementioned bounds. We show in~\cref{alg:high-probability} how to 
remove the assumption on the value of $T$.

Finally, we give the work of maintaining the schedule after predictions are rescheduled based on the online dynamic sequence 
of updates. Predictions are rescheduled using \textsc{ProcessEventEarlierThanPrediction} and \textsc{ProcessEventLaterThanPrediction}. Below (in~\cref{lem:rescheduling} with proof deferred to~\cref{app:preprocess}) 
we show the maximum total number of times all predicted events 
are rescheduled by~\cref{alg:predicted-insertion-deletion}
which combined with the work necessary to call \textsc{Retrigger}
which we analyze in~\cref{sec:one-retrigger,sec:total-retrigger} gives the total work of our algorithm.

\begin{restatable}[Work to Maintain Schedule]{lemma}{maintainschedule}\label{lem:rescheduling}
\cref{alg:predicted-insertion-deletion} performs at most 
\(O(T \log T)\) reschedules to the predictions of 
updates and maintain the predicted schedule of events over the course of the online dynamic updates.  
\end{restatable}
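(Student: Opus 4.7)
The plan is to bound reschedules per event and then multiply by the total number of predicted events in $\matchP$. First I would observe that the only procedure in \cref{alg:predicted-insertion-deletion} that moves a predicted event to a new day is \textsc{ProcessEventLaterThanPrediction}, which contributes at most two reschedules per invocation: the direct guess-and-double step that shifts the triggering event from day $t$ to day $t + 2^i$, and the optional ``pull-along'' that drags the matching deletion of the same element onto this new day whenever the insertion is pushed past it. The sibling procedure \textsc{ProcessEventEarlierThanPrediction} never moves a prediction; it only erases one and logs the real event. So it suffices to count invocations of \textsc{ProcessEventLaterThanPrediction} and the pull-alongs they induce.

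Next I would argue that any single event can be the subject of only $O(\log T)$ direct (guess-and-double) reschedules. Each such reschedule increments the event's counter from $i-1$ to $i$ and moves its predicted day strictly forward by $2^i$. Hence after $k$ direct reschedules, the predicted day has advanced by at least $\sum_{j=1}^{k} 2^j = 2^{k+1}-2$. Once this exceeds $T$, the event is permanently off the horizon $[1,T]$ and can never again enter the main loop at \cref{offline:predict}, so $k = O(\log T)$. For a deletion event, the pull-along reschedules are in bijection with direct reschedules of the matching insertion (each insertion guess-and-double triggers at most one pull-along of its paired deletion), so those too are capped at $O(\log T)$. Since pull-along also increments the deletion's counter, it only makes any subsequent guess-and-double of the deletion move it even further forward, so the $O(\log T)$ bound on direct reschedules of the deletion remains valid in the interleaved regime. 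The key sanity check here is that a pull-along never cascades further: the only ordering constraint in the model is insertion-before-deletion for the same element, so a pulled deletion has no dependent event to drag.

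Finally, \cref{lem:preprocessing-schedule} and the subsequent one-pass fix for out-of-order deletions guarantee $|\matchP| = O(T)$, since preprocessing assigns at most one insertion and one deletion to each of the $T$ days. Summing the per-event bound over all events gives a total of $O(T) \cdot O(\log T) = O(T \log T)$ reschedules. Each reschedule requires only $O(1)$ work given a hash map from elements to their current entries in $\matchP$ and a hash map from days to the (at most two) scheduled events on that day, so the total work spent on maintaining the predicted schedule is also $O(T \log T)$. The main obstacle in writing this out formally will be carefully tracking the interleaving of direct reschedules and pull-alongs on a single deletion event, to rule out any scenario where a long chain of pull-alongs inflates the counter argument; I expect a clean amortization using the monotonicity of the counter $i_{del}$ to resolve this.
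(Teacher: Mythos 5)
Your proof is correct and follows essentially the same route as the paper's: the guess-and-double structure caps the number of reschedules of any single event at $O(\log T)$, and the preprocessing guarantee of $O(1)$ predicted events per day (hence $O(T)$ events in total, counting the at-most-$T$ default predictions for unpredicted real events) yields the $O(T\log T)$ bound, with the paper merely doing the bookkeeping per day rather than per event. Your explicit accounting of the ``pull-along'' reschedules of paired deletion events, bounded via a bijection with the insertion's direct reschedules, is a point the paper's proof leaves implicit, but it does not change the overall argument.
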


\subsection{Work of a Single Call to \textsc{Retrigger}}\label{sec:one-retrigger}
The main workhorse of our algorithm is the \(\textsc{Retrigger}(t_1, t_2)\) operation, 
which recomputes the subtree of the partition-tree rooted at the smallest window $W$ 
that contains $t_1$ and $t_2$. 
This operation recomputes the part of the partition-tree that 
can be affected by rearranging events that are scheduled between \(t_1\) and \(t_2\). 
We showed above that because the partition-tree is drawn randomly, we can bound the depth of the tree
with high probability. Using what we proved above, in this section, we show the additional properties that:

\begin{itemize}
    \item The expected work to recompute the subtree associated with a \textsc{Retrigger}$(t_1, t_2)$ call scales linearly with \(|t_2 - t_1|\),
    \item The work of the divide-and-conquer algorithm is balanced over calls to \textsc{Retrigger}, and
    \item Hence, combined with the fact that our tree has bounded $O(\log T)$ depth, with high probability,
    our running time scales \emph{linearly} with the $\ell_1$ error.
\end{itemize}

We first show that the expected work to recompute the subtree associated with a \textsc{Retrigger}$(t_1, t_2)$ call scales with $|t_2 - t_1|$. To do this, we 
first prove the expected size of the smallest window that contains any pair of days $t_1$ and $t_2$. Such a lemma allows us to bound the amount of computation we 
need to perform to recompute the events between any two $t_1, t_2 \in [T]$. 

\begin{restatable}[Random Partition-Tree Preserves Lengths in Expectation]{lemma}{randomtreepreserveslengths}
\label{lem:random-tree-preserves-lengths}
    Consider a random partition-tree drawn according to \Cref{def:random-partition-tree} for time sequence \([T]\).  Then, for any \(t_1, t_2 \in [T]\) where $t_1 \neq t_2$, the expected size of the smallest window that strictly contains both \(t_1\) and \(t_2\) is \(O(|t_2 - t_1| \cdot \log(T))\).
\end{restatable}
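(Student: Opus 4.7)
The plan is to write $\mathbb{E}[|W^*|] = \sum_{p \in [T]} \Pr[p \in W^*]$, where $W^*$ denotes the smallest window strictly containing both $t_1$ and $t_2$, and to bound each probability. Assume without loss of generality that $t_1 < t_2$ and write $d = t_2 - t_1$. For any $p$ with $t_1 \leq p \leq t_2$, contiguity of windows forces $p \in W^*$ with probability one, contributing $d+1$ to the expectation. For a position $p$ outside $[t_1, t_2]$, say $p < t_1$ at distance $a = t_1 - p$ (the case $p > t_2$ is symmetric), I will argue that $\Pr[p \in W^*] = d/(a+d)$.

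The key observation driving this identity is that $p \in W^*$ holds precisely when the first split in the top-down partition-tree construction that ever separates any two points of $\{p, t_1, t_2\}$ is a split that separates $t_1$ from $t_2$ (rather than one that isolates $p$). Such a first separating split must occur at a position in the contiguous range $\{p, p+1, \ldots, t_2 - 1\}$, of which $a$ locations (those in $\{p, \ldots, t_1 - 1\}$) isolate $p$ and $d$ locations (those in $\{t_1, \ldots, t_2 - 1\}$) separate $t_1$ from $t_2$. I will then show that, conditional on separation occurring at any given level, the location of the separating split is uniform on these $a + d$ positions. The argument is direct: at every level the split is chosen uniformly within the current window; whenever it falls outside $\{p, \ldots, t_2 - 1\}$ it merely shrinks the current window without separating; and since uniform distribution restricted to a sub-interval is still uniform on that sub-interval, the first split that does lie inside $\{p, \ldots, t_2 - 1\}$ is uniform on $\{p, \ldots, t_2 - 1\}$. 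Hence $\Pr[p \in W^*] = d/(a+d)$.

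Summing over all positions yields
\[
\mathbb{E}[|W^*|] \;\leq\; (d+1) \;+\; \sum_{a=1}^{t_1 - 1} \frac{d}{a+d} \;+\; \sum_{b=1}^{T - t_2} \frac{d}{b+d} \;\leq\; (d+1) + 2d \cdot \ln\!\left(\frac{T+d}{d}\right) \;=\; O(d \log T),
\]
using the standard harmonic-sum estimate. The main technical subtlety is the uniformity claim for the first separating split; the cleanest justification is that the non-separating splits can be generated by rejection sampling (resample whenever the uniform pick avoids $\{p, \ldots, t_2 - 1\}$), so the first accepted sample is uniform on $\{p, \ldots, t_2 - 1\}$ regardless of the random path of intermediate windows. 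Everything else is the familiar harmonic calculation, structurally analogous to the classic $2/(|i-j|+1)$ pivot-probability identity from the analysis of randomized quicksort.
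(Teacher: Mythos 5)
Your proof is correct, but it takes a genuinely different route from the paper's. You compute \(\mathbf{E}[|W^*|]\) by linearity of expectation over individual days, deriving the exact membership probability \(\Pr[p \in W^*] = d/(a+d)\) for a day at distance \(a\) outside \([t_1,t_2]\) via the quicksort-style observation that the first split landing in the divider range \(\{p,\dots,t_2-1\}\) is uniform on that range (your rejection-sampling justification of this uniformity is the right one and is sound, since the current window on the chain containing \(t_1,t_2\) keeps all of \([p,t_2]\) intact until such a split occurs); a harmonic sum then gives \(O(|t_2-t_1|\log T)\). The paper instead couples the random partition-tree to i.i.d.\ uniform ranks on the \(T-1\) dividers (a window exists iff its bordering dividers outrank all interior ones), writes \(|W| = (t_2-t_1+1)+L+R\), and bounds \(\mathbf{E}[L]\) and \(\mathbf{E}[R]\) by conditioning on the minimum rank \(Z\) among the dividers spanning \([t_1,t_2]\), using \(\mathbf{E}[L \mid Z=z] \le \min\{T,1/z\}\) and integrating against the density of \(Z\). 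Your argument is more elementary and in some ways tighter: it yields exact per-day probabilities, avoids the continuous coupling, the geometric-mean bound, and the density computation, and the constants fall out of a clean harmonic estimate; the paper's rank coupling buys a static global description of the whole tree that is conceptually convenient (and closer in spirit to treap/FRT-style analyses), at the cost of heavier calculation. One small definitional remark: the paper's proof effectively bounds a window forced to extend beyond \([t_1,t_2]\) on both sides (it includes the dividers bordering \(t_1\) and \(t_2\) in \(Z\)), whereas you bound the smallest window containing both days, which is how the quantity is actually used in \Cref{lem:single-retrigger-work}; if one insists on the "strictly contains" reading, applying your identical argument to \(t_1-1\) and \(t_2+1\) recovers the same asymptotic bound, so nothing substantive changes.
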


\begin{proof}
We prove this lemma via a coupling argument where we show another way to generate the same distribution of binary tree partitions that will be easier to analyze.  
For each possible divider \(d \in [T - 1]\) of the original sequence, we associate 
\(d\) with a rank $r_d$ drawn uniformly at random from the uniform distribution over $[0, 1]$. The ranks are drawn independently for each $d$.

Using the ranks, we assign the tree structure from the top down.  At the top level, the divider with the lowest rank is used to split the sequence into the left child and the right child.  Iteratively, for each new window of the tree, we use the lowest ranked divider of its subsequence to split the sequence.  
This results in the same distribution over partition-trees as \Cref{def:random-partition-tree}, as at each level, each of the possible dividers is equally likely to be chosen next.  

Now, we see that a contiguous subsequence \([t_\text{start}, t_\text{end}]\) of \([T]\) is a window in the partition-tree if and only if the divider directly preceding \(t_\text{start}\) and the divider directly following \(t_\text{end}\) both have lower ranks than all of the dividers between \(t_\text{start}\) and \(t_\text{end}\),
where we can consider the endpoints of the original sequence over all days in $[T]$ 
to be dividers with rank $0$. This is illustrated in \Cref{fig:tree-splitting-by-ranking}.

\begin{figure}
    \centering
    \includegraphics[scale=0.39]{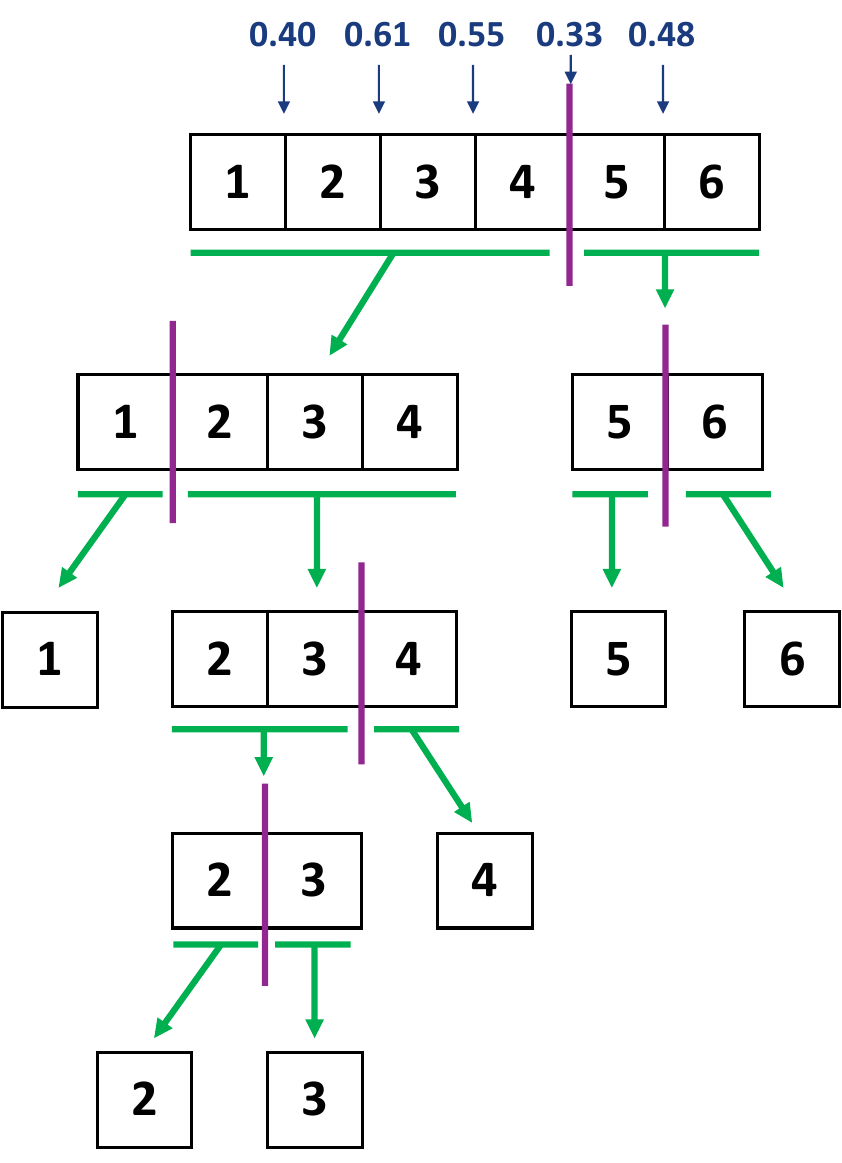}
    \caption{Dividers are drawn uniformly at random from $[0, 1]$ and shown in blue above the figure. This 
    example contains $5$ dividers since $T = 6$.
    \([2, 4]\) is a window in the tree because \(0.40, 0.33 < 0.61, 0.55\).  \([2, 5]\) is not a window in the tree because \(0.48 \not< 0.33\).}
    \label{fig:tree-splitting-by-ranking}
\end{figure}

Without loss of generality let \(t_1 < t_2\).  Let \(W\) be the smallest window that strictly contains \(t_1\) and \(t_2\).  We have that the size of 
$W$ (in days) is
\[|W| = (t_2 - t_1 + 1) + L + R,\]
where \(t_2 - t_1 + 1\) are the number days between the dividers bordering $t_1$ and $t_2$. \(L\) is a random variable representing the number of days before \(t_1\) until we reach one that is bordering a divider with strictly smaller rank than the \(t_2 - t_1 + 2\) dividers drawn between \(t_1\) and \(t_2\) and the dividers bordering $t_1$ 
and $t_2$. 
Symmetrically, \(R\) is the random variable for days after \(t_2\) until we reach a divider with smaller rank. 
This is illustrated in \Cref{fig:L-and-R-definitions}. $L$ and $R$
must each contain dividers with ranks \emph{larger} than the dividers between $t_1$ and $t_2$; otherwise, $W$ would not be the smallest
window strictly containing $t_1$, $t_2$, and all days in between.

\begin{figure}
    \centering
    \includegraphics[scale=0.5]{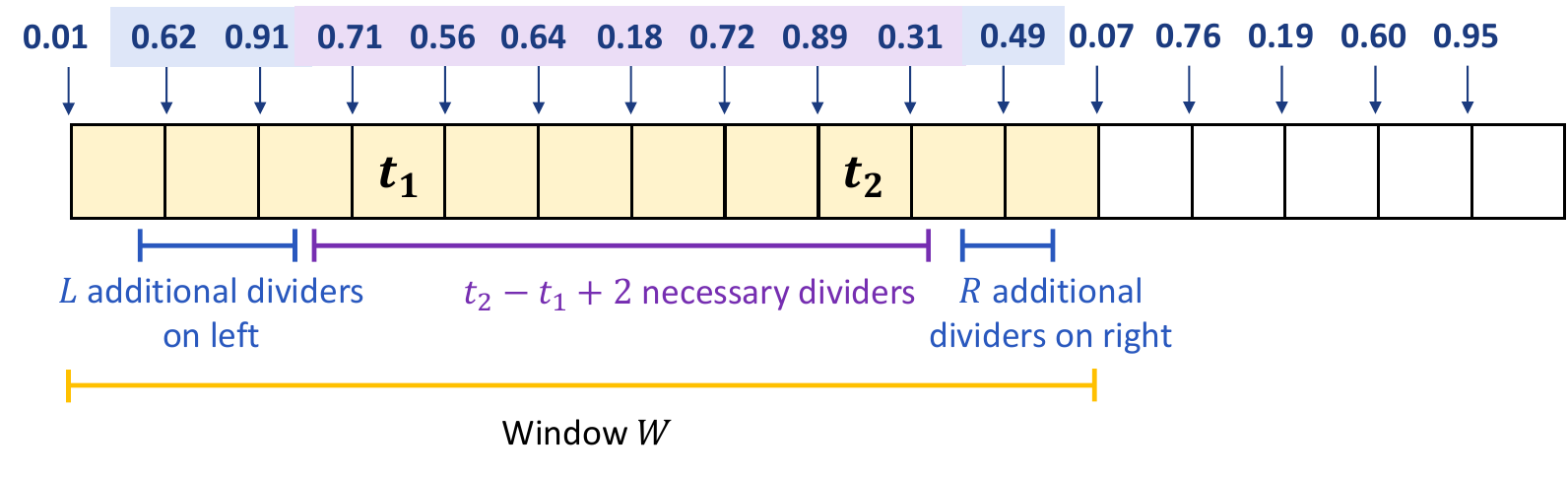}
    \caption{$W$ is the smallest window containing $[t_1, t_2]$. 
    The random variables \(L\) and \(R\) capture the number of additional days added to \([t_1, t_2]\) to satisfy the property illustrated in~\cref{fig:tree-splitting-by-ranking}. The dividers bordering $W$ must have smaller value than all dividers in $W$. Furthermore, the dividers in $L$ and $R$ have \emph{larger} values
    than the dividers bordering $[t_1, t_2]$ and the dividers contained within $[t_1, t_2]$; otherwise, $W$ would not be the smallest window
    containing $[t_1, t_2]$.}
    \label{fig:L-and-R-definitions}
\end{figure}

First, we analyze \(\mathbf{E}[L]\).  Let \[Z = \min \{r_d : d \in [t_1 - 1, t_2 + 1]\}\] be the minimum rank of the dividers between \(t_1\) to \(t_2\) and 
including the ones bordering $t_1$ and $t_2$.  We can compute $\expect[L]$ by conditioning on different values of \(Z\).  Let \(p_Z(\cdot)\) be the p.d.f. of \(Z\). 
\begin{align}
    \mathbf{E}[L] &= \int_{z = 0}^1 \mathbf{E}[L | Z = z] \cdot p_Z(z) dz \label{eq:expect-1}\\
    &\le \int_{z = 0}^1 \min \left\{ T, \frac{1}{z} \right\} p_Z(z) dz \label{eq:expect-2}\\
    &\le \int_{z = 0}^{1/T} T dz + \int_{z = 1/T}^1 \frac{1}{z} \cdot p_Z(z) dz\label{eq:expect-3} \\
    &\le 1 + \int_{z = 1/T}^1 \frac{1}{z} \cdot p_Z(z) dz\label{eq:expect-4}\\
\end{align}
\cref{eq:expect-2} follows because at most \(T\) items can be added to the window before we run out of items in the sequence to add; furthermore, 
the $\frac{1}{z}$ term is the mean of the geometric distribution with parameter $z$ since the probability that we draw a rank smaller 
than $z$ is at most $z$. We can simplify our equation in~\cref{eq:expect-3} to~\cref{eq:expect-4} since $\int_{z = 0}^{1/T} T dz = 1$.

Now, we can find the p.d.f.\ of \(Z\) via its c.d.f.  Let \(F_Z(\cdot)\) be the c.d.f.\ of \(Z\). 
\begin{align*}
     F_Z(z) &= \textbf{P}[Z < z] = 1 - (1 - z)^{t_2 - t_1 + 2}\\
     p_Z(z) &= \frac{d}{dz} \left(F_Z(z)\right) \\
     &= (t_2 - t_1 + 2) (1 - z)^{t_2 - t_1 + 1} . 
\end{align*}

This allows us to conclude the following since $\int_{z = 1/T}^1 \frac{1}{z} dz = \ln(T)$:
 \begin{align*}
     \mathbf{E}[L] &\le 1 + \int_{z = 1/T}^1 \frac{(t_2 - t_1 + 2)(1 - z)^{t_2 - t_1 + 1}}{z} dz \\
     &\le 1 + (t_2 - t_1 + 2) \int_{z = 1/T}^1 \frac{1}{z} dz &\text{since }(1 - z)^{t_2 - t_1 + 1} \le 1\\
     &= 1 + (t_2 - t_1 + 2) \ln(T).
\end{align*}

Symmetrically, by the same argument, \(\mathbf{E}[R] \le 1 + \ln(T)\) as well.  So by linearity of expectation, we have
\begin{align*}
    \mathbf{E}[W] &= (t_2 - t_1 + 1) + \mathbf{E}[L] + \mathbf{E}[R] \\
    &\le (t_2 - t_1 + 1) + 2\cdot (t_2 - t_1 + 2) \cdot \ln(T) \\
    &= O(|t_2 - t_1| \cdot \log(T)).
\end{align*}
\end{proof}

The proof of this lemma is morally similar to a randomized tree-embedding scheme, in the vein of \cite{FRT03}.  We discuss this connection in a later section.

Now, we are ready to bound the work of the \(\textsc{Retrigger}\) operation described in~\cref{sec:offline-alg-description}. 
The work is defined in terms of a $c$-divide-and-conquer algorithm as defined in~\cref{def:divide-and-conquer}.

\begin{lemma}[Expected Work of \(\textsc{Retrigger}\) Operation]\label{lem:single-retrigger-work}
Provided a $c$-divide-and-conquer algorithm (\cref{def:divide-and-conquer}), 
the expected work of \(\textsc{Retrigger}(t_1, t_2)\) is bounded by 
\begin{align*}
    \tO\left(|t_2 - t_1| \cdot T^{c - 1}\right) \;\;\; &\text{for } c > 1 \\
    \tO\left(|t_2 - t_1|\right) \;\;\; &\text{for } c \leq 1,
\end{align*}

where $W_{t_1, t_2}$ is the smallest window containing $t_1$ and $t_2$ and all days in between.
\end{lemma}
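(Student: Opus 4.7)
The plan is to bound the total expected work of \(\textsc{Retrigger}(t_1, t_2)\) by (i) bounding the work done level-by-level within the subtree rooted at the smallest window \(W\) containing \(t_1, t_2\), (ii) multiplying by the depth of that subtree, and (iii) taking expectations using \Cref{lem:random-tree-preserves-lengths} (expected size of \(W\)) and \Cref{lem:partition-tree-depth} (the \(O(\log T)\) depth bound).

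First I would fix a realization of the random partition-tree and consider the subtree rooted at \(W\). Since the children at any fixed level \(\ell\) of this subtree partition \(W\), their sizes satisfy \(\sum_{W' \in \text{level}_\ell} |W'| \le |W|\). By \Cref{def:divide-and-conquer}, the (expected) work at each window is \(O(\Gamma \cdot |W'|^c)\), so the total work at level \(\ell\) is bounded by one of the following, depending on \(c\):
\begin{align*}
  \sum_{W' \in \text{level}_\ell} |W'|^c
  &\le \sum_{W' \in \text{level}_\ell} |W'| \le |W|
  &&\text{when } c \le 1 \text{ (since each } |W'|\ge 1\text{)}, \\
  \sum_{W' \in \text{level}_\ell} |W'|^c
  &\le |W|^{c-1} \sum_{W' \in \text{level}_\ell} |W'| \le |W|^c
  &&\text{when } c > 1.
\end{align*}
Summing over the at most \(d\) levels of the subtree, where \(d\) is the tree depth, the total work is \(\tO(d \cdot |W|)\) when \(c \le 1\) and \(\tO(d \cdot |W|^c)\) when \(c > 1\).

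Next I would take expectations. \Cref{lem:partition-tree-depth} gives \(d = O(\log T)\) with probability \(\ge 1 - T^{-\Omega(1)}\); on the low-probability complementary event I would simply bound the work by the trivial \(T^c\) contribution, which is absorbed by the \(T^{-\Omega(1)}\) probability for a suitable constant \(k\). Hence w.h.p.\ it suffices to multiply by \(O(\log T)\). By \Cref{lem:random-tree-preserves-lengths}, \(\mathbf{E}[|W|] = O(|t_2-t_1| \cdot \log T)\). For \(c \le 1\), this directly yields \(\mathbf{E}[\text{work}] = \tO(|t_2 - t_1|)\). For \(c > 1\), the one extra step is to observe \(|W|^c \le T^{c-1} \cdot |W|\) (since \(|W| \le T\)), giving \(\mathbf{E}[|W|^c] \le T^{c-1} \cdot \mathbf{E}[|W|] = \tO(T^{c-1} \cdot |t_2 - t_1|)\), and multiplying by the \(O(\log T)\) depth yields the claimed bound.

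The main delicate point is handling the interaction between the random depth and the random window size, since both are functions of the same random partition-tree: one cannot blindly multiply \(\mathbf{E}[d]\) by \(\mathbf{E}[|W|^c]\). I plan to resolve this cleanly by conditioning on the high-probability event \(\{d \le k \ln T\}\) from \Cref{lem:partition-tree-depth} and bounding the contribution from the complementary event via the crude worst-case work of the whole algorithm, which is dominated for \(k\) large enough. A secondary technicality is the \(|W'| \ge 1\) assumption used in the \(c \le 1\) step, which holds because leaf windows correspond to single days; integer window sizes make the bound \(|W'|^c \le |W'|\) valid throughout the subtree.
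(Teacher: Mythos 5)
Your proposal is correct and follows essentially the same route as the paper's proof: bound the work per level of the subtree rooted at \(W\), multiply by the \(O(\log T)\) depth after conditioning on the high-probability depth event from \Cref{lem:partition-tree-depth}, absorb the complementary event via a crude worst-case bound times its tiny probability, and finish with \(\mathbf{E}[|W|] = O(|t_2-t_1|\log T)\) from \Cref{lem:random-tree-preserves-lengths} together with \(|W|^c \le T^{c-1}|W|\) for \(c>1\). The only differences are bookkeeping the paper carries explicitly (the \(O(\log T)\) batch of events per day and the \(\log\log\) persistent-memory factor), which your \(\tO(\cdot)\) notation absorbs.
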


\begin{proof}
    Consider a call to \(\textsc{Retrigger}(t_1, t_2)\). We first show the following observation that our algorithm 
    given in~\cref{alg:predicted-insertion-deletion} assigns at most $O(\log(T))$ (real or predicted) events to each day.
    First, by~\cref{lem:preprocessing-schedule}, after preprocessing, we assign at most two predicted events to each day. 
    Then, predicted events become reassigned by either \textsc{ProcessEventEarlierThanPrediction} or \textsc{ProcessEventLaterThanPrediction}.
    An event gets processed by \textsc{ProcessEventEarlierThanPrediction} when a real event happens on a day earlier than the prediction. 
    By the guarantee that at most one real event occur on any day, at most one predicted event can be moved to a day $t \in [T]$ by
    \textsc{ProcessEventEarlierThanPrediction}. Then, \textsc{ProcessEventLaterThanPrediction} reassigns a predicted event to a 
    later day when an event occurs later than predicted. By our procedure for reassigning such events, we keep track of a counter $i$ for 
    each predicted event, denoting the number of times it has been reassigned. Because $\assignedvec'$ contains at most $2$ events per
    day and \textsc{ProcessEventLaterThanPrediction} reassigns events in increments of powers of $2$, each day $t \in [T]$ gets assigned
    at most two events at a distance of $2^i$ away for all $i \in [\log(T)]$. Hence, the number of events assigned to any day is $O(\log(T))$.
    We call this set of events, on day $t$, a \defn{batch} of events and denote it as $\batch_t$.

    First, we fix the size of the smallest window containing $[t_1, t_2]$ and let \(S = |W_{t_1, t_2}|\).
    Now, we consider the work done in any one level $\ell$ of the subtree rooted at $W_{t_1, t_2}$. 
    Let the windows in level $\ell$ of the subtree have sizes \(S_1, \dots, S_k\).  If \(c > 1\), the work done at this level can be bounded by 
    \[\sum_{i= 1}^k S_i^c \le \sum_{i = 1}^k S_i \cdot T^{c - 1} = S T^{c - 1} ,\] since $S_i \leq T$.
    If \(c \leq 1\), the work of the subtree rooted at $W_{t_1, t_2}$ is dominated by the root, and so, the work done at level $\ell$ is bounded
    by
    \[\sum_{i = 1}^k S_i  \le S .\]
    
    By~\cref{lem:partition-tree-depth}, if we set \(k = 24(c + 2)\), we get that the depth of the random partition-tree is \(24 (c + 2) \cdot O(\log(T))\) over $T$ days 
    with probability \(\ge 1 - T^{-(c + 2)}\)  (since \(c\) is a fixed constant).\footnote{We can choose to set $k$ to be an arbitrarily
    large constant if we want to increase the probability of success.} 
    Conditioning on this event, and accounting for the \(\log(\log(T \cdot |\mathcal{S}|))\) blowup from using persistent 
    memory, we get that the total work is bounded by 
    \begin{align} 
        O\left(S \cdot T^{c - 1} \log (T) \log(\log(T \cdot |\mathcal{S}|))\right) \;\;\; & \text{when } c > 1 \label{conditioned-work-1}\\
        O\left(S \log(T) \log(\log(T \cdot |\mathcal{S}|))\right) \;\;\; & \text{when } c \leq 1,\label{conditioned-work-2}
    \end{align}

    where the extra $\log(T)$ comes from multiplying our work per level by $O(\log(T))$ levels.

    Each day corresponds to a batch of \(O(\log T)\) events, so there are at most \(|t_2 - t_1| \log T\) events between \(t_1\) and \(t_2\). By~\cref{lem:random-tree-preserves-lengths}, we can bound the expected size of $S = |W_{t_1, t_2}|$ to 
    be $\expect[S] = O(|t_2 - t_1| \log^2(T))$. Thus, if we condition on the event that the depth of the partition-tree is $O(\log(T))$, then
    the expected work, using~\cref{conditioned-work-1,conditioned-work-2}, denoted by $\text{work}(W_{t_1, t_2})$, is given by

    \begin{align}
        \expect[\text{work}(W_{t_1, t_2}) &\mid \text{ partition-tree has depth } O(\log T)] =\\
        &\begin{cases}
            O(|t_2 - t_1| \cdot T^{c - 1} \log^3(T) \log(\log(T \cdot |\mathcal{S}|))) \;\;\; & \text{when } c > 1 \\
            O(|t_2 - t_1| \cdot \log^3(T) \log(\log(T \cdot |\mathcal{S}|))) \;\;\; & \text{when } c \leq 1.
        \end{cases}\label{work-cases}
    \end{align}

    In the case where the tree is not \(O(\log T)\) depth, which happens with probability at most \(T^{-(c + 2)}\), we have the trivial bound that the tree can be depth at most \(T\).  Thus, the work over the subtree can be bounded as 

    \begin{align*}
        \expect[\text{work}(W_{t_1, t_2}) &\mid \text{ partition-tree \emph{does not} have depth } O(\log T)] =\\
        &\begin{cases}
            O(|t_2 - t_1| \cdot T^{c} \log^{3}(T) \log(\log(T \cdot |\mathcal{S}|))) \;\;\; & \text{when } c > 1 \\
            O(|t_2 - t_1| \cdot T \log^3(T) \log(\log(T \cdot |\mathcal{S}|))) \;\;\; & \text{when } c \leq 1.
        \end{cases}
    \end{align*}
    
    We can also trivially bound that \(|t_2 - t_1| \le T\), allowing us to trivially bound the expected work by 

    \begin{align}
        \expect[\text{work}(W_{t_1, t_2}) &\mid \text{ partition-tree \emph{does not} have depth } O(\log T)] =\\
        &\begin{cases}
            O(T^{c + 1} \log^{3}(T) \log(\log(T \cdot |\mathcal{S}|))) \;\;\; & \text{when } c > 1 \\
            O(T^2 \log^3(T) \log(\log(T \cdot |\mathcal{S}|))) \;\;\; & \text{when } c \leq 1.
        \end{cases}\label{work-cases-unbounded-depth}
    \end{align}

    Using the two cases given in~\cref{work-cases,work-cases-unbounded-depth} allows us to bound the expected work over this subtree by 
    \begin{align*}
        \expect[\text{work}(W_{t_1, t_2})] &= \expect[\text{work}(W_{t_1, t_2}) \mid \text{depth } O(\log T)] \cdot \prob[\text{depth } O(\log T)] \\
        &+ \expect[\text{work}(W_{t_1, t_2}) \mid \text{\emph{not} depth } O(\log T)] \cdot \prob[\text{\emph{not} depth } O(\log T)]\\
        & = (1 - T^{-(c + 2)}) \cdot O((|t_2 - t_1| \cdot \log(T)) \cdot T^{c - 1} \log^{3}(T) \log(\log(T \cdot |\mathcal{S}|)))\\ 
        &+ T^{-(c + 2)} \cdot O(T^{c + 1} \log^{3}(T) \log(\log(T \cdot |\mathcal{S}|))) \\
        &\qquad = O(|t_2 - t_1| \cdot T^{c - 1} \log^{3}(T) \log(\log(T \cdot |\mathcal{S}|))) & \text{when } c > 1 \\ \\
        & = (1 - T^{-(c + 2)}) \cdot O((|t_2 - t_1| \cdot \log(T)) \cdot \log^3(T) \log(\log(T \cdot |\mathcal{S}|)))\\
        &+ T^{-(c + 2)} \cdot O(T^2 \log^3(T) \log(\log(T \cdot |\mathcal{S}|))) \\
        &\qquad = O(|t_2 - t_1| \cdot \log^3(T) \log(\log(T \cdot |\mathcal{S}|))) &\text{when } c \leq 1.
    \end{align*}

    This proves our desired bounds.
\end{proof}

\subsection{Total Work Over All Calls to \textsc{Retrigger}}\label{sec:total-retrigger}

In the previous subsection, we bounded the work done by a single call to \(\textsc{Retrigger}\).  In this section, we bound the total work done by calls to \(\textsc{Retrigger}\) over the entire run of~\cref{alg:predicted-insertion-deletion}.  

\begin{lemma}[Expected Work Over All Calls to \(\textsc{Retrigger}\)]\label{lem:total-retrigger}
The expected work done by~\cref{alg:predicted-insertion-deletion} over all calls to \textsc{Retrigger} is 
\begin{align*} 
    \tO(|\mathbf{p} - \mathbf{r}|_1 \cdot T^{c - 1}) \;\;\; & \text{when } c \ge 1 \\ 
    \tO(|\mathbf{p} - \mathbf{r}|_1) \;\;\; & \text{when } c < 1. 
\end{align*}
\end{lemma}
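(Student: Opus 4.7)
The strategy is to bound $\sum |t_2 - t_1|$, summed over all calls to $\textsc{Retrigger}(t_1, t_2)$ made by \cref{alg:predicted-insertion-deletion}, and then invoke \cref{lem:single-retrigger-work} together with linearity of expectation. By \cref{lem:single-retrigger-work}, the conditional expected work of each call is $\tO(|t_2 - t_1| \cdot T^{c-1})$ for $c > 1$ and $\tO(|t_2 - t_1|)$ for $c \le 1$, so the whole lemma reduces to upper bounding the sum of gap-lengths.

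For each event $e$, let $D_e$ denote the absolute difference between its day in the post-preprocessing assignment vector $\assignedvec$ and its true day in $\realvec$. I claim that the total gap length attributable to $e$ is $O(D_e)$. There are two sources: (i) a single call to \textsc{Retrigger} from \textsc{ProcessEventEarlierThanPrediction} when the real event finally fires before its current prediction, and (ii) a sequence of calls from \textsc{ProcessEventLaterThanPrediction} during guess-and-double rescheduling. For (ii), the $j$-th reschedule uses gap $2^j$, and rescheduling stops once the cumulative offset reaches $D_e$; hence at most $k = O(\log D_e)$ rounds occur, whose gaps form a geometric series bounded by $O(2^{k}) = O(D_e)$. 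Once the prediction overshoots $t_\text{real}$, the eventual PEETP call contributes a residual gap that is at most the last doubling step, i.e.\ also $O(D_e)$. Summing over all events yields
\[\sum |t_2 - t_1| \;=\; O\bigl(\|\assignedvec - \realvec\|_1\bigr).\]
Taking expectations and applying \cref{lem:preprocessing-schedule} gives $\E\!\left[\sum |t_2 - t_1|\right] = O(\|\predvec - \realvec\|_1 \cdot \log T)$. Combining with \cref{lem:single-retrigger-work} and using linearity of expectation yields the advertised $\tO(\|\predvec - \realvec\|_1 \cdot T^{c-1})$ or $\tO(\|\predvec - \realvec\|_1)$ bound.

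The main subtlety I anticipate is the block in \textsc{ProcessEventLaterThanPrediction} (lines~\ref{offline:out-of-order}--\ref{offline:reschedule-4}) where moving an insertion drags along a correlated deletion, since this could naively double count gaps across coupled events. I would handle this by observing that such a forced move of the paired event happens inside the same \textsc{Retrigger}$(t, t+2^i)$ call and does not add a new \textsc{Retrigger}; additionally, the forced displacement is at most $2^i$, so it is already subsumed by the geometric sum charged to the insertion's $D_e$. A second subtlety is that the bookkeeping of \cref{lem:rescheduling} guarantees at most $O(T \log T)$ reschedules in total, which confirms that the per-event $O(\log D_e)$ rounds aggregate properly and that no hidden cost blows up our per-event charging scheme; the extra $\log T$ factors fold into the $\tO(\cdot)$ notation.
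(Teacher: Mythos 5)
Your proposal is correct and follows essentially the same route as the paper's proof: charge every \textsc{Retrigger} gap for an event $e$ to the doubling search, show the geometric sum plus the final residual call is $O(|\mathbf{a}_0(e)-\mathbf{r}(e)|)$, sum over events, and combine \cref{lem:single-retrigger-work} with the scheduling-quality bound of \cref{lem:preprocessing-schedule} via linearity of expectation. Your explicit note that the coupled deletion move in \textsc{ProcessEventLaterThanPrediction} stays inside the same \textsc{Retrigger} window is a small additional observation the paper leaves implicit, but it does not change the argument.
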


\begin{proof}
    First, consider a single event \(e\).  Let \(\mathbf{a}_0 (e)\) be the day that \(e\) is assigned to after the preprocessing step (\Cref{lem:preprocessing-schedule}).  Let \(\mathbf{r}(e)\) be the true day on which \(e\) occurs.  

    If \(\mathbf{r}(e) \le \mathbf{a}_0 (e)\), then \(\textsc{Retrigger}(\mathbf{r}(e), \mathbf{a}_0 (e))\) is 
    called exactly once for \(e\).  Otherwise, there is a sequence of reassignments, following a doubling search procedure, over \(\mathbf{a}_1(e), 
    \dots, \mathbf{a}_k(e)\), where \(\mathbf{a}_{i+1}(e) = 2^i + \mathbf{a}_i (e)\), \(\mathbf{a}_i(e) \le \mathbf{r}(e)\) for \(i < k\), and the 
    final \(\mathbf{a}_k(e) \ge \mathbf{r}(e)\).
    \(\textsc{Retrigger}\) is called for each of these reassignments: \(\textsc{Retrigger}(\mathbf{a}_i(e), \mathbf{a}_{i + 1}(e))\) for \(i = 0, \dots, k - 1\), and finally called one last time as \(\textsc{Retrigger}(\mathbf{r}(e), \mathbf{a}_k(e))\).  
    Thus, by linearity of expectation, the total expected work done is 
    \begin{align}
        \left[\sum_{i = 0}^{k - 1} \mathrm{work}(\textsc{Retrigger}(\mathbf{a}_i(e), \mathbf{a}_{i + 1}(e)))\right] + \mathrm{work}(\textsc{Retrigger}(\mathbf{r}(e), \mathbf{a}_k (e))).
        \label{eq:retrigger-raw-work-1}
    \end{align}

    Assume for simplicity we are in the \(c > 1\) case.  (The \(c \leq 1\) analysis simply doesn't contain the $T^{c - 1}$ term.) We use~\cref{lem:single-retrigger-work} to bound \Cref{eq:retrigger-raw-work-1} by
    \begin{align*}
        &\left[\sum_{i = 0}^{k - 1} O\left(|\mathbf{a}_{i + 1}(e) - \mathbf{a}_i (e)| \cdot T^{c - 1} \log^{3}(T) \log\log(T \cdot |\cS|)\right)\right] + O\left(|\mathbf{a}_k(e) - \mathbf{r}(e)| \cdot T^{c - 1} \log^{3}(T) \log\log(T \cdot |\cS|)) \right)\\
        &\leq \left( \sum_{i = 0}^{k - 1} |\mathbf{a}_{i + 1}(e) - \mathbf{a}_i (e)| + |\mathbf{a}_k(e) - \mathbf{r}(e)| \right) \cdot O \left( T^{c - 1} \log^{3} (T) \log\log(T \cdot |\mathcal{S}|)\right) \\
        &= O\left( |\mathbf{a}_0(e) - \mathbf{r}(e)| \cdot T^{c - 1} \log^{3} T \log\log(T \cdot |\mathcal{S}|) \right). 
    \end{align*}
    
    The final inequality follows by first 
    observing that 
    \begin{equation}
        |\mathbf{r}(e) - \mathbf{a}_{k}(e)| \le |\mathbf{r}(e) - \mathbf{a}_{0}(e)|, 
        \label{eq:error-nonincreasing-with-level}
    \end{equation} 
    by the nature of the doubling search.  This is because \(\mathbf{a}_k(e)\) only exists if \(\mathbf{a}_{k - 1}(e) < \mathbf{r}(e)\), and \(\mathbf{a}_{k - 1}(e) = \frac{1}{2}(\mathbf{a}_0(e) + \mathbf{a}_{k}(e))\).
    
    This means that 
    \begin{align*}
        \sum_{i = 0}^{k - 1} |\mathbf{a}_{i + 1}(e) - \mathbf{a}_i(e)| &= |\mathbf{a}_k(e) - \mathbf{a}_0(e)| \\
        &\le |\mathbf{a}_k(e) - \mathbf{r}(e)| + |\mathbf{r}(e) - \mathbf{a}_0(e)|  \\
        &\le 2|\mathbf{a}_0(e) - \mathbf{r}(e)|  &\text{by (\ref{eq:error-nonincreasing-with-level})}
    \end{align*}
    So in total, 
    \[\sum_{i = 0}^{k - 1} |\mathbf{a}_{i + 1}(e) - \mathbf{a}_i (e)| + |\mathbf{a}_k(e) - \mathbf{r}(e)| \le 3 |\mathbf{a}_0(e) - \mathbf{r}(e)| = O(|\mathbf{a}_0(e) - \mathbf{r}(e)|).\]

    This bounds the expected cost of calls to \(\textsc{Retrigger}\) related to the specific event \(e\).
    To the bound the work over all events \(e\), recall we denote the vector of input predictions as \(\mathbf{p}\). Then we can bound the total work by 
    \begin{align*}
        &\sum_{e} O\left( |\mathbf{a}_0(e) - \mathbf{r}(e)| \cdot T^{c - 1} \log^3(T) \log\log(T \cdot |\mathcal{S}|) \right) \\
        &= |\mathbf{a}_0 - \mathbf{r}|_1 \cdot O\left(T^{c - 1} \log^{3}(T) \log\log(T \cdot |\mathcal{S}|) \right) \\
        &= O \left(|\mathbf{p} - \mathbf{r}|_1 \cdot T^{c - 1} \log^{4}(T) \log\log(T \cdot |\mathcal{S}|)\right) &\text{by \cref{lem:preprocessing-schedule} scheduling quality, when } c > 1.
    \end{align*}

    Using the same analysis and~\cref{lem:single-retrigger-work}, 
    we can bound the total work by 
    \[O \left(|\mathbf{p} - \mathbf{r}|_1 \cdot \log^4(T) \log\log(T \cdot |\mathcal{S}|)\right)), \qquad \qquad \text{when } c < 1.\]
\end{proof}

\subsection{Boosting to High Probability}

\subsubsection{Best-of-All-Worlds Backstop}
\label{sec:backstop}

In this section we describe how to compose $N$ dynamic algorithms $\{A_1, \dots, A_N\}$, that achieve amortized guarantees leading to 
a framework that achieves both \emph{competitiveness} and our desired work bound \emph{with high probability} (described in~\cref{sec:work-bounds-high-probability}).
That is, by day \(t\), algorithm \(A_i\) has done total work \(R_{A_i}(t)\).  
We show how to use $\{A_1, \dots, A_N\}$ to design one algorithm that by day \(t\) has done total work 
\[O\left(N \cdot \min \{A_1, \dots, A_N\} \right).\]

The merit of implementing such a backstop depends on the use-case of the algorithm.  In some cases, we would hope that a predicted-updates algorithm can significantly outperform the fully-dynamic algorithm in some but not all regimes, in which case the backstop may be useful.  In other cases, the motivation could be simplicity of implementation.  The framework presented in this work does not rely on algorithmic ``heavy machinery," and in some cases, could be more practical to implement without a backstop.

Below, we define \emph{one computation step} as one word operation in the word-RAM.  The reduction is straightforward, and relies heavily on the fact that the goal is an amortized bound, and not a worst-case update time: we run the algorithms ``in parallel," and we return the output of the algorithm that terminates first.  We show our desired work bounds in~\cref{thm:backstop} 
and give the formal proof in~\cref{app:backstop}.

\begin{algorithm}
\caption{Backstop Meta-Algorithm for $N$ Algorithms}\label{alg:backstop}
\begin{algorithmic}[1]
\State Initialize algorithms $\{A_1, \dots, A_N\}$.
\State Initialize event buffers for algorithms $\{A_1, \dots, A_N\}$.
\State %

\For{day \(t\), event \(e_t\) occurs} \Comment{occurs online}
    \State Add \(e_t\) to event buffers for $\{A_1, \dots, A_N\}$. \label{algline:add-event-to-buffer}
    \State Iteratively perform one computation step of each algorithm in $\{A_1, \dots, A_N\}$ 
    until one of the algorithms has completed the computation for all events in its buffer. \label{algline:interleave-computation}
    \State Output the completed computation.
\EndFor
\end{algorithmic}
\end{algorithm}

\begin{restatable}[Best-of-All-Worlds Backstop]{thm}{backstop}\label{thm:backstop}
    Fix $N$ algorithms $\{A_1, \dots, A_N\}$, that will see the same, a priori unknown, input sequence over \(T\) days.
    Define \(R_{A_i}(t)\) to be the total work that \(A_i\) does over the first \(t\) days. %

    We can design a meta-algorithm \(M\), such that 
    \[\forall t, \quad R_M(t) = O \left(N \cdot \min \{R_{A_1}(t), \dots, R_{A_N}(t)\} \right),\]
    where \(R_M(t)\) is the total work that \(M\) does over the first \(t\) days.   
\end{restatable}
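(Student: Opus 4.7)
The plan is to analyze the meta-algorithm by tracking, for each $i$, the number $S_i(t)$ of computation steps that $M$ has performed on $A_i$ by the end of day $t$. The key invariant I would establish, maintained by the round-robin interleaving of line~\ref{algline:interleave-computation}, is that $|S_i(t) - S_j(t)| \le 1$ for every pair $i, j$ and every $t$: the inner loop cycles through the $N$ algorithms one step at a time, so the partial progress across algorithms at any moment can differ by at most a single step.

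Given this synchronization, I would relate $S_i(t)$ to $R_{A_i}(t)$. By the termination condition of line~\ref{algline:interleave-computation}, the day-$t$ inner loop halts the first instant that some algorithm $A_{j(t)}$ has finished processing every event in its buffer, which by definition of $R_{A_{j(t)}}(t)$ means $S_{j(t)}(t) \ge R_{A_{j(t)}}(t)$. Combined with the synchronization invariant, $j(t)$ is the minimizer of $R_{A_i}(t) - S_i(t-1)$ over $i$, and since the $S_i(t-1)$'s all agree up to $\pm 1$, $j(t)$ agrees with $\arg\min_i R_{A_i}(t)$ up to a single-step discrepancy. Hence $\max_i S_i(t) \le \min_i R_{A_i}(t) + 1$.

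Summing over algorithms gives a total computation-step cost of $\sum_i S_i(t) \le N \cdot \min_i R_{A_i}(t) + N$. Adding the $O(Nt)$ bookkeeping cost from enqueuing each day's event into all $N$ buffers (line~\ref{algline:add-event-to-buffer}), and using the trivial bound $R_{A_i}(t) \ge t$ (any algorithm must at least read every event once), both additive terms are absorbed to yield $R_M(t) = O(N \cdot \min_i R_{A_i}(t))$, as required.

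The main obstacle, though minor, will be making the $\pm 1$ slack argument airtight. I need to confirm by induction on $t$ that the slack neither accumulates across days---since all $N$ buffers are appended simultaneously, the $R_{A_i}$-targets advance in lockstep relative to the $S_i$'s---nor is distorted by the fact that the algorithm that finished first on day $t-1$ started day $t$ with slightly more progress than its peers, since the counters $S_i$ and $R_{A_i}$ are persistent and monotone in $t$ and their difference is exactly what drives the next day's interleaving.
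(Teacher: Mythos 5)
Your proposal is correct and follows essentially the same argument as the paper: the round-robin interleaving keeps the per-algorithm step counters synchronized (the paper asserts exact equality, you track the $\pm O(1)$ within-round slack more carefully), so the day-$t$ loop halts once the common counter reaches $\min_i R_{A_i}(t)$, giving total step cost $O(N\cdot\min_i R_{A_i}(t))$, with the $O(Nt)$ buffer bookkeeping absorbed via $R_{A_i}(t)=\Omega(t)$ exactly as in the paper. No gaps beyond the constant-slack bookkeeping you already flag, which does not affect the asymptotic bound.
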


\subsubsection{Obtaining Work Bounds with High Probability}\label{sec:work-bounds-high-probability}

In this section, we show how to modify our partition-tree-based algorithm using our backstop meta-algorithm
to obtain our running time bounds with high probability, using a simple ``boosting" argument.  We first make the observation that if we run $O\left(\log \left(T\right)\right)$ independent instantiations of our algorithm, at least one instantiation will have runtime close to the expectation with high probability. 

We can also remove the assumption that \(T\) is known, by using an additional guess-and-double argument; such an argument comes in handy for our incremental or decremental
transformation to fully dynamic in later sections. 

\begin{lemma}[Boosting Argument]\label{lem:boosting}
    Consider \(k \log (T)\) independent instantiations of the predicted-updates algorithm (\cref{alg:predicted-insertion-deletion}), for some constant integer \(k > 0\).  With probability at least \(1 - \frac{1}{T^{k}}\), at least one of these instantiations has runtime 
    \begin{align*}
        \tO\left(T^{c-1} \cdot \left(T + ||\mathbf{p} - \mathbf{d}||_1\right) \right) \;\;\; &\text{ when } c > 1, \\
        \tO\left(T + ||\mathbf{p} - \mathbf{d}||_1\right) \;\;\; &\text{ when } c \leq 1.
    \end{align*}
\end{lemma}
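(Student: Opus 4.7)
The plan is a standard Markov-plus-boosting argument, combined with the backstop composition of \Cref{thm:backstop}. First I would collect the expected-work bounds already established: the initial divide-and-conquer computation has expected cost $\tO(T^c)$ (for $c>1$) or $\tO(T)$ (for $c\leq 1$) by \Cref{lem:preprocessing-partial-solutions-work}, the preprocessing of the predictions costs $O(T\log^* T)$ by \Cref{lem:preprocessing-schedule}, and the total work over all \textsc{Retrigger} calls has expectation $\tO(|\mathbf{p}-\mathbf{r}|_1\cdot T^{c-1})$ or $\tO(|\mathbf{p}-\mathbf{r}|_1)$ by \Cref{lem:total-retrigger}. Summing these, a single run of \cref{alg:predicted-insertion-deletion} has expected total runtime at most
\[\mu\;=\;\begin{cases}\tO\!\left(T^{c-1}\cdot(T+\|\mathbf{p}-\mathbf{r}\|_1)\right)&c>1,\\[2pt] \tO\!\left(T+\|\mathbf{p}-\mathbf{r}\|_1\right)&c\leq 1.\end{cases}\]

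Next I would apply Markov's inequality to a single instantiation: $\Pr[\text{runtime}>2\mu]\leq 1/2$. Because the only randomness used by the algorithm is in the construction of the random partition-tree (\Cref{def:random-partition-tree}) and in the Gupta--Lewi step of the preprocessing, independently instantiating the algorithm $k\log_2 T$ times (on the same online input) gives independent runtime random variables. Hence
\[\Pr\!\left[\text{every one of the } k\log_2 T \text{ runs exceeds } 2\mu\right]\;\leq\;(1/2)^{k\log_2 T}\;=\;T^{-k}.\]
Therefore with probability at least $1-T^{-k}$ at least one of the $k\log T$ copies finishes in time $2\mu$, which is exactly the claimed bound.

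One subtle point is that the algorithm does not know in advance which instantiation will be the fast one. I would handle this by running the $N=k\log T$ copies in the interleaved fashion of \Cref{alg:backstop}, feeding each real event to every copy's buffer and advancing them one word-RAM step at a time; by \Cref{thm:backstop} the resulting meta-algorithm's total work is $O(N)$ times the minimum over the copies, and the extra factor of $O(\log T)$ is absorbed into the $\tO(\cdot)$ notation.

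The main (minor) obstacle is that the statement is phrased in terms of $T$, which may not be known in advance; this is the only reason one has to be slightly careful, as the number of parallel copies itself depends on $\log T$. This is exactly the issue handled by the guess-and-double scheme referenced in the paragraph before the lemma and developed in \cref{alg:high-probability}: one guesses $T$ in successive powers of two, restarts with more parallel copies whenever the current guess is exceeded, and charges the extra work to the new (doubled) guess, which inflates the runtime only by additional logarithmic factors and leaves the high-probability conclusion intact.
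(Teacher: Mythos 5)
Your proof is correct and follows essentially the same route as the paper: sum the expected-work bounds from \Cref{lem:preprocessing-partial-solutions-work} and \Cref{lem:total-retrigger}, apply Markov's inequality to bound each instantiation's probability of exceeding twice the expectation by $1/2$, and use independence of the $k\log T$ copies to get failure probability $T^{-k}$. The extra discussion of the backstop interleaving and the guess-and-double handling of unknown $T$ is not needed for this lemma itself (the paper defers those points to \Cref{thm:backstop} and \Cref{alg:high-probability}), but it does not affect correctness.
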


\begin{proof}
    Consider a single instantiation of the predicted-updates algorithm given in~\cref{alg:predicted-insertion-deletion}.  
    By~\cref{lem:preprocessing-partial-solutions-work,lem:total-retrigger}, the total expected work of this algorithm (including 
    preprocessing) is 
    \begin{align}
        O\left(T^{c-1} \cdot \left(T + ||\mathbf{p} - \mathbf{d}||_1\right) \cdot \log^{c+3} (T) \cdot \log\log(T \cdot |\mathcal{S}|)\right) \;\;\; &\text{ when } c > 1, \label{eq:expected-runtime-1} \\
        O\left( \left(T + ||\mathbf{p} - \mathbf{d}||_1\right) \cdot \log^{4}(T) \cdot \log\log(T \cdot |\mathcal{S}|) \right) \;\;\; &\text{ when } c \leq 1. \label{eq:expected-runtime-2}
    \end{align}
    By Markov's inequality, the probability that the runtime of a single instantiation exceeds two times the expected work is at most \(\frac{1}{2}\). Thus, the probability that $k \log(T)$ independent instantiations all simultaneously exceed two times the expected work is 
    \[ \le \left(\frac{1}{2}\right)^{k \log (T)} = \frac{1}{T^{k}}.\]
    Thus, with probability at least \(1 - \frac{1}{T^{k}}\), one of the \(k \log (T)\) instantiations has runtime at two times the expected work given in~\cref{eq:expected-runtime-1,eq:expected-runtime-2}.
\end{proof}

Now, using the backstop technique, we take \(O(\log(T))\) instantiations of the predicted updates algorithm (\cref{alg:predicted-insertion-deletion}) to create a composite algorithm that has runtime that scales with the minimum of the instantiations.  

We observe that the guarantee from~\cref{lem:preprocessing-schedule} and~\cref{lem:total-retrigger}, and therefore~\cref{lem:boosting} as well, requires the algorithm to have access to the size of the time horizon \(T\).  We show how to set this via a guess-and-double procedure
assuming we receive sets of updates $P_1, P_2, \dots, P_{T}$ of successively larger sizes where 
$P_i \subseteq P_{i+1}$ and $|P_{i+1}| = 2 \cdot |P_i|$.  Together, this gives us~\cref{alg:high-probability} and our final theorem below.

\begin{algorithm}
\caption{Work Bounds with High Probability}\label{alg:high-probability}
\begin{algorithmic}[1]
\Require{\emph{Offline (during preprocessing):} partition-tree $\mT$, 
    divide-and-conquer algorithm $\alg$ that computes $f(\cdot)$, predicted sequence of dynamic updates $P$, and ground set $\mathcal{S}$
    where each event is on an element $e \in \mathcal{S}$. 
    \emph{Online:} Online sequence of dynamic updates $U = [E_1, \dots, E_T]$ where each event $E = (e, type)$ is a \emph{real} event; prediction sets $P_1, \dots, P_{\log_2(T)}$.}
\Ensure{After each day $t \in [T]$, output $f(U_t)$.}
\State \(\hatT \leftarrow 1\) \Comment{Initialization}
\State \(L \leftarrow k\cdot \log(\hatT)\) for some fixed constant $k > 0$.
\For{real event $E_t$}
    \If{$t \geq \hatT$} \Comment{Guess-and-double} \label{algline:guess-and-double}
        \State Obtain prediction set $P_{\max(1, \log_2(\hatT))}$.
        \State $\hatT \leftarrow 2 \cdot \hatT$.
        \State \(L \leftarrow \max(k \cdot \log(\hatT), \log(|\mathcal{S}|))\) for sufficiently large constant $k > 0$.\footnotemark
        \State Create \(\mathcal{A} := [A_1, \dots, A_L]\), \(L\) independent instantiations of~\cref{alg:predicted-insertion-deletion} using $\hatT$ and $P_{\max(1, \log_2(\hatT))}$. \Comment{Instantiations use independent sources of randomness.}
        \State $M \leftarrow$ Initialize backstop meta-algorithm (\cref{alg:backstop}) over $\alg$. 
        \State Use \(M\) to process all previously seen events $E_1, \dots, E_{t-1}$ in \(B\).
    \EndIf
    \State Add $E_t$ to \(B\).
    \State Pass $E_t$ to \(M\), return output of \(M\).
\EndFor
\end{algorithmic}
\end{algorithm}
\footnotetext{The $\log(|\mathcal{S}|)$ term is necessary to ensure we have enough copies to ensure with high probability.}

\subsection{Putting Everything Together: the Final Theorem}

We use~\cref{alg:high-probability} and all our prior proofs to show our final reduction. We use a $c$-divide-and-conquer algorithm
as well as a partition-tree which are given in~\cref{def:divide-and-conquer} and~\cref{def:random-partition-tree}, respectively.
The online predictions are given in sets $P_1, P_2, \dots, P_{\log_2(T)}$ where $P_1 \subseteq P_2$ and $|P_2| = 2 \cdot |P_1|$. 
Prediction set $P_i$ is given (before the next real update) when we have processed $|P_{i-1}|$ real updates. Let $U_{t}$ denote
the set of events in $[E_1, \dots, E_t]$.

\begin{theorem}[Offline Divide-and-Conquer to Fully Dynamic with Predictions Reduction]
\label{thm:offline-to-fully-dynamic}

Given a $c$-divide-and-conquer algorithm $\alg$ that computes the solution to a problem $\mathcal{P}$, 
sets of predictions $P_1, P_2, \dots, P_{\log_2(T)}$, and
online sequence of events $U = [E_1, \dots, E_t]$,~\cref{alg:high-probability} 
(using \cref{alg:predicted-insertion-deletion}) correctly outputs the 
solution to $f(U_t)$ after each $t \in [T]$ and uses total work, with high probability, 

\begin{align*}
    \tO\left(T^{c-1} \cdot \left(T + ||\mathbf{p} - \mathbf{d}||_1\right) \right) \;\;\; &\text{ when } c > 1, \\
    \tO\left(T + ||\mathbf{p} - \mathbf{d}||_1\right) \;\;\; &\text{ when } c \leq 1.
\end{align*}

Furthermore, our algorithm never performs worse than the best-known fully dynamic algorithm for problem $\mathcal{P}$. 

\end{theorem}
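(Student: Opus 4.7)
The plan is to combine the building blocks established in the previous subsections: the scheduling and initial computation bound (\cref{lem:preprocessing-schedule,lem:preprocessing-partial-solutions-work}), the total retrigger work bound (\cref{lem:total-retrigger}), the boosting argument (\cref{lem:boosting}), and the best-of-all-worlds backstop (\cref{thm:backstop}), and then wrap them in the guess-and-double scheme of~\cref{alg:high-probability} to remove the dependence on knowing $T$.

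For \emph{correctness}, I would argue by induction on $t$ that after processing day $t$, the state of the divide-and-conquer algorithm $\alg$ is exactly the state it would have had if run offline on the true sequence of real events restricted to the time horizon represented by the current partition-tree. The key observation is that every time a real event $E$ on day $t$ disagrees with its predicted time, either \textsc{ProcessEventEarlierThanPrediction} or \textsc{ProcessEventLaterThanPrediction} is invoked, which in turn calls \textsc{Retrigger} on the smallest window containing both the old and new timestamps. By definition of a $c$-divide-and-conquer algorithm, recomputing every descendant of this smallest window restores consistency with the updated event assignment, since sibling windows do not share mutable state through $\mathbf{M}$. The \textsc{Output} routine then produces $f(U_t)$ as required.

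For the \emph{runtime bound with high probability}, I would first fix a particular doubling phase with guess $\hat T$. In this phase, \cref{alg:high-probability} runs $L = \Theta(\log \hat T + \log |\mathcal{S}|)$ independent copies of \cref{alg:predicted-insertion-deletion} under the backstop meta-algorithm $M$. By \cref{lem:boosting}, at least one of these copies achieves the target expected runtime with probability $1 - \hat T^{-k}$, and \cref{thm:backstop} guarantees that $M$'s total work is at most an $O(L)$ factor larger than the best copy's work. Multiplying this $O(L)$ factor into the $\tO(\cdot)$ from \cref{lem:total-retrigger,lem:preprocessing-partial-solutions-work} is absorbed by the $\tO$ notation, giving the claimed per-phase bound. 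Summing a geometric series over $O(\log T)$ doubling phases costs only a constant factor, and a union bound over phases still gives high probability overall. Because the $\ell_1$ error $\|\mathbf{p} - \mathbf{d}\|_1$ restricted to events in a phase is at most $\|\mathbf{p} - \mathbf{d}\|_1$ globally, and the phase lengths are geometric, the dominant term remains $\tO(T^{c-1}(T + \|\mathbf{p} - \mathbf{d}\|_1))$ or $\tO(T + \|\mathbf{p} - \mathbf{d}\|_1)$, as appropriate.

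For the final clause that the algorithm never performs worse than the best-known fully dynamic algorithm $B$ for $\mathcal{P}$, I would include $B$ as one of the $N = L+1$ algorithms composed by the backstop \cref{alg:backstop}, inside each doubling phase. By \cref{thm:backstop}, the amortized work of $M$ is at most $O(N)$ times the minimum of the work of $B$ and the work of the boosted predicted-updates algorithm; since $N$ is polylogarithmic in $T$ and $|\mathcal{S}|$, this $O(N)$ overhead is hidden by $\tO(\cdot)$. The main subtlety, and the part I expect to require most care, is the interaction between the guess-and-double restart (which discards and reinitializes all copies at the start of a new phase) and the backstop bound from \cref{thm:backstop}: I need to verify that replaying all buffered events $E_1,\dots,E_{t-1}$ through the newly instantiated $M$ costs no more than the work already charged in earlier phases (up to constants), so that geometric summation of phase costs cleanly yields the final bound without introducing an extra $T$ factor.
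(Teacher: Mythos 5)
Your proposal is correct and follows essentially the same route as the paper's proof: correctness via the retriggering of the smallest affected window together with the order-independence in \cref{def:divide-and-conquer}, the work bound via \cref{lem:boosting} and \cref{thm:backstop} summed geometrically over the guess-and-double phases with a union bound, and competitiveness via composing with the fully dynamic algorithm through \cref{thm:backstop}. The only cosmetic difference is that you fold the fully dynamic backstop into the $N$ algorithms inside each phase, whereas the paper composes it once at the outer level; both are handled identically by \cref{thm:backstop}.
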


\begin{proof}
Given our inputs, we construct our set of random partition-trees using~\cref{alg:predicted-insertion-deletion}.
We first prove that our algorithm correctly outputs $f(U_t)$ after every event $E_t$ in the sequence of online events.
~\cref{alg:high-probability} runs $L$ instantiations of~\cref{alg:predicted-insertion-deletion} and outputs the answer 
from the first instantiation that finishes processing event $E_t$. By our procedures in~\cref{alg:predicted-insertion-deletion},
every real update is processed in every window that contains it, and no windows in the union of all windows which do not contain 
days after $t$ contain any predicted events that have \emph{not} occurred after event $E_t$ is processed. Furthermore, by~\cref{def:divide-and-conquer}, all windows process all events irrespective of the order of the events. Hence,~\cref{alg:predicted-insertion-deletion}
correctly returns $f(U_t)$ for every $t \in [T]$.

We now give our high probability bound for the total amount of work performed over all $T$ real events. 
    Consider a day \(t = 2^i\) on which \Cref{algline:guess-and-double} is triggered, and \(\widehat{T}\) is doubled.  \(\widehat{T}\) is newly set to be \(2^{i + 1}\).  Consider the total work done by the new instantiation of $M$. Specifically, the work between the event at \(t = 2^i\) and when \(\widehat{T}\) is doubled again at event \(t' = 2^{i + 1}\).  By~\cref{thm:backstop}, we can bound the work of \(M\) by \(L\) times the work of the minimum \(A_i \in \mathcal{A}\) that is drawn independently 
    in this iteration.  \Cref{lem:boosting} bounds the work of that minimum \(A_i\) by

    \begin{align*}
        \tO\left(T^{c-1} \cdot \left(T + ||\mathbf{p} - \mathbf{d}||_1\right) \right) \;\;\; &\text{ when } c > 1, \\
        \tO\left(T + ||\mathbf{p} - \mathbf{d}||_1\right) \;\;\; &\text{ when } c \leq 1,
    \end{align*}
    
    with probability $\geq \max(1 - \frac{1}{t^k}$ if $\log(t) > \log(|\mathcal{S}|)$ and $\geq 1 - \frac{1}{t^{k \cdot \log(|\mathcal{S}|)}}$,
    otherwise. The probability is due to the fact that we create $L = \max(k\cdot \log(T), k \cdot \log(|\mathcal{S}|))$ independent instantiations.

    Now, can use a union bound to conclude that our work bounds hold for all integers $t \in [T]$ with probability 
    \begin{align*}
        &\ge 1 - \left( \sum_{i = 0}^{\infty} \frac{1}{2^{k(i + 1)}}  + \frac{\log(T)}{2^{k \cdot \log{|\mathcal{S}|}}}\right) \\
        &\ge 1 - \frac{2\log(T)}{|\mathcal{S}|^k},
    \end{align*}

    which is high probability for sufficiently large constant $k > 0$. 

    This condition allows us to, for a day \(t\), bound the total work done by \Cref{alg:backstop} up through day \(t\).  
    Define \(i\) such that \(2^i \le t < 2^{i + 1}\).  We can bound the work done by \Cref{alg:backstop} up through day \(t\) as
    
    \begin{align*}
        \sum_{i = 0}^{\floor{\log_2(t)}} O\left(|\mathbf{p} - \mathbf{r}|_1 \cdot (2^{i})^{c - 1} \log^{c+3}(2^{i}) \log \log(2^{i})\right)
        =O\left(|\mathbf{p} - \mathbf{r}|_1 \cdot (t)^{c - 1} \log^{c+3}(t) \log \log(t \cdot |\mathcal{S}|)\right), 
    \end{align*}

    when $c > 1$,

    and $O\left(|\mathbf{p} - \mathbf{r}|_1 \cdot \log^{4}(t) \log \log(t)\right)$ when $c \leq 1$. Substituting $T$ for $t$ into the equations
    results in our desired work.

    Finally, by composing \Cref{alg:high-probability} with the best-known algorithm for $\mathcal{P}$, once again using \cref{thm:backstop}, 
    we never perform worse than the best-known algorithm for the problem.
\end{proof}

Finally, we observe that this framework gives a stronger in the update/query model, when we reduce to an algorithm that has worst-case query time.  That is, an algorithm that handles updates \emph{without} information about queries, can continue to do so in this model.  This is in contrast to some offline divide-and-conquer algorithms that use information about what elements will be queried to make decisions.

\begin{corollary}[Bound for Update/Query problems]\label{cor:update-query-work}
    Given a $c$-divide-and-conquer algorithm $\alg$ for \(f(\cdot)\) with worst-case query time $\mathrm{query}(A)$, 
    we can construct an algorithm, such that with high probability with respect to $T$, has total work 
    
    \begin{align*}
        \tO\left(T^{c-1} \cdot \left(T + ||\mathbf{p} - \mathbf{d}||_1\right) \right) \;\;\; &\text{ when } c > 1, \\
        \tO\left(T + ||\mathbf{p} - \mathbf{d}||_1\right) \;\;\; &\text{ when } c \leq 1,
    \end{align*}
    and the worst-case query time is always bounded by $\mathrm{query}(A)$.

    Furthermore, our algorithm never performs worse than the best-known fully dynamic algorithm for problem $\mathcal{P}$.
\end{corollary}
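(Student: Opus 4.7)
The plan is to invoke the main offline-to-fully-dynamic reduction (\cref{thm:offline-to-fully-dynamic}) as a black box to handle updates, and then argue separately that the query-answering procedure of the underlying divide-and-conquer algorithm $\alg$ can be re-used at each query time without blowing up the worst-case query cost. The crucial structural fact is that the partition-tree framework maintains, at every real day $t$, the full state that $\alg$ would have produced on a root-to-leaf path in a partition-tree whose leaves record the correct (real) events through day $t$. Since by hypothesis $\alg$ processes updates \emph{without} using information about which elements will be queried, this state is well-defined and independent of any future queries; in particular, the memory $\mathbf{M}$ associated with the leaf for day $t$ is exactly the state a static run of $\alg$ on the prefix $U_t$ would have left behind.

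First, I would invoke \cref{thm:offline-to-fully-dynamic} applied to $\alg$: this already gives the claimed work bound
\[
\tO\!\bigl(T^{c-1}\cdot(T+\|\mathbf{p}-\mathbf{d}\|_1)\bigr) \quad \text{for } c>1,
\qquad
\tO\!\bigl(T+\|\mathbf{p}-\mathbf{d}\|_1\bigr) \quad \text{for } c\le 1,
\]
with high probability, for processing updates. The only modification needed is that whenever a query is issued on day $t$, we invoke $\alg$'s query procedure against the memory $\mathbf{M}$ stored at the current leaf of the partition-tree (together with the memory inherited along the root-to-leaf path, exactly as specified in \cref{def:divide-and-conquer}). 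Because $\alg$'s query procedure has worst-case time $\mathrm{query}(\alg)$ by assumption, and because the state it reads is well-defined and correct (this is where we use that update-processing does not depend on query information), the per-query cost is $\mathrm{query}(\alg)$.

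Second, I would apply the backstop composition of \cref{thm:backstop} to the pair consisting of (i) the predicted-updates algorithm above, and (ii) the best-known fully dynamic algorithm $B$ for $\mathcal{P}$. For the update work this matches the minimum of the two algorithms (up to the $O(1)$ factor from \cref{thm:backstop}); for queries, we issue the query in parallel to both maintained states and return the answer from whichever finishes first, so the worst-case query time is at most $\min\{\mathrm{query}(\alg),\mathrm{query}(B)\}\le \mathrm{query}(\alg)$, as required.

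The main subtlety, and where care is needed, is justifying that the state available at day $t$ really does support a $\mathrm{query}(\alg)$-time query. A priori, the retrigger mechanism can be mid-execution when a query arrives, and the leaf's memory might not yet reflect the most recent real events. To handle this, I would argue that before returning on day $t$ in \cref{offline:output} of \cref{alg:predicted-insertion-deletion}, all windows on the root-to-leaf path for day $t$ are fully processed (by the order of operations in the main loop, any earlier-than-predicted or later-than-predicted event triggers \textsc{Retrigger} \emph{before} \textsc{Output} is called). Thus at the moment a query is served the leaf memory is consistent with the true prefix $U_t$, and $\alg$'s query routine runs on a legitimate input of size at most $|U_t|\le T$, yielding the worst-case $\mathrm{query}(\alg)$ bound. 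Combined with the update-work bound and the backstop, this proves the corollary.
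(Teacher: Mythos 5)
Your proposal is correct and follows essentially the same route as the paper: treat the update/query problem as an update problem whose per-day output is a data structure compatible with $\alg$'s query routine, invoke \cref{thm:offline-to-fully-dynamic} for the update-work bound, and then backstop via \cref{thm:backstop} while answering each query by running both query procedures in parallel, giving query time $O(\min\{\mathrm{query}(A),\mathrm{query}(B)\}) \le O(\mathrm{query}(A))$. The extra care you take about the leaf memory being fully retriggered before \textsc{Output} on day $t$ is a sound (if implicit in the paper) elaboration rather than a different argument.
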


\begin{proof}
    For the first part of the corollary, we can think of an update/query problem as simply being an update problem, where after each update, the algorithm must return a data structure that is compatible with the query algorithm.  Using~\cref{thm:offline-to-fully-dynamic} on this update problem, we get a predicted-dynamic algorithm that always returns a data structure that is compatible with the original query algorithm.  Thus, the query algorithm is unchanged. 

    For the second part of the corollary, we use the backstop procedure of \Cref{thm:backstop} to backstop the predicted-deletion algorithm with the fully-dynamic algorithm.  This provides some data structure for every update.  However, our offline to online query algorithm and the fully-dynamic query algorithm, $\mathrm{query}(B)$, may not be expecting the same data structure.  Thus, to execute a query on a given day, we run, in parallel, both query algorithms. 
    Thus our query time is bounded by $O\left( \min \{\mathrm{query}(A), \mathrm{query}(B) \}\right)$.
\end{proof}

\section{Incremental to Fully-Dynamic Transformation}\label{sec:incremental}

In this section we show how our framework that lifts offline algorithms to the fully-dynamic setting, given predictions of all update times, can be adapted to lift an incremental algorithm to the fully-dynamic setting, given predictions of only the deletion times. 
Formally, we consider the following model.  

\begin{definition}[Predicted-Deletion Dynamic Model]
    In the predicted-deletion dynamic model, we consider a ground set \(\mathcal{S}\).  On each day exactly one of the following occurs:
\begin{enumerate}
    \item An element \(e \in \mathcal{S}\) is inserted, and reports a \emph{prediction} of the day on which it will be deleted.
    \item A previously inserted element is deleted.
\end{enumerate}
An algorithm computes a function \(f(\cdot)\) in the predicted-deletion dynamic model, if on every day \(t\), the algorithm outputs \(f(S)\), where \(S \subseteq \mathcal{S}\) is 
the \emph{working subset} induced by the true (not predicted) sequence of element insertions and deletions that occur in time-steps $1, \dots, t$.
\label{def:predicted-deletion-dynamic-model}
\end{definition}

This model is stronger than the general predicted-updates model that was presented in the last section.  In the previous section, we charged the runtime of the fully-dynamic algorithm to the prediction error in insertions and deletions.  In this model, the runtime of the fully-dynamic algorithm should only depend on the predictions of deletion times.  This corresponds to the fact that we are starting with an incremental algorithm rather than an offline algorithm.  Thus, the algorithm we start with can already handle arbitrary insertions, and our reduction adds functionality only in handling deletions.  

First, in what may seem like a step in the wrong direction, we show how to build an \emph{offline dynamic} divide-and-conquer algorithm, using an incremental algorithm with a worst-case guarantee.  
To do this, we introduce the concept of permanent elements.  

\begin{definition}[Permanent elements]
    An element is \emph{permanent} with respect to a window \(W\) of a partition-tree if
    \begin{enumerate}
        \item the element is inserted on or before the first day of \(W\),
        \item the item is deleted after the last day of \(W\), 
        \item and the element is \emph{not} permanent for any ancestor of \(W\).
    \end{enumerate}
\end{definition}

\begin{figure}
    \centering
    \includegraphics[scale=0.39]{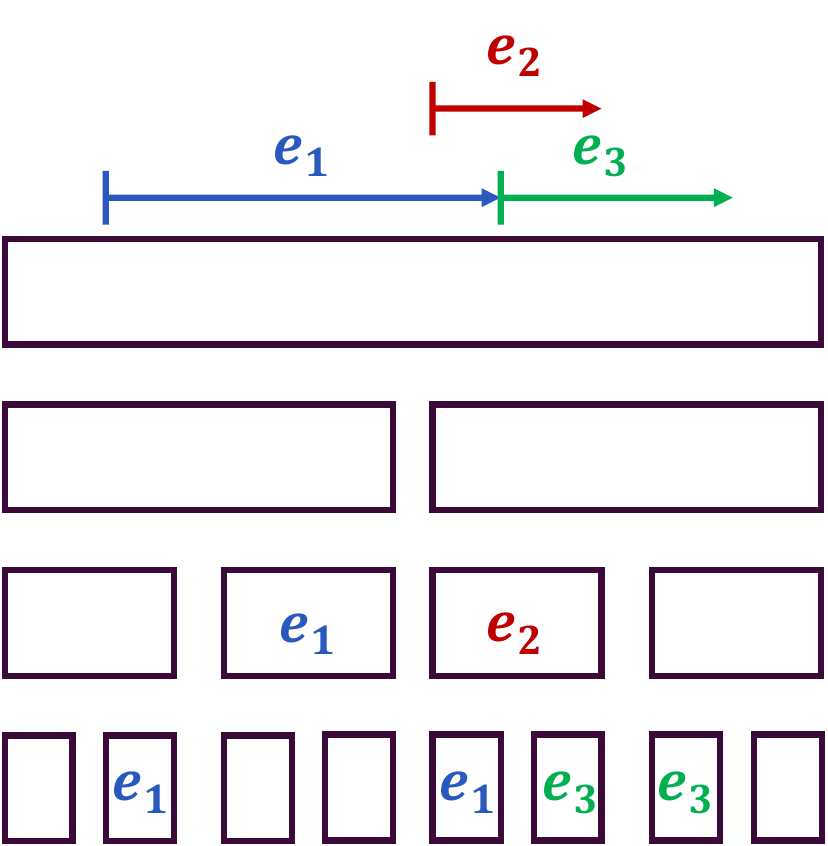}
    \caption{An illustration of which windows consider the elements \(e_1\), \(e_2\), and 
    \(e_3\) permanent}
    \label{fig:permanent-elements}
\end{figure}

This is illustrated in \Cref{fig:permanent-elements}.  Note that for a day \(t\), the windows containing \(t\) perfectly partition the elements that are active on day \(t\).  Conversely, we also have that the windows for which an element \(e\) is permanent perfectly partition the lifetime of \(e\).  This suggests a natural way to convert an (online) incremental algorithm, $\alg$, with worst-case update time \(\mathrm{update}(\alg)\) into a divide-and-conquer offline algorithm.

\begin{lemma}[Offline-dynamic divide-and-conquer from incremental]
    An incremental algorithm, $\alg$, with worst-case update time \(\mathrm{update}(\alg)\), can be converted into a offline-dynamic divide-and-conquer algorithm satisfying \Cref{def:divide-and-conquer}, where the work at each window, \(W\), is 
    \[O(\mathrm{update}(\alg) \cdot |W|).\]
    \label{lem:divide-and-conquer-from-incremental}
\end{lemma}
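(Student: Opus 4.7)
The plan is to maintain the invariant that at every window $W$ of the partition-tree the memory $\mathbf{M}_W$ holds an instance of the incremental algorithm $\alg$ whose internal state has had inserted into it exactly the elements that are permanent at $W$ or at some ancestor of $W$. Correctness then follows from the partition property of permanence: at any day $t$, the root-to-leaf path of windows containing $t$ partitions the set of elements active at $t$, because each active element becomes permanent at the unique lowest window on the path whose extent is still contained in its lifetime. Hence at a leaf $W = \{t\}$, the $\alg$-state inside $\mathbf{M}_W$ is the state of $\alg$ after every element active at $t$ has been inserted (in some order), and a single query to $\alg$ returns $f(\cdot)$ as required.

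To actually produce $\mathbf{M}_W$ from $\mathbf{M}_{\mathrm{parent}(W)}$, I would have the parent pre-compute the permanent sets $P_{W_L}$ and $P_{W_R}$ of its two children and attach them to the memory before passing it down; at $W$ itself, the computation copies the $\alg$-state (via persistent data structures, absorbing an $O(\log)$ factor into $\mathrm{update}(\alg)$) and then performs $|P_W|$ insertions, one for each element of $P_W$. The reason the parent can compute $P_{W_L}$ and $P_{W_R}$ at all is a key structural observation: since $W$ and $\mathrm{parent}(W)$ share one boundary day, any element whose lifetime contains $W$ but not $\mathrm{parent}(W)$ must have either its insertion (if $W$ is the right child) or its deletion (if $W$ is the left child) occur inside $\mathrm{sibling}(W)$, which is part of $\mathrm{parent}(W)$. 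So a linear-time scan of the events in $\mathrm{parent}(W)$ suffices to identify both permanent sets, provided the parent also carries in its memory an auxiliary record of which elements are alive at the first day of the parent window (easy to update when going from grandparent to parent).

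The hard part is the per-window work bound of $O(\mathrm{update}(\alg)\cdot|W|)$. The structural observation above also yields the bound $|P_W|\le|\mathrm{sibling}(W)|$, which is \emph{not} directly $O(|W|)$ since a sibling can be arbitrarily larger than $W$ in an unbalanced split. To obtain the desired bound, I would argue it amortized across sibling pairs: for every internal node $P$ with children $W_L,W_R$ one has
\[
|P_{W_L}|+|P_{W_R}|\;\le\;|W_L|+|W_R|\;=\;|P|,
\]
so the total insertion work done on behalf of both children of $P$ is at most $O(|P|\cdot\mathrm{update}(\alg))$. Charging the insertion work of each child back to its parent, and combining with the $O(\log T)$ expected depth of the random partition-tree from \cref{lem:partition-tree-depth}, yields a total expected work of $O(T\log T\cdot\mathrm{update}(\alg))$ across the entire tree, which, redistributed across windows, meets the per-window template $O(\Gamma\cdot|W|^c)$ with $c=1$ and $\Gamma=\mathrm{update}(\alg)$ required by \cref{def:divide-and-conquer}. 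The main obstacle is precisely this reconciliation between the natural sibling-based bound and the per-window statement of the lemma; once the charging scheme is set up, the scan cost $O(|P|)$ at each internal node fits into the same budget and the proof goes through.
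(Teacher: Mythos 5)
Your construction and correctness argument are essentially the paper's: insert each window's permanent elements into the incremental state inherited from the parent, observe that an element permanent for $W$ must have been inserted or deleted during $\mathrm{sibling}(W)$ (else it would be permanent for the parent), and satisfy the unordered-set requirement of \cref{def:divide-and-conquer} by letting the parent's memory carry its own update set. (The paper has the child reconstruct its permanent set from the parent's state and its own updates, whereas you have the parent precompute and attach the children's permanent sets; this difference is immaterial.)

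Where you diverge is the final work accounting, and that is where there is a genuine gap. The lemma, and item 3 of \cref{def:divide-and-conquer}, require a per-window \emph{expected} work bound of $O(\mathrm{update}(\alg)\cdot|W|)$, and this per-window property is what the downstream analysis (\cref{lem:single-retrigger-work}) consumes: when a small subtree is retriggered, its cost must be charged to the sizes of the windows inside that subtree, not to the whole horizon. Your step of bounding the \emph{total} work over the entire tree by $O(T\log T\cdot\mathrm{update}(\alg))$ and then asserting it can be ``redistributed'' to meet the per-window template is not a valid inference; a global total says nothing about the cost of recomputing an individual window or a small retriggered subtree. (Your sibling-pair amortization does localize to per-level sums within any subtree, which would suffice if you reproved the retrigger lemma using aggregate per-level bounds, but you neither do that nor establish the stated per-window property.) The paper's resolution of exactly the worry you raise --- that $|\mathrm{sibling}(W)|$ may be far larger than $|W|$ under an unbalanced split --- is the idea you are missing: because the partition-tree is drawn at \emph{random}, the sibling of $W$ has size comparable to $|W|$ in expectation, so the expected work at $W$ is $O(\mathrm{update}(\alg)\cdot|W|)$ directly; this is precisely why \cref{def:divide-and-conquer} is phrased in expectation, and no amortization or redistribution is needed.
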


\begin{proof}
  At the root node, we initialize the state of the incremental algorithm.  Then, at each window, we insert the permanent elements associated with that window into the state of the incremental algorithm.  The set of permanent elements can be computed from the set of all elements that are active at any time during the parent window, in a linear-time scan.  Thus, the work done at a window scales with \(\mathrm{update}(\alg)\) times the number of permanent edges associated with that window.
  
  Note that any element that is permanent for a window \(W\), must have either been inserted or deleted during \(W\)'s sibling window.  Otherwise, this element would actually be permanent for \(W\)'s parent window.  Thus, the number of permanent elements of \(W\) is bounded by the size of \(W\)s sibling.  For a random partition-tree, this is the same as the size of \(W\) in expectation.  
  So this meets the third criterion of \Cref{def:divide-and-conquer}, where the expected work at a node is linear in the size of the node, with a multiplicative factor of \(\mathrm{update}(\alg)\).  

  With respect to the second criterion of \Cref{def:divide-and-conquer}, we need that the computation of the window depends only on the state of the parent window, and the \emph{unordered set} of updates occurring in this window.  At first glance, this does not appear to be the case, as calculating the set of permanent elements requires the algorithm to compare the set of updates of its sibling window.  However, we can consider the set of updates of the parent window to be included in the state of the parent window.  Then, since the number of updates of the parent is, in expectation, a constant factor more than the size of the child window, the child window can reconstruct the necessary information from the state of the parent and its own set of updates.  
  
  Thus, this algorithm meets the criteria of \Cref{def:divide-and-conquer}, where the expected work at each window \(W\) is 
  \[O(\mathrm{update}(\alg) \cdot |W|).\]
\end{proof}

Now, if we directly apply \Cref{thm:offline-to-fully-dynamic}, we recover an algorithm that does total work
\[O((T + ||\predvec - \delvec||_1) \cdot \mathrm{update}(\alg) \cdot \log^3(T) \log\log(T \cdot |\mathcal{S}|)),\]
when given predictions of \emph{all updates} at the outset of the algorithm.  However, we wish to consider a setting in which the elements arrive arbitrarily, and an element's predicted deletion time is only provided when the element is inserted.  

This leads to two main issues.  The first is that we cannot create a tentative schedule of events during preprocessing, because we have no information about the events that are going to occur.  The second is that we cannot compute a preliminary version of the divide-and-conquer algorithm during preprocessing, because we do not have a tentative schedule of events to refer to.  To address the first issue, we make the following observation.

\begin{observation}[Online scheduling]
Even in the setting where the initial schedule of events is computed statically in the preprocessing step, we use an \emph{online} algorithm to assign events to days.  

Thus, as predicted events arrive over time, we can assign each event irrevocably to an initial tentative schedule using this online algorithm, while maintaining the runtime and approximation guarantees of the original algorithm.    
\label{obs:online-scheduling}
\end{observation}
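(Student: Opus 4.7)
The plan is to verify that the scheduling routine inside \textsc{PreprocessPredictions} (as analyzed in \cref{lem:preprocessing-schedule}) is composed entirely of subroutines that are themselves online, so that we lose nothing by deferring its execution. The routine has two pieces: (i) a call to the Gupta--Lewi harmonic algorithm for metric bipartite matching on the line (matching predicted events to days), and (ii) a linear scan that enforces the precedence constraint that a predicted deletion is assigned no earlier than its corresponding insertion. I would argue each piece separately.

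For (i), I would invoke the fact that the Gupta--Lewi algorithm is by construction an online algorithm: it processes requests one at a time, and for each arriving request irrevocably assigns it to an open day by drawing from the harmonic distribution over the nearest open days on the left and right. The $O(\log T)$ competitive guarantee (with respect to the offline optimal matching, and hence with respect to $\|\predvec - \realvec\|_1$) is established under an arbitrary online arrival order of requests, so the expected error bound $\mathbb{E}[\|\assignedvec - \realvec\|_1] = O(\|\predvec - \realvec\|_1 \cdot \log T)$ carries over verbatim when we feed each predicted deletion into the algorithm at the moment its insertion arrives. The union-find implementation of the harmonic matching still runs in $O(\log^\ast T)$ amortized work per request regardless of arrival order, so the overall $O(T \log^\ast T)$ preprocessing bound is preserved (distributed across the $T$ days as $O(\log^\ast T)$ per day).

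For (ii), in the predicted-deletion model the real insertion day of $e$ is known exactly when the prediction of $e$'s deletion arrives (they arrive together on the same day). Hence the precedence constraint can be enforced instantly: if the harmonic algorithm would place $e$'s predicted deletion on a day strictly before the current day, we instead clip it to the current day (this only decreases the discrepancy with $\realvec$, since the true deletion day is at least the insertion day). This local fix requires $O(1)$ work per insertion and never forces a revision to a previously made assignment, so the schedule remains irrevocable and the $\ell_1$ error bound continues to hold. The main subtlety to guard against will be making sure that this online clipping does not interact badly with the competitive analysis of Gupta--Lewi; I expect this to be straightforward because clipping is applied only as a post-filter and can be absorbed into a single additive error term bounded by the prediction error itself.
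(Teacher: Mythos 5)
Your proposal is correct and follows essentially the same route as the paper: the observation rests precisely on the fact that the Gupta--Lewi harmonic matching (with its union-find implementation) is an online, arrival-order-oblivious procedure whose $O(\log T)$ competitive bound and $O(\log^* T)$ amortized cost per request therefore transfer unchanged when predicted deletions are fed in at their insertion days. Your item (ii) is just the online analogue of the paper's offline post-processing pass for precedence constraints, and your clipping argument is sound (and keeps at most a constant number of extra events per day) since the true deletion day can never precede the arrival day of the prediction.
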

Thus, we can schedule events as they arrive, without having to reschedule events to account for later arrivals.  

To address the second issue, we show that, while we don't know enough about the schedule to run the entire divide-and-conquer algorithm during preprocessing, we actually can utilize a ``just in time" approach.  

\begin{lemma}[Windows can be computed ``just in time"]
    Consider a offline-dynamic divide-and-conquer algorithm of the form constructed by \Cref{lem:divide-and-conquer-from-incremental}.  Then, the fully-dynamic algorithms with predictions resulting from \Cref{thm:offline-to-fully-dynamic} can be run in a ``just in time" fashion where 
    \begin{enumerate}
        \item the computation associated with a window \(W\) beginning on day \(t\) is only run on or after day \(t\), and 
        \item on day \(t\), we have enough information to run the computation associated with all windows starting on day \(t\), and 
        \item \(\textsc{Retrigger}\) is never called for an insertion event.
    \end{enumerate}
    \label{lem:incremental-just-in-time}
\end{lemma}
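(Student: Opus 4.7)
The plan is to prove all three claims together by modifying the offline framework to defer computation of each window until it is first needed, and verifying the conclusions by structural induction on the partition-tree. The critical fact is that, by \Cref{lem:divide-and-conquer-from-incremental}, the computation at a window $W$ depends only on (i) the state of $\mathrm{parent}(W)$ after $\mathrm{parent}(W)$'s own computation, and (ii) the set of permanent elements of $W$. I will show that both ingredients are available on the first day of $W$, and not before.

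For claim~(1), I would simply have the algorithm execute the computation associated with a window $W$ the first time day $\mathrm{start}(W)$ is reached, so by construction no window is touched before its first day. For claim~(2), I would fix a window $W$ with $\mathrm{start}(W) = t$ and argue by induction on depth. The parent of $W$ begins on some day $t' \le t$, so by the inductive hypothesis $\mathrm{parent}(W)$ has already been materialized. If several nested windows share the same start day $t$, we process them top-down in order of decreasing size, so each child sees its parent as already computed. The set of permanent elements of $W$ can then be extracted by scanning the active elements recorded in $\mathrm{parent}(W)$'s state and keeping those whose currently scheduled deletion day exceeds $\mathrm{end}(W)$: every element that could be permanent for $W$ was inserted on or before day $t$, so it is already known to the algorithm and carries a predicted deletion day assigned at insertion time via the online scheduler of \Cref{obs:online-scheduling}.

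For claim~(3), I would trace the calls to \textsc{Retrigger} in \Cref{alg:predicted-insertion-deletion}: they occur only inside \textsc{ProcessEventEarlierThanPrediction} and \textsc{ProcessEventLaterThanPrediction}, both of which are triggered by a disagreement between a previously predicted event and reality. In the predicted-deletion model insertions are never predicted in advance; they arrive on their actual day and only then register a predicted deletion. Consequently, no insertion can ever conflict with a pre-existing prediction, so \textsc{Retrigger} is invoked exclusively for deletion events.

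The main technical obstacle I anticipate is that a call to \textsc{Retrigger}$(t_1,t_2)$ recomputes an entire subtree, some of whose descendants may begin on days strictly after the current day and hence have never been materialized under the just-in-time regime. I would resolve this by having \textsc{Retrigger} skip any window whose first day has not yet been reached; such a window will instead be computed fresh, under the now-updated schedule, on its own first day by the inductive argument above. This modification preserves correctness and can only decrease the work of each \textsc{Retrigger}, so the runtime bounds of \Cref{thm:offline-to-fully-dynamic} continue to apply.
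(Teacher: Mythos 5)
Your proposal is correct and follows essentially the same route as the paper: windows are materialized only on their start day (with windows starting strictly later treated as inactive and skipped inside \textsc{Retrigger}), permanence of an element for $W$ requires insertion on or before $W$'s first day so all permanent elements are already known, and insertions carry no prior prediction that could disagree with reality, so they never trigger rescheduling. Your added details (the top-down induction for nested windows sharing a start day and the explicit handling of unmaterialized descendants inside \textsc{Retrigger}) are consistent with, and slightly more explicit than, the paper's argument.
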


\begin{proof}
    First, we note that the output of the fully-dynamic algorithm resulting from \Cref{thm:offline-to-fully-dynamic} on day \(t\) only depends on the computations associated with windows that contain \(t\).  Thus, the first point of the lemma is true for \emph{any} algorithm of this form.  That is, on a day \(t\), we can consider all windows starting after day \(t\) to be ``inactive," and ignore them when they are part of a \(\textsc{Retrigger}\) operation.  Then, on day \(t\), we run the computations of any windows beginning exactly on day \(t\), for the first time.\footnote{In fact, while we don't get a better asymptotic runtime bound, running any algorithm of this form in a ``just in time" way can only reduce the total work done, as we do not have to retrigger windows that are not yet active.}  

    For the second point, consider a window \(W\) beginning on some day \(t\).  It is possible that on day \(t\), there are some events that will occur during \(W\), for which we do not have predicted times.  For example, insertions occur entirely arbitrarily, so we do not have any information about the insertions that will occur during \(W\).  However, any element that could be permanent for \(W\), must 
    \emph{already} be inserted.  This is by definition: for an element to be permanent for \(W\), it must be inserted on or before the first day of \(W\).  Thus, on day \(t\), we have enough information to compute all of the permanent elements of \(W\), which is enough to run the divide-and-conquer algorithms constructed by \Cref{lem:divide-and-conquer-from-incremental}.

    Finally, for the third point, note that any window that depends on an insertion event \(e\) is only computed for the first time after \(e\) occurs.  Thus, at this point, \(e\) is fully known, and will not be rescheduled, so it will never call \(\textsc{Retrigger}\).
\end{proof}

Altogether, \Cref{lem:divide-and-conquer-from-incremental}, \Cref{obs:online-scheduling}, and \Cref{lem:incremental-just-in-time}, along with \Cref{thm:offline-to-fully-dynamic} give us the following theorem. 

\begin{theorem}[Incremental to Fully-Dynamic]
    Given an incremental algorithm, $\alg$, with expected worst-case update time \(\mathrm{update}(\alg)\), we can construct an algorithm in the predicted-deletion model (\Cref{def:predicted-deletion-dynamic-model}), such that the total expected work done by the algorithm is 
    \[O\left( \mathrm{update}(\alg) \cdot (T + ||\mathbf{p} - \mathbf{d}||_1) \cdot \log^4 T \log \log (T\cdot |\mathcal{S}|) \right),\]
    where \(||\mathbf{p} - \mathbf{d}||_1\) is the \(\ell_1\) error of the deletion-time predictions.
\end{theorem}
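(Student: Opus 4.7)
The plan is to combine the four preceding building blocks into a single reduction and then carefully argue that the resulting work bound depends only on the deletion-error vector $||\mathbf{p} - \mathbf{d}||_1$, not on any prediction error from insertions. First, I would invoke \Cref{lem:divide-and-conquer-from-incremental} to convert $\alg$ into an offline-dynamic divide-and-conquer algorithm whose per-window work is $O(\mathrm{update}(\alg) \cdot |W|)$; in the language of \Cref{def:divide-and-conquer} this is the $c = 1$ regime with $\Gamma = \mathrm{update}(\alg)$. Plugging this directly into \Cref{thm:offline-to-fully-dynamic} already gives the correct form of bound, namely $\widetilde{O}((T + \|\text{error}\|_1) \cdot \mathrm{update}(\alg))$, but with $\|\text{error}\|_1$ referring to the $\ell_1$ error over \emph{both} insertions and deletions. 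The work of the proof is to eliminate the insertion component of this error.

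To address this I would explain how the generic algorithm is run in the predicted-deletion model. At preprocessing time, no predicted events are available; instead, \Cref{obs:online-scheduling} lets us feed each (insertion, predicted-deletion) pair into the online scheduler of Gupta--Lewi the moment it arrives, still obtaining the $O(\log T)$-approximate scheduling guarantee of \Cref{lem:preprocessing-schedule} with respect only to deletion predictions (insertions are simply taken at their true times since they arrive online). Then \Cref{lem:incremental-just-in-time} lets us defer the computation at each window to the day that window begins: on that day all elements permanent for the window are already inserted, so we truly have the unordered set of updates the divide-and-conquer step needs. Because no window is ever computed before all of its ancestors' insertion events are fully known, \textsc{Retrigger} is never invoked on account of an insertion; the only calls to \textsc{Retrigger} are those generated by rescheduling deletion predictions.

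Given this, I would rerun the charging argument of \Cref{lem:total-retrigger}. Since the only retrigger events correspond to early-deletion or late-deletion fixups for a single element $e$, the telescoping/doubling argument in that lemma charges $O(|\mathbf{a}_0(e) - \mathbf{r}(e)|)$ total expected reschedule-distance to each deleted element, and summing over elements gives $O(\|\mathbf{a}_0 - \mathbf{d}\|_1) = O(\|\mathbf{p} - \mathbf{d}\|_1 \log T)$ by the scheduling guarantee of \Cref{lem:preprocessing-schedule}. Multiplying by the $\widetilde{O}(1)$ per-unit-length cost from \Cref{lem:single-retrigger-work} in the $c = 1$ case (which contributes $\log^3 T \cdot \log\log(T \cdot |\mathcal{S}|)$) and by the $\mathrm{update}(\alg)$ factor hidden in $\Gamma$, the total expected retrigger work becomes $O(\mathrm{update}(\alg) \cdot \|\mathbf{p} - \mathbf{d}\|_1 \cdot \log^4 T \cdot \log\log(T \cdot |\mathcal{S}|))$. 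The just-in-time initial computation of each window, amortized over all $T$ days, contributes $O(\mathrm{update}(\alg) \cdot T \log^3 T \cdot \log\log(T \cdot |\mathcal{S}|))$ by the same argument used in \Cref{lem:preprocessing-partial-solutions-work}. Adding the two gives the claimed bound.

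The main obstacle I expect is the last step of the second paragraph, i.e. making rigorous the claim that \textsc{Retrigger} is never triggered by insertion errors even though insertions arrive arbitrarily and without predictions. The subtlety is that \textsc{Retrigger} in \Cref{alg:predicted-insertion-deletion} is invoked whenever a predicted event fails to match a real event; one must argue that the just-in-time discipline of \Cref{lem:incremental-just-in-time} effectively treats every insertion as if it were predicted exactly at its true day, so that the insertion component of the error contributes zero to the charging argument in \Cref{lem:total-retrigger}. Once this is formalized, the remainder is a direct substitution of $c = 1$ and $\Gamma = \mathrm{update}(\alg)$ into the prior bounds and an application of the boosting/backstop machinery of \Cref{sec:backstop,sec:work-bounds-high-probability} if one desires a high-probability guarantee rather than an expected one.
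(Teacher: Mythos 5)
Your proposal is correct and follows essentially the same route as the paper, which obtains this theorem by combining \Cref{lem:divide-and-conquer-from-incremental}, \Cref{obs:online-scheduling}, and \Cref{lem:incremental-just-in-time} with \Cref{thm:offline-to-fully-dynamic} (with $c=1$ and $\Gamma = \mathrm{update}(\alg)$), exactly as you do. The obstacle you flag --- rigorously showing that insertions never trigger \textsc{Retrigger} --- is already discharged by the third point of \Cref{lem:incremental-just-in-time}, since any window depending on an insertion is first computed only after that insertion has occurred and is fully known, so your charging argument over deletion errors alone goes through as written.
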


Using~\cref{thm:offline-to-fully-dynamic}, we obtain the following work, with high probability.

\begin{corollary}
    Given an incremental algorithm, $\alg$, with expected worst-case update time \(\mathrm{update}(\alg)\), we can construct an algorithm in the predicted-deletion model (\Cref{def:predicted-deletion-dynamic-model}), such that the total work done by the algorithm is 
    $\tO\left(\mathrm{update}(\alg) \cdot (T + ||\mathbf{p} - \mathbf{d}||_1)\right)$
    with high probability, where \(||\mathbf{p} - \mathbf{d}||_1\) is the \(\ell_1\) error of the deletion-time predictions.
\end{corollary}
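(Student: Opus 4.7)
The plan is to follow the same boosting-plus-backstop template that upgraded the expected-work offline statement into its high-probability form in \Cref{thm:offline-to-fully-dynamic}, but now layered on top of the incremental-to-fully-dynamic reduction that precedes the corollary. First I would invoke the preceding theorem to obtain a single instantiation with expected total work $O(\mathrm{update}(\alg) \cdot (T + \|\mathbf{p} - \mathbf{d}\|_1) \cdot \log^4 T \log\log(T|\mathcal{S}|))$. Then, applying the boosting argument of \Cref{lem:boosting} essentially verbatim, I would spawn $L = \Theta(\log T + \log |\mathcal{S}|)$ independent copies of the reduction, each drawing its own independent partition-tree ranks in the sense of \Cref{lem:random-tree-preserves-lengths}. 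By Markov's inequality combined with independence, at least one copy finishes within a constant factor of the expectation with probability at least $1 - 1/\poly(T \cdot |\mathcal{S}|)$.

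Next I would combine these copies with the backstop meta-algorithm of \Cref{thm:backstop}, pushing each real event into the shared buffer and interleaving a single word-RAM step of each copy per tick. This guarantees that the composite algorithm's work up to day $t$ is at most $L \cdot \min_i R_{A_i}(t)$, which matches the claimed $\tO(\mathrm{update}(\alg) \cdot (T + \|\mathbf{p} - \mathbf{d}\|_1))$ bound after absorbing the additional $\log$ factors into $\tO$. Correctness is inherited from the per-copy correctness established in the preceding theorem: every copy outputs $f(U_t)$ on day $t$, and the backstop simply returns the output of whichever copy completes its buffered computation first.

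Finally, since the horizon $T$ is not known a priori in the predicted-deletion model (insertions arrive online and deletion predictions are only revealed when elements are inserted), I would wrap the construction in the guess-and-double scheme of \Cref{alg:high-probability}: keep a current guess $\widehat T$, and whenever the true day index exceeds $\widehat T$, double it, reinstantiate the $L$ copies from scratch on the events seen so far as a warm replay, and continue. Because epoch lengths double and the per-epoch bound is near-linear in the epoch length plus the $\ell_1$ error accrued inside the epoch, a geometric sum over epochs is absorbed into the final-epoch bound, and a union bound over the $O(\log T)$ epochs keeps the overall failure probability polynomially small in $T$ and $|\mathcal{S}|$.

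The main obstacle I expect is the warm-replay step when $\widehat T$ doubles: because each copy is run in the just-in-time fashion of \Cref{lem:incremental-just-in-time}, reinstantiating a copy with fresh random ranks requires reexecuting the divide-and-conquer computations only on windows that already contain seen events, not on the unknown suffix. I would argue that the warm-replay cost at epoch $i$ is dominated by the epoch-$(i{+}1)$ bound by appealing to \Cref{lem:divide-and-conquer-from-incremental}: the incremental state at each replayed window is built only from its permanent elements, whose total count across the tree is $\tO(|U_{\widehat T}|)$ in expectation, and each such element contributes $O(\mathrm{update}(\alg))$ to the rebuild. Everything else is a routine composition of the lemmas already established in the offline section.
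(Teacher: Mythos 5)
Your proposal is correct and follows essentially the same route as the paper: the paper proves this corollary simply by invoking \Cref{thm:offline-to-fully-dynamic} (applied to the divide-and-conquer algorithm obtained from \Cref{lem:divide-and-conquer-from-incremental}, run online/just-in-time), and that theorem's proof is exactly the boosting argument of \Cref{lem:boosting}, the backstop composition of \Cref{thm:backstop}, and the guess-and-double wrapper of \Cref{alg:high-probability} that you spell out explicitly. Your additional care about the warm-replay cost at each doubling is the same accounting already handled inside \Cref{alg:high-probability} and the geometric sum in the proof of \Cref{thm:offline-to-fully-dynamic}.
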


\subsection{Comparison to concurrent work of \cite{PR23, BFNP23}}
\label{sec:comparison-to-vdB-etal}

The independent and concurrent work of \cite{BFNP23} also provides a reduction that lifts an incremental algorithm to the fully dynamic setting given predictions of deletion times.  In this section, we overview their approach and provide a comparison to our reduction.  

The reduction of \cite{BFNP23} is based on a simple and elegant observation over the previous work of \cite{PR23}.  In \cite{PR23}, they consider a model that they call the \emph{deletion look-ahead model}.  In this model, on every day \(t\), the relative deletion order of all elements that are active in the system is known.  Another way to think of this model is that, when an element is inserted, it announces which of the existing elements it will be deleted after.  This is similar to our predicted-deletion model in the regime where the prediction error is 0.  However, it is actually more general, because it only requires \emph{ordinal} information rather than \emph{cardinal} information about the deletion times.  

In the language of our work, we can interpret the \cite{PR23} reduction as doing something similar to constructing the offline divide-and-conquer algorithm from \Cref{lem:divide-and-conquer-from-incremental}, and running it with a just-in-time approach (in the manner of \Cref{lem:incremental-just-in-time}).  This is all with respect to a fixed deterministic partition tree (i.e. a perfectly balanced binary tree).  Indeed, if we had access to cardinal information (i.e. the actual day on which each edge is predicted to be deleted), this would suffice to recover an amortized guarantee for this problem.  

To adapt this to the ordinal setting, \cite{PR23} make the following observation.  Over the course of any window \(W\), at most \(|W|\) elements can be deleted.  Thus, at the start of a window \(W\), any element that is \emph{not} in the \(|W|\) earliest-to-be-deleted elements will certainly be permanent over \(W\),\footnote{Here, we are abusing our notation of permanent elements from earlier, and referring to an element as permanent if it is inserted on or before the first day of \(W\), and deleted on or after the last day of \(W\).} and can safely be inserted.  At the end of the computation associated with \(W\), there are still up to \(|W|\) elements that we did not insert, because we were unsure of whether or not they would be permanent.  These elements are passed to the child windows, which process these elements, along with any elements that were inserted in their sibling windows.  In all, the total number of elements that any window has to process is at most a constant factor more than the size of the window.  

This adaptation to the ordinal setting leaves the reduction of \cite{PR23} with a very useful property: the state of the incremental algorithm at each leaf of the partition tree has elements inserted in \emph{approximately reverse deletion order}.  That is, any element that will be deleted in the next \(d\) deletions, was one of the last \(c \cdot d\) elements to be inserted, where \(c\) is a constant.  

Now, in the case where the predicted deletion order is subject to error, \cite{BFNP23} observe that this property allows us to fix the data structure with low overhead.  In particular, consider an element \(e\) that is deleted earlier than predicted.  That is, on day \(t\), \(e\) was predicted to be the \(d\)-th next element to be deleted, but instead it was the first element to be deleted.  The approximate reverse deletion order property tells us that \(e\) was in the last \(c \cdot d\) elements to be inserted into this leaf.  This means that it was inserted in some window \(W\) relatively close to leaf-level in the partition-tree.  Thus, to fix the data structure, they only need to retrigger these few windows that are descendants of \(W\).  

Essentially, this approximate reverse deletion order property allows the data structure to be fixed on the fly, with low overhead.  Furthermore, their strategy does \emph{not} require a random partition-tree: indeed it is fully deterministic, and can even be adapted to provide a worst-case per update guarantee (as opposed to our work which provides an amortized guarantee).  

In comparison to our work, it is not clear if it is possible to extend the \cite{PR23, BFNP23} approach to lift offline algorithms to the fully 
dynamic setting.  In particular, we provide a simple analogue of their approach that can convert a decremental algorithm to the fully-dynamic setting, when given predictions of insertion times.  We explore this connection formally in the next section (\Cref{sec:decremental}).  With respect to the specific reduction of \cite{BFNP23}, it essentially maintains state of a decremental algorithm, where elements are deleted in \emph{approximately reverse insertion order}. 

While this approach can handle predicted deletion times given an incremental algorithm, and predicted insertion times given a decremental algorithm, it is not clear if and how these two guarantees can be combined.  In particular, this runs into two issues.  
\begin{enumerate}
    \item It is not clear what kind of algorithm to reduce to, as one direction requires an incremental algorithm, and the other direction requires a decremental algorithm.  
    \item It is not clear what the analogue of maintaining elements in both reverse insertion \emph{and} reverse deletion order should be.
\end{enumerate}

Our work addresses the first point by making the key observation that we can reduce to an offline divide-and-conquer algorithm.  As to the second point, we resolve the issue by using a random partition-tree.  This does not require us to maintain any ordering of the elements that we consider. 

Thus, the reduction of \cite{BFNP23} for the predicted deletion model is based on a simple and elegant extension of the previous framework of \cite{PR23}, which lifts incremental algorithms into the deletion look-ahead setting.  The reduction that we present in this work for the predicted deletion model, is instead based on a simple extension from our earlier framework that lifts offline divide-and-conquer algorithms into the predicted update setting. Under a single framework, we generalize all three settings: offline, incremental, and decremental.

\section{Decremental to Fully-Dynamic Transformation}\label{sec:decremental}

In this section, we show how to adapt our framework to convert a decremental algorithm to the fully-dynamic setting, given predictions of only the insertion times.  Formally, we consider the following model. 

\begin{definition}[Predicted-Insertion Dynamic Model]
    In the predicted-insertion dynamic model, we consider a set \(S\) of all the elements that are predicted to ever appear in the system.  At the outset, the algorithm is given \(S\), and predictions for when each of the elements in \(S\) will be inserted.
    Then, on each day, exactly one of the following occurs:
\begin{enumerate}
    \item An element \(e\), either in \(S\) or not, is inserted, 
    \item A previously inserted element is deleted, and provides a prediction of when it will be reinserted (if ever).  
\end{enumerate}
An algorithm computes a function \(f(\cdot)\) in the predicted-insertion dynamic model, if on every day \(t\), the algorithm outputs \(f(S_t)\), where \(S_t\) is 
the \emph{working subset} induced by the true (not predicted) sequence of element insertions and deletions that occur in time-steps $1, \dots, t$.
\label{def:predicted-insertion-dynamic-model}
\end{definition}

A number of works in the past have performed fully dynamic to decremental reductions, starting with the seminal work of Henzinger and King~\cite{HK97}, \emph{for specific problems}. We show that, with the help of predictions, we can generalize such ideas to \emph{any} worst-case 
decremental algorithm.
Now, we show that, given a decremental algorithm, we can design algorithms for the predicted insertion model.  We do this via a simple reduction to the predicted insertion case.  

\begin{theorem}[Decremental to Fully-Dynamic]\label{thm:decremental-to-fully-dynamic}
    Consider a decremental algorithm, $\alg$, which takes time \(\mathrm{initialize}_\alg(T)\) to initialize a state containing up to \(T\) elements, and then has worst-case update time \(\mathrm{update}(\alg)\).  Given such an algorithm, we can construct an algorithm for the predicted-insertion model (\Cref{def:predicted-insertion-dynamic-model}), that does expected total work 
    \[O \left( \mathrm{initialize}_\alg(T) \cdot K + \mathrm{update}(\alg) \cdot (T + ||\mathbf{p} - \mathbf{i}||_1 + TK) \log^4(T) \log \log (T \cdot |\mathcal{S}|) \right) ,\]
    where \(||\mathbf{p} - \mathbf{i}||_1\) is the \(\ell_1\) error of the insertion-time predictions for elements in \(S\), and \(K\) is the number of elements that are not in \(S\) that are ever inserted. 
\end{theorem}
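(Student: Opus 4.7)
The plan is to reduce this theorem to the incremental-to-fully-dynamic result of the previous section via the anti-element duality alluded to at the start of the section. For each element $e$ in the known universe, introduce an anti-element $\bar{e}$, and maintain the invariant that $\bar{e}$ is present in the dual state exactly when $e$ is absent from the primal state. Under this correspondence, inserting $\bar{e}$ maps to deleting $e$ from $\alg$ (valid, since $\alg$ is decremental), real insertions of $e$ map to deletions of $\bar{e}$, real deletions of $e$ map to insertions of $\bar{e}$, and predicted insertion times of $e$ become predicted deletion times of $\bar{e}$. This yields an incremental algorithm $\alg'$ on anti-elements whose worst-case update time matches $\mathrm{update}(\alg)$, with the caveat that $\alg'$ requires $\alg$ to have been initialized with the full universe in its state.

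For the base case where every element is in $S$ (so $K=0$), I would initialize $\alg$ with $S$ at cost $\mathrm{initialize}_\alg(T)$---giving $\alg'$ an implicit initial state in which every anti-element is present---and then apply the incremental-to-fully-dynamic reduction from the previous section to $\alg'$, translating the given predicted insertion times of elements in $S$ into predicted deletion times of the corresponding anti-elements. Since $\ell_1$ error is preserved by the duality, this yields total expected work $\tO(\mathrm{update}(\alg) \cdot (T + ||\mathbf{p} - \mathbf{i}||_1))$, matching the first two terms of the claimed bound.

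For an element $e \notin S$ appearing for the first time on day $t$, the anti-element $\bar{e}$ is not in $\alg$'s initial universe, so the duality breaks. I would handle this by reinitializing $\alg$ with an updated universe that includes $e$ (at cost $\mathrm{initialize}_\alg(T)$) and then rebuilding the wrapper's cached $\alg'$-states along the current root-to-leaf path of the partition-tree by re-executing $\alg$ on the events of each ancestor window. Because the events along any root-to-leaf path form a partition of the first $t$ days, this replay costs $O(T \cdot \mathrm{update}(\alg))$ per restart; over the $K$ new elements this contributes the $\mathrm{initialize}_\alg(T) \cdot K$ and $TK \cdot \mathrm{update}(\alg)$ terms in the claimed bound.

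The main obstacle is verifying that the wrapper's auxiliary data structures---in particular the random partition-tree, the event schedule produced by \cref{lem:preprocessing-schedule}, and the $\ell_1$-error accounting from \cref{lem:total-retrigger}---remain valid across restarts. Since these structures depend only on the timing of events and not on $\alg$'s internal state, they are unaffected; only the cached window-level $\alg'$-states along the current path need to be rebuilt, which is exactly what the path replay accomplishes, and the $O(\log T)$ depth bound on the partition-tree (\cref{lem:partition-tree-depth}) ensures that this path is short. Summing the restart cost with the regular wrapper cost, and absorbing polylog factors into $\tO(\cdot)$, yields the claimed bound.
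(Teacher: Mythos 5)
Your proposal is correct and follows essentially the same route as the paper: the anti-element duality turns the decremental algorithm into an incremental one that plugs into the earlier incremental-to-fully-dynamic reduction (predicted insertions becoming predicted deletions), and each element outside $S$ is handled by re-initializing at cost $\mathrm{initialize}_{\alg}(T)$ plus an $\widetilde{O}(T\cdot \mathrm{update}(\alg))$ recomputation, giving the $K$-dependent terms. The only cosmetic difference is that the paper simply retriggers the entire tree, charging each such element an $\ell_1$ error of $T$, whereas you rebuild only the current root-to-leaf path (same asymptotics); note that it is the permanent elements of the windows containing day $t$, not the windows themselves, that partition the currently active set, which is the accurate justification for your per-restart cost.
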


\begin{proof}
    We reinterpret $\alg$ as in incremental algorithm on ``anti-elements."  That is, suppose we initialize our algorithm to contain the elements in a set \(S\).  Then, as a decremental algorithm, the algorithm can handle updates that delete elements of \(S\).  We can also think of this as an incremental algorithm, that can handle the insertion of anti-elements, corresponding to elements in \(S\).

    Now, consider a series of updates in the predicted-insertion model (\Cref{def:predicted-insertion-dynamic-model}).  From the view of our decremental algorithm, each element starts as not being in the system, then is inserted at some time for which we are given a prediction, then is deleted at an unknown arbitrary time, when it predicts its next insertion.  
    
    From the view of the incremental algorithm, each anti-element starts \emph{in} the system, then is deleted at some time for which we have a prediction, and then is reinserted again at an unknown arbitrary time, at which point it predicts its next deletion.

    Thus, viewing the decremental algorithm as an incremental algorithm on anti-elements, makes it fit directly into our earlier framework for incremental algorithms, because it converts the decremental algorithm into an incremental algorithm, and it converts insertion events into deletion events and vice versa. 

    The only remaining issue is that this strategy cannot handle the case where an element that was never part of the predicted set \(S\) is inserted.  In this case, we must add the new element to \(S\), and run the initialization of the algorithm again from scratch.  This contributes total work 
    \[O \left( \mathrm{initialize}_\alg(T) \cdot K \right),\]
    where \(K\) is the number of such elements outside \(S\) that are ever inserted.
    Then, we must retrigger the entire tree, which is equivalent to this new element contributing \(\ell_1\)-error of \(T\).  Over all such elements this contributes work 
    \[O \left( TK \cdot \log^4 T \log \log (T \cdot |\mathcal{S}|) \right) .\]

    Adding these contributions to the work bound from the incremental reduction, gives us our final bound of 
    \[O \left( \mathrm{initialize}_\alg(T) \cdot K + \mathrm{update}(\alg) \cdot (T + ||\mathbf{p} - \mathbf{i}||_1 + TK) \log^4 T \log \log (T \cdot |\mathcal{S}|) \right) .\]
\end{proof}

Using~\cref{thm:offline-to-fully-dynamic}, we obtain the following work, with high probability.

\begin{corollary}\label{cor:decremental-to-fully-dynamic}
    Consider a decremental algorithm, $\alg$, which takes time \(\mathrm{initialize}_\alg(T)\) to initialize a state containing up to \(T\) elements, and then has worst-case update time \(\mathrm{update}(\alg)\).  Given such an algorithm, we can construct an algorithm for the predicted-insertion model (\Cref{def:predicted-insertion-dynamic-model}), that does total work 
    $\tO \left( \mathrm{initialize}_\alg(T) \cdot K + \mathrm{update}(\alg) \cdot (T + ||\mathbf{p} - \mathbf{i}||_1 + TK) \right)$, with
    high probability,
    where \(||\mathbf{p} - \mathbf{i}||_1\) is the \(\ell_1\) error of the insertion-time predictions for elements in \(S\), and \(K\) is the number of elements that are not in \(S\) that are ever inserted. 
\end{corollary}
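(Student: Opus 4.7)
The plan is to mirror the proof of Theorem~\ref{thm:decremental-to-fully-dynamic}, but to swap in the high-probability variant of our framework (Theorem~\ref{thm:offline-to-fully-dynamic} together with Lemma~\ref{lem:boosting} and Algorithm~\ref{alg:high-probability}) in place of the expectation bound that was used there. First I would reuse the anti-element reinterpretation: a decremental algorithm with worst-case update time $\mathrm{update}(\alg)$ becomes an incremental algorithm on anti-elements with the same worst-case update time, and the predicted-insertion model on elements becomes the predicted-deletion model on anti-elements, preserving the $\ell_1$ error $\|\mathbf{p}-\mathbf{i}\|_1$. Feeding this incremental algorithm through Lemma~\ref{lem:divide-and-conquer-from-incremental} and then the high-probability form of Theorem~\ref{thm:offline-to-fully-dynamic} yields an algorithm that, restricted to updates on anti-elements in $S$, performs total work $\tO(\mathrm{update}(\alg)\cdot(T+\|\mathbf{p}-\mathbf{i}\|_1))$ with high probability.

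Next I would account for the $K$ elements that are inserted but were not predicted in $S$. Each such element deterministically forces a reinitialization of the decremental state on a ground set of size up to $T$, contributing $\mathrm{initialize}_\alg(T)\cdot K$ in total and no randomness to track. From the perspective of the partition-tree reduction, each such element behaves as if it had been predicted at day $\infty$, so its effective $\ell_1$ contribution is at most $T$; the retriggering required to reflect its true insertion day is therefore covered, again with high probability, by the same $\tO(\mathrm{update}(\alg)\cdot T)$ bound per element that Theorem~\ref{thm:offline-to-fully-dynamic} gives on a worst-case event. Summing over the $K$ offending events adds $\tO(\mathrm{update}(\alg)\cdot TK)$ on top, and combining the three contributions yields exactly the claimed bound.

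The main thing to verify is that the high-probability guarantee genuinely survives the composition. The randomness is confined to the partition-trees constructed inside the incremental reduction, and Algorithm~\ref{alg:high-probability} already delivers a $1-1/\poly(T\cdot|\mathcal{S}|)$ success probability for each guess-and-double epoch; a union bound over the $O(\log T)$ epochs, identical to the one used in the proof of Theorem~\ref{thm:offline-to-fully-dynamic}, preserves high probability. The deterministic $\mathrm{initialize}_\alg(T)\cdot K$ term and the $\tO(\mathrm{update}(\alg)\cdot TK)$ retrigger term consume no probability budget, so the union bound is not stressed, and the $\tO(\cdot)$ notation absorbs all the $\poly\log(T\cdot|\mathcal{S}|)$ factors introduced by the boosting, scheduling, and partition-tree depth lemmas. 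I expect the only subtlety worth writing out carefully is the bookkeeping that shows the $K$ unpredicted insertions can indeed be charged against the high-probability retrigger bound, rather than forcing us back to an expectation bound.
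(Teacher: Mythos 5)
Your proposal is correct and matches the paper's route: the paper proves this corollary exactly by reusing the anti-element reduction from \Cref{thm:decremental-to-fully-dynamic} and invoking the high-probability version of the framework (\Cref{thm:offline-to-fully-dynamic} via the boosting and backstop machinery), with the $\mathrm{initialize}_\alg(T)\cdot K$ and $\tO(\mathrm{update}(\alg)\cdot TK)$ terms handled just as you describe. Your write-up is simply a more explicit account of the same argument, including the bookkeeping for the $K$ unpredicted insertions that the paper leaves implicit.
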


We make the following remark that allows us to obtain a deterministic, worst-case decremental to fully-dynamic transformation using~\cite{BFNP23}.

\begin{remark}
    The above reduction is not specific to our particular implementation of the incremental to fully dynamic reduction.  In particular, our interpretation of anti-elements can be applied to the incremental to fully-dynamic reduction in the concurrent independent work of \cite{BFNP23} and achieve their deterministic worst-case per update work bound also in the decremental setting.  
\end{remark}

\section{Applications: Offline, Incremental, and Decremental to Fully Dynamic}\label{sec:problems}

We apply our framework to the following problems to obtain fully dynamic algorithms in the predicted-deletion model. A summary of our 
runtimes compared to the best-known fully dynamic algorithms is given in~\cref{table:problems}.
We define the problems we study below. Unless specified, the input to the following problems is an input graph $G = (V, E)$.

\begin{itemize}
    \item \textbf{All-Pairs Shortest Paths (APSP):} For any pair of queried vertices $u$ and $v$, determine the distance from $u$ to $v$. In the planar diagraph APSP problem, we are given a planar, directed and weighted graph. A $c$-approximate APSP algorithm returns an approximation that is within $c$-factor of the 
    distance between the pair.
    \item \textbf{Triconnectivity:} For any pair of queried vertices, determine whether the pair is \emph{3-vertex connected}. Two pairs of vertices $u$ and $v$
    are $3$-vertex connected if and only if $u$ and $v$ remain connected whenever fewer than two vertices are removed.
    \item \textbf{Dynamic Depth-First Search (DFS) Tree:} A dynamic DFS tree is a DFS tree that is reported after each edge insertion or deletion and 
    is a valid DFS tree for the current graph. 
    \item \textbf{All-Pairs Maximum Flow:} For any pair of queried vertices, $s$ and $t$, return the maximum flow between $s$ and $t$. A $c$-approximate 
    maxflow algorithm returns a flow value that is a $c$-factor approximation of the actual maxflow.
    \item \textbf{All-Pairs Minimum Cut:} For any pair of queried vertices, $s$ and $t$, return the minimum cut between $s$ and $t$. A $c$-approximate 
    maxflow algorithm returns a cut value that is a $c$-factor approximation of the actual min-cut.
    \item \textbf{Multi-Commodity Concurrent Flow:} For a set of triples $\{(s_1, t_1, d_1), \dots, (s_P, t_P, d_P)\}$, return the maximum 
    value $\alpha$ where, concurrently 
    for all $i \in [P]$, $s_i$ can send $\alpha \cdot d_i$ units of flow to $t_i$. A $c$-approximate multi-commodity concurrent flow algorithm returns a value 
    that is at least $\alpha/c$. 
    \item \textbf{Uniform Sparsest Cut:} For a given graph $G = (V, E, \mathbf{w})$, return $\Phi_G = \min_{S \subset V} \frac{\sum_{(u, v) \in E, u \in S, v \not\in S} \mathbf{w(u, v)}}{|S| \cdot |V \setminus S|}$.
    \item \textbf{Monotone Submodular Maximization:} Given a ground set $N = [n]$ and a set function $f: N \rightarrow \mathrm{R}^{+}$, function $f$ is monotone
    if $f(A) \geq f(B)$ for any $B \subseteq A \subseteq N$. Function $f$ is submodular if $f(A \cup \{u\}) - f(A) \leq f(B \cup \{u\}) - f(B)$ for any
    $B \subseteq A \subseteq N$ and element $u$. Under cardinality constraint $k$, the problem maximizes $\max_{S \subseteq [n], |S| = k} \left(f(S)\right)$
    for some parameter $1 \leq k \leq n$ and under a matroid constraint $\mathcal{M}$, the problem maximizes $\max_{S \subseteq \mathcal{M}} \left(f(S)\right)$.
    In the dynamic setting, elements can be inserted and deleted from the ground set, and the goal is to maintain $\max_{S \subseteq \mathcal{M}}\left(f(S)\right)$. A $c$-approximate algorithm
    returns a set $S$ that has value at least $c \cdot \max_{S \subseteq \mathcal{M}}(f(S))$.
    \item \textbf{$k$-Edge Connectivity:} For any pair of queried vertices $\{u, v\}$, the pair $\{u, v\}$ is $k$-edge connected if and only if
    $u$ and $v$ remain connected whenever any set of $k-1$ edges are removed.
    \item \textbf{Minimum Spanning Forest (MST):} A dynamic minimum spanning forest is an minimum weight spanning forest that is maintained under edge insertions and deletions.
    \item \textbf{Strongly Connected Components (SCC)/Topological Sort:} Given a directed input graph $G = (V, E)$, maintain the strongly 
    connected components and topological sort, respectively, under edge insertions/deletions.
\end{itemize}

\paragraph{Offline to Fully Dynamic}

In this section, we instantiate several offline-to-fully dynamic algorithms using our~\cref{thm:offline-to-fully-dynamic}. Specifically, we show
transformations for triconnectivity, $k$-edge connectivity, and minimum spanning forest using the divide-and-conquer algorithms of~\cite{chalermsook2021vertex,eppstein1994offline,PSS17}, achieving exponential time improvements on the running times of the best-known 
fully dynamic algorithms when $||\predvec - \realvec||_1 = \tO(T)$ where $\predvec$ is the predicted sequence of update times and $\realvec$
are the real update times.

We first prove that the offline algorithms given in~\cite{chalermsook2021vertex,eppstein1994offline,PSS17} falls under our definition of
$c$-divide-and-conquer algorithms and specify the parameters for each algorithm. After proving these lemmas, our framework directly gives 
fully dynamic algorithms for these problems given predicted-updates.

\begin{lemma}\label{lem:triconnectivity-offline}
    There exists a $c$-divide-and-conquer algorithm for offline triconnecitivity~\cite{PSS17} where $c = 1$.
\end{lemma}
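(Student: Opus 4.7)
The plan is to unpack the offline triconnectivity algorithm of Pătraşcu--Sommer--Sauerwald~\cite{PSS17} and verify each of the four conditions of \Cref{def:divide-and-conquer} in turn. Their algorithm is naturally phrased over a balanced binary tree of time-intervals: at each window $W$, the input is the set of edges that are ``permanent'' on $W$ (active on every day of $W$) together with the interface produced by the parent, and the work consists of computing a sparse triconnectivity certificate (a linear-size structure whose triconnected components agree with the full graph of permanent edges) and passing refined versions to the two children. Conditions (1) and (4) are essentially bookkeeping: I would take $\mathbf{M}$ to be the data structure that, at each internal node, stores the sparse certificate of the permanent-edge subgraph together with the small set of events (insertions/deletions) that occur inside $W$; the answer on any day $t$ is read off by walking the root-to-leaf path through the windows containing $t$ and combining the certificates, exactly as in PSS17.

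For condition (2), I would argue that the computation at $W$ only needs (a) the sparse certificate $M_{\mathrm{parent}(W)}$ handed down by its parent and (b) the unordered set of events of $W$; this is precisely the interface used by PSS17 and does not depend on the ordering of events within $W$, because the object computed is the triconnectivity certificate of the subgraph induced by the edges that are active throughout $W$, which is order-independent. Crucially, the certificate at $W$ is derived only from $M_{\mathrm{parent}(W)}$ and local events, so siblings do not mutate shared memory, satisfying the ``state is passed down a root-to-node path'' requirement.

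The main obstacle, and the heart of the lemma, is condition (3): bounding the work at window $W$ by $\widetilde{O}(|W|)$, i.e.\ obtaining $c = 1$. Here I would invoke the sparsification property that drives PSS17: the certificate handed to $W$ has size $\widetilde{O}(|W|)$ because only edges inserted or deleted inside the parent window can fail to be permanent, plus the boundary edges from the parent's certificate; combined with the random-partition-tree guarantee that the expected number of events in a sibling is within a constant factor of $|W|$ (as used in \Cref{lem:divide-and-conquer-from-incremental}), this keeps the input size to $W$ at $\widetilde{O}(|W|)$. Given an input of that size, PSS17's core subroutine recomputes the sparse triconnectivity certificate in near-linear time $\widetilde{O}(|W|)$; absorbing the polylogarithmic overhead into $\Gamma$ gives exactly the $O(\Gamma \cdot |W|^1)$ bound required.

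Putting the three conditions together yields a $c$-divide-and-conquer algorithm with $c=1$, which is what the lemma asserts. The step I expect to be most delicate is verifying the near-linear per-window work bound rigorously, since it hinges on both the sparsification lemma of PSS17 (to bound the certificate size) and on the randomness of the partition-tree (to bound the number of events per window in expectation); the remaining conditions are straightforward structural observations.
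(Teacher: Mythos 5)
Your proposal is correct and follows essentially the same route as the paper: both arguments rest on the fact that the offline algorithm of \cite{PSS17} is a divide-and-conquer over the event sequence in which each window computes a $3$-vertex sparsifier of size proportional to its non-permanent edges (contracting permanent edges), in time linear in the window size, which is exactly the $c=1$ condition of \Cref{def:divide-and-conquer}; the paper's proof is simply a terser statement of this. (One small slip: \cite{PSS17} is Peng--Sandlund--Sleator, not P\u{a}tra\c{s}cu--Sommer--Sauerwald, but this does not affect the argument.)
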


\begin{proof}
    \cite{PSS17} provides a divide-and-conquer algorithm on the sequence of events where each subproblem is half of its parent. In each subproblem,
    they produce a $3$-vertex sparsifier with size equal to the number of non-permanent edges in the subproblem. The permanent edges in each
    subproblem are used to contract the graph to smaller sizes. For a subproblem $S$ of size $|S|$ (where $S$ contains the non-permanent edges),
    the sparsification can be done in time $O(|S|)$.
    Thus,~\cite{PSS17} gives a $c$-divide-and-conquer algorithm where $c = 1$.
\end{proof}

\begin{lemma}\label{lem:c-edge-offline}
    There exists a $c$-divide-and-conquer algorithm for offline $k$-edge connectivity~\cite{chalermsook2021vertex} for any constant $k\geq 1$ where 
    $c = 1$.
\end{lemma}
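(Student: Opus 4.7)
The plan is to mirror the structure of the proof of \cref{lem:triconnectivity-offline} for triconnectivity, since the high-level strategy of~\cite{chalermsook2021vertex} is parallel to that of~\cite{PSS17}: both construct a sparsifier at each recursive subproblem whose size is proportional to the number of non-permanent edges, and both contract using the permanent edges to shrink the problem before recursing. First, I would recall the offline algorithm of~\cite{chalermsook2021vertex}: on a sequence of update events, they recurse on a balanced binary tree over time, at each internal window $W$ maintaining a $k$-edge-connectivity (vertex) sparsifier that preserves, for every pair of terminals, the value of the min-cut up to $k-1$. The ``terminals'' at a window $W$ are exactly the endpoints of the non-permanent edges of $W$ (i.e., edges that are inserted or deleted during $W$'s sibling window), while the permanent edges of $W$ are contracted into the sparsifier inherited from $\mathrm{parent}(W)$.

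Next, I would verify the three conditions of \cref{def:divide-and-conquer}. The root is initialized with the full graph and the read-write memory $\mathbf{M}$ stores the current sparsifier; each child window only needs $\mathbf{M}_{\mathrm{parent}(W)}$ together with the unordered set of events in $W$ (used to identify terminals and permanent edges), so the second condition holds. For the third condition, the key fact to invoke from~\cite{chalermsook2021vertex} is that their $k$-edge-connectivity sparsifier on $|W|$ terminals has size $\tO(|W|)$ and can be constructed in time $\tO(|W|)$, which makes the per-window work linear in $|W|$ up to the $\Gamma = \poly\log$ factor allowed by \cref{def:divide-and-conquer}. This yields $c = 1$.

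The main obstacle I anticipate is bookkeeping around the interface between the offline sparsifier construction of~\cite{chalermsook2021vertex} and our notion of ``permanent edges'' for a random partition-tree: their original presentation assumes a balanced dyadic partition of the update sequence, whereas \cref{def:random-partition-tree} uses a random partition-tree. I would handle this by observing that their sparsifier construction is oblivious to the shape of the partition-tree, as it only uses the distinction between permanent and non-permanent edges of a window (which is well-defined for any partition-tree) and, as argued in \cref{lem:divide-and-conquer-from-incremental}, the expected number of non-permanent edges of $W$ is $O(|W|)$ in a random partition-tree. Finally, I would note that the output of $k$-edge connectivity queries for any day $t$ can be recovered by querying the sparsifier stored at the leaf window containing $t$, satisfying the fourth condition of \cref{def:divide-and-conquer}.
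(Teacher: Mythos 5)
Your proposal is correct and follows essentially the same route as the paper's proof: both reduce to the observation that the offline algorithm of~\cite{chalermsook2021vertex} recursively splits the event sequence, contracts permanent edges, and builds at each subproblem $W$ a vertex sparsifier of size $\tO(|W|)$ in $\tO(|W|)$ time, which gives $c=1$ under \cref{def:divide-and-conquer}. The only minor difference is that the paper describes the terminals as the queried vertices of the subproblem while you describe them as endpoints of non-permanent edges, but this does not affect the argument, and your extra care about the random (rather than dyadic) partition-tree is consistent with how the framework applies the lemma.
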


\begin{proof}
    \cite{chalermsook2021vertex} provides a divide-and-conquer algorithm that divides the events in approximately equal sizes. 
    Then, in each subproblem $S$ of size $|S|$ (where $S$ contains the non-permanent edges), they construct a vertex sparsifier of 
    size $\tO(|S|)$ with the terminals being the set of queried vertices falling within the subproblem. Each sparsifier can be constructed
    in $\tO(|S|)$ time; hence,~\cite{chalermsook2021vertex} gives a $c$-divide-and-conquer algorithm where $c = 1$.
\end{proof}

\begin{lemma}\label{lem:mst-offline}
    There exists a $c$-divide-and-conquer algorithm for offline minimum spanning forest~\cite{eppstein1994offline} where $c = 1$.
\end{lemma}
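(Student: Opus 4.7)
The plan is to verify that Eppstein's offline MST algorithm~\cite{eppstein1994offline} fits Definition of $c$-divide-and-conquer with $c=1$ by instantiating the memory $\mathbf{M}$ at each window with a contracted/sparsified multigraph, and then showing that the work per window is linear (up to polylog factors) in the window size.

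First, I would describe the per-window computation. At the root window $W_0 = [T]$, initialize $\mathbf{M}_{W_0}$ to be the multigraph of all edges ever present together with their intervals of existence. At an internal window $W$ with state $\mathbf{M}_W$ (a sparsified multigraph on at most $O(|W|)$ edges), for each child $W'$, compute $\mathbf{M}_{W'}$ as follows using Eppstein's two reduction rules: (i) any edge that is present throughout $W'$ and is lighter than some cycle edge whose endpoints are connected by edges present throughout $W'$ must be in every MSF during $W'$ and can be \emph{contracted}; (ii) any edge that is heavier than all edges on some path of edges present throughout $W'$ must never be in any MSF during $W'$ and can be \emph{deleted}. These reductions depend only on the unordered set of update intervals in $W'$ and on $\mathbf{M}_W$, meeting the second criterion of the definition.

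Next, I would invoke the standard Borůvka-style accounting that is the heart of Eppstein's analysis: after the two reductions, the number of edges in $\mathbf{M}_{W'}$ is bounded by $O(\text{number of non-permanent edges in }W')$, and this is in turn $O(|W'|)$ because every non-permanent edge must be inserted or deleted inside $W'$. Together with a static MSF computation on $\mathbf{M}_W$ (via Borůvka or a similar algorithm) used to carry out (i) and (ii), the work at $W$ is $O(|\mathbf{M}_W| \cdot \mathrm{polylog})$, and since $|\mathbf{M}_W| = O(|W|)$ we get work $\tO(|W|)$, which is the required bound with $c = 1$ absorbing polylog factors into $\Gamma$.

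Finally, the MSF at any day $t$ is recovered by walking the root-to-leaf path of windows containing $t$: contracted edges along this path are in the output MSF, and the leaf window gives the MSF of the remaining non-permanent edges active at $t$, yielding the fourth criterion. The main obstacle is being careful about how $\mathbf{M}$ is passed down, since the definition requires $\mathbf{M}$ to flow only along root-to-leaf paths and not be mutated by siblings; I would handle this by letting each child make a local copy of the relevant contracted/deleted edge labels (charged as part of the $\tO(|W|)$ work at $W'$), which preserves the persistent-memory structure required by the framework while still achieving $c=1$.
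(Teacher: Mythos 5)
Your proposal follows essentially the same route as the paper: both verify that Eppstein's offline MSF divide-and-conquer satisfies the definition with $c=1$ by sparsifying each window (contracting edges that are always in the MSF, deleting edges that are never in it) and bounding the sparsifier, and hence the per-window work up to polylogarithmic factors, by the $O(|W|)$ non-permanent edges; the paper phrases this sparsification as a single Kruskal pass with ties broken in favor of non-permanent edges, while you phrase it as Eppstein's two reduction rules plus a static MSF computation, which is the same computation. One small caution: your statement of rule (i) is imprecise --- the criterion for contracting a permanent edge is that it lies in the MSF even when all non-permanent edges of the window are forced to be present (equivalently, it is not the heaviest edge on any cycle in the window's full edge set), not merely that it is ``lighter than some cycle edge.''
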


\begin{proof}
    \cite{eppstein1994offline} provides a divide-and-conquer algorithm that recursively divides the sequence into halves. 
    Then, in each subproblem $S$ where $|S|$ denotes the number of non-permanent edges, %
    they sparsify using the permanent and non-permanent edges in the following way. 
    They run the MST algorithm by successively picking edges in non-decreasing weight 
    using a standard MST algorithm like Kruskal's in $O(|S|\log(|S|))$
    time, \emph{breaking ties by giving preference to non-permanent edges}. All non-permanent edges that are not picked
    are deleted and every picked permanent edge is contracted. This results in a sparsifier that has size $O(|S|)$
    that is passed to children. Hence,~\cite{eppstein1994offline} gives a $c$-divide-and-conquer algorithm where $c = 1$.
\end{proof}

\cref{lem:triconnectivity-offline,lem:c-edge-offline,lem:mst-offline} combined with~\cref{thm:offline-to-fully-dynamic}
directly give the following theorem.

\begin{theorem}[Predicted-Updates Algorithms]\label{thm:predicted-updates-apps}
    Using our framework given in~\cref{alg:high-probability}, 
    we obtain the following fully dynamic algorithms in the predicted-updates model, 
    assuming $||\predvec - \realvec||_1 = O(T)$, where $T$ is the total number of updates, 
    $\predvec$ is a vector of predicted event times, and $\realvec$ is a vector
    of real event times:
    \begin{enumerate}
        \item An algorithm for triconnectivity in $\tO\left(1\right)$ amortized update and query times; %
        \item An algorithm for $k$-edge connectivity in $\tO\left(1\right)$ amortized update time and query time %
        for any constant $k \geq 1$; and
        \item A dynamic minimum spanning forest maintained in $\tO\left(1\right)$ amortized update time.
    \end{enumerate}
\end{theorem}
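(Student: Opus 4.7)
The plan is to derive each item of \Cref{thm:predicted-updates-apps} as a direct instantiation of \Cref{thm:offline-to-fully-dynamic} (together with \Cref{cor:update-query-work} for the two pair-query problems), using \Cref{lem:triconnectivity-offline,lem:c-edge-offline,lem:mst-offline} as the required $c$-divide-and-conquer inputs. The first step is to observe that all three lemmas supply algorithms with $c = 1$, so we land in the $c \le 1$ branch of \Cref{thm:offline-to-fully-dynamic}, which gives total work
\[
\widetilde{O}\bigl(T + \|\predvec - \realvec\|_1\bigr)
\]
with high probability over $T$ updates. Under the hypothesis $\|\predvec - \realvec\|_1 = O(T)$, this collapses to $\widetilde{O}(T)$ total work, and dividing by $T$ yields an amortized $\widetilde{O}(1)$ bound per update for each problem.

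Next, I would handle the query side for triconnectivity and $k$-edge connectivity by appealing to \Cref{cor:update-query-work}. In both cases the underlying offline algorithms (\cite{PSS17} and \cite{chalermsook2021vertex}) produce, at each window of the partition-tree, a vertex sparsifier over the currently relevant terminals, and the per-pair query on the sparsifier at the root window takes $\widetilde{O}(1)$ time. Since \Cref{cor:update-query-work} preserves the underlying worst-case query time of the divide-and-conquer algorithm, the fully dynamic algorithm inherits $\widetilde{O}(1)$ query time. For minimum spanning forest, only an update bound is stated, so it suffices to plug the MST lemma directly into \Cref{thm:offline-to-fully-dynamic}.

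Finally, I would invoke the backstop clause of \Cref{thm:offline-to-fully-dynamic}, composing each predicted-updates algorithm with the corresponding best-known fully dynamic algorithm, so that the stated amortized $\widetilde{O}(1)$ bound degrades gracefully to the state-of-the-art bound in the high-error regime rather than asymptotically exceeding it. No step in this argument is delicate: the real work was already done in proving \Cref{thm:offline-to-fully-dynamic} and in verifying that each offline subroutine satisfies \Cref{def:divide-and-conquer} with $c = 1$. If there is any mild obstacle, it is checking that each cited offline algorithm indeed passes down state through the partition-tree in the sense required by \Cref{def:divide-and-conquer} (i.e., that its per-window computation depends only on the unordered set of events in the window plus $\mathbf{M}_{\mathrm{parent}(W)}$); this is immediate for the sparsifier-based constructions of \cite{PSS17,chalermsook2021vertex,eppstein1994offline}, since each sparsifier is a function only of permanent edges inherited from the parent window and non-permanent edges in the current window.
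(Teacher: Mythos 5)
Your proposal is correct and follows essentially the same route as the paper: the paper derives \Cref{thm:predicted-updates-apps} by combining \Cref{lem:triconnectivity-offline,lem:c-edge-offline,lem:mst-offline} (each giving a $c$-divide-and-conquer algorithm with $c=1$) directly with \Cref{thm:offline-to-fully-dynamic}, so that $\|\predvec - \realvec\|_1 = O(T)$ yields $\widetilde{O}(T)$ total work and hence $\widetilde{O}(1)$ amortized bounds. Your additional appeal to \Cref{cor:update-query-work} for the pair-query problems and the backstop clause is a harmless elaboration of details the paper leaves implicit.
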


\paragraph{Incremental to Fully Dynamic}
To obtain our fully dynamic algorithms, we use the worst-case incremental algorithms given in the following works~\cite{baswana2019dynamic,ChenDWZZ18,chen2020fast,das2022near,FeldmanLNSZ22,holm2020worst,Goranci19,GHS19}
combined with~\cref{cor:update-query-work} to obtain~\cref{thm:app-runtimes}.

\begin{theorem}[Predicted-Deletion Algorithms]\label{thm:app-runtimes}
    Using our framework, we obtain the following fully dynamic algorithms in the predicted-deletion model, assuming $||\predvec - \delvec||_1 = O(T)$, where $T$ is the total number of updates, $\predvec$ is a vector of predicted deletion times, and $\delvec$ is a vector
    of real deletion times:
    \begin{enumerate}
        \item An exact all-pairs shortest path algorithm for planar directed graphs with $\tO(\sqrt{n})$ 
        amortized update time and $\tO(\sqrt{n})$ worst-case query time;
        \item An algorithm for triconnectivity in $\tO\left(1\right)$ amortized update time and $O(\log^3(n))$ worst-case query time;
        \item A dynamic DFS tree reported in $\tO(n)$ amortized update time;
        \item A $O\left(\log^{8k}(n)\right)$-approximate maxflow/min-cut algorithm with $\tO\left(n^{2/(k+1)}\right)$ amortized update time and reports the maxflow between 
        any pair of vertices $s$ and $t$ in $\tO\left(n^{2/(k+1)}\right)$ worst-case query time;
        \item A $O\left(\log^{8k}(n)\right)$-approximate multi-commodity concurrent flow algorithm with $\tO\left(n^{2/(k+1)}\right)$ amortized update 
        time and $\tO(P^2)$ worst-case query time where $P$ is the number of queried pairs;
        \item A $O\left(\log^{8k}(n)\right)$-approximate uniform sparsest cut algorithm with $\tO\left(n^{2/(k+1)}\right)$ amortized update time.
        \item A $0.3178$-approximate monotone submodular maximization algorithm under a matroid constaint of rank $k$ and makes $\tO(\poly(k, \log n))$
        function evaluations per update for any $n, k > 0$. 
    \end{enumerate}
\end{theorem}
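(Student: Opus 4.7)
The plan is to derive each of the seven statements by identifying a worst-case \emph{incremental} algorithm with appropriate update time (and query time, when applicable) for each problem, and then invoking the incremental-to-fully-dynamic reduction in \Cref{cor:update-query-work}. Under the assumption $||\mathbf{p}-\mathbf{d}||_1 = O(T)$, the total work bound $\tO((T + ||\mathbf{p}-\mathbf{d}||_1)\cdot \mathrm{update}(\alg))$ collapses to $\tO(T \cdot \mathrm{update}(\alg))$, giving amortized update time $\tO(\mathrm{update}(\alg))$, while query time is preserved exactly by \Cref{cor:update-query-work}. So the work splits cleanly into seven independent citations plus a uniform application of the framework.

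Concretely, I would proceed as follows. For planar digraph APSP, I would cite the incremental algorithm of Das, Gilbert, Henzinger, Mukhopadhyay, Saranurak, Zang \cite{das2022near} which gives $\tO(\sqrt{n})$ worst-case update time and $\tO(\sqrt{n})$ query time. For triconnectivity, I would use the $O(\log^3 n)$ worst-case incremental algorithm of Holm, Rotenberg, Thorup et al.\ \cite{holm2020worst} along with the $O(\log^3 n)$ query-time data structure from \cite{PSS17}. For the dynamic DFS tree I would use the incremental construction from \cite{baswana2019dynamic, ChenDWZZ18} with $\tO(n)$ worst-case per-update cost. For the maxflow/mincut, multi-commodity concurrent flow, and uniform sparsest cut results, I would cite the expander-based incremental algorithm of Goranci, Henzinger, and Saranurak \cite{Goranci19, GHS19}, which yields an $O(\log^{8k} n)$-approximation with $\tO(n^{2/(k+1)})$ update time, applied coordinate-by-coordinate to the three flow/cut variants with their stated query times ($\tO(n^{2/(k+1)})$ for pair queries, $\tO(P^2)$ for $P$ pairs, $O(1)$ for the global sparsest cut value). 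For monotone submodular maximization under a matroid constraint of rank $k$, I would cite the $0.3178$-approximate incremental algorithm of Feldman, Liu, Norouzi-Fard, Svensson, Zenklusen \cite{FeldmanLNSZ22} with $\tO(\mathrm{poly}(k,\log n))$ function evaluations per update.

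For each invocation I would verify the two conditions required by \Cref{lem:divide-and-conquer-from-incremental} and the incremental-to-fully-dynamic reduction: (i) the update time is worst-case rather than amortized, so that the $c$-divide-and-conquer condition of \Cref{def:divide-and-conquer} holds with the per-window work bound $O(\mathrm{update}(\alg)\cdot |W|)$; and (ii) the underlying algorithm produces a data structure compatible with its stated query procedure at every step, so \Cref{cor:update-query-work} preserves both query time and approximation quality (approximation factors pass through as a black box since our reduction maintains the exact state of the incremental algorithm at each leaf). After these checks, plugging $\mathrm{update}(\alg)$ for each algorithm into the bound of \Cref{cor:update-query-work}, together with $||\mathbf{p}-\mathbf{d}||_1 = O(T)$, yields precisely the claimed seven runtimes.

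The main obstacle I expect is condition (i): a number of the cited incremental algorithms, as originally stated, give only amortized bounds, whereas \Cref{lem:divide-and-conquer-from-incremental} needs worst-case bounds to match the balance condition of a $c$-divide-and-conquer algorithm. For those problems I would either appeal to the worst-case versions proved in later work, or alternatively observe that an amortized bound suffices in our setting once one charges the full sequence of insertions performed at a window against the size of that window; since each element is inserted at most $O(\log T)$ times across the partition-tree (once per ancestor where it is permanent), an amortized $\mathrm{update}(\alg)$ translates, up to a $\log T$ factor already absorbed in $\tO(\cdot)$, into the same per-window work bound. With that adjustment, all seven applications follow uniformly from \Cref{cor:update-query-work}.
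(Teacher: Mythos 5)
Your proposal matches the paper's argument: the paper proves this theorem exactly by citing the worst-case incremental algorithms of \cite{das2022near,holm2020worst,PSS17,baswana2019dynamic,ChenDWZZ18,Goranci19,GHS19,FeldmanLNSZ22} and plugging their update/query times into \Cref{cor:update-query-work} under the assumption $||\predvec - \delvec||_1 = O(T)$, which is precisely your main route. One caution: your fallback claim that an amortized incremental bound would suffice is not sound in this framework (a \textsc{Retrigger} of a small window could then incur work far exceeding $\mathrm{update}(\alg)\cdot|W|$, violating the balance condition of \Cref{def:divide-and-conquer}), but this does not affect the proof since, as the paper does, you rely on the worst-case versions of the cited algorithms.
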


\paragraph{Decremental to Fully Dynamic}

To obtain our fully dynamic algorithms, we use the worst-case decremental algorithms for 
strongly connected components of Roddity~\cite{10.5555/2627817.2627899} and
topological sort (trivial)~\cref{cor:update-query-work} to obtain~\cref{thm:decremental-to-fully-dynamic} when
$||\predvec - \insvec||_1 = O(T)$ where $\predvec$ are the predicted times for insertions, $\insvec$
is the vector of real insertion times, and $T$ is the total number of updates.

\begin{theorem}[Predicted-Deletion Algorithms]\label{thm:app-runtimes-deletion}
    Using our framework, we obtain the following fully dynamic algorithms in the predicted-deletion model, assuming $||\predvec - \insvec||_1 = O(T)$, where $T$ is the total number of updates, $\predvec$ is a vector of predicted insertion times, and $\insvec$ is a vector
    of real insertion times:
    \begin{enumerate}
        \item An algorithm for maintaining strongly connected components with $\tO(m)$ 
        amortized update time time (matching fine-grained lower bounds of~\cite{abboud2014popular}); and
        \item An algorithm for maintaining a topological sort
        in $\tO\left(1\right)$ amortized update time.
    \end{enumerate}
\end{theorem}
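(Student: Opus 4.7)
The plan is to instantiate Corollary~\ref{cor:decremental-to-fully-dynamic} with appropriate worst-case decremental algorithms for each of the two problems, and then verify that the hypotheses of the corollary are satisfied under the assumption $||\predvec - \insvec||_1 = O(T)$. Recall the corollary lifts any decremental algorithm with initialization cost $\mathrm{initialize}_\alg(T)$ and worst-case update cost $\mathrm{update}(\alg)$ into a fully dynamic algorithm in the predicted-insertion model, incurring total work
\[
\tO\bigl(\mathrm{initialize}_\alg(T)\cdot K + \mathrm{update}(\alg)\cdot (T + ||\predvec-\insvec||_1 + TK)\bigr),
\]
where $K$ is the number of inserted elements that were absent from the initial predicted set $S$. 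Under the assumption $||\predvec - \insvec||_1 = O(T)$ and (WLOG) $K = O(1)$ or at worst $K = O(1)$ amortized per update, this simplifies to $\tO\bigl(\mathrm{initialize}_\alg(T) + T\cdot\mathrm{update}(\alg)\bigr)$, which amortizes to $\tO\bigl(\mathrm{update}(\alg)\bigr)$ per update (up to the amortized initialization cost).

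For strongly connected components, the plan is to plug in Roditty's decremental SCC algorithm~\cite{10.5555/2627817.2627899}, which maintains the SCCs of a directed graph of $n$ vertices and $m$ edges under edge deletions with total update work $\tO(mn)$ over the entire sequence of deletions, i.e.\ $\tO(m)$ amortized per deletion, and with initialization cost $\tO(m)$. Feeding this into the corollary yields total work $\tO(m)\cdot T$ over $T$ updates, giving amortized $\tO(m)$ per update, which matches the $\Omega(m^{1-\eps})$ fine-grained lower bound of~\cite{abboud2014popular}. One subtlety is that the corollary formally asks for a \emph{worst-case} per-update bound, whereas Roditty's bound is total (and hence amortized); however, since the partition-tree framework already charges work to windows proportional to the number of deletions that fall within them, the analysis goes through when we instead bill the work of each window by its total cost as a decremental instance, yielding the desired amortized guarantee.

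For topological sort, the plan is even simpler: the trivial decremental algorithm for maintaining a topological order under edge deletions simply keeps the existing order unchanged, since deleting an edge never invalidates a topological ordering. Thus $\mathrm{update}(\alg) = O(1)$ worst-case and $\mathrm{initialize}_\alg(T) = O(T)$ (just sort once at the start). Applying the corollary gives total work $\tO(T)$ and therefore $\tO(1)$ amortized per update in the predicted-insertion model.

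The main subtlety, and the main thing worth checking carefully, is the handling of elements (edges) that are inserted but were never predicted, which the corollary charges at an extra multiplicative $K$ cost through both the reinitialization and the forced retriggering. For the $\tO(m)$ SCC bound to hold in the stated regime, one needs to argue that $K$ is absorbed into the $||\predvec-\insvec||_1 = O(T)$ budget (since each fully unpredicted insertion contributes $T$ to the $\ell_1$ error in the model), so that the $TK$ term is itself at most $O(T\cdot T)$, which is already accounted for in our assumption; the reinitialization term then contributes only $\tO(m)$ per truly new element and can likewise be absorbed. After this accounting, the amortized bounds follow directly from the corollary.
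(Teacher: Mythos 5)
Your overall route is the same as the paper's: the paper proves this theorem simply by plugging a worst-case decremental SCC algorithm (attributed to Roditty~\cite{10.5555/2627817.2627899}) and the trivial decremental topological-sort algorithm (a topological order stays valid under edge deletions) into the decremental-to-fully-dynamic reduction (\cref{thm:decremental-to-fully-dynamic}, \cref{cor:decremental-to-fully-dynamic}), with the $\ell_1 = O(T)$ hypothesis absorbing the error and unpredicted-insertion terms. Your topological-sort half and your treatment of $K$ (each never-predicted insertion already charges $T$ to the $\ell_1$ budget, so $K$ is small under the hypothesis) are essentially the paper's argument, though your aside that ``$TK = O(T\cdot T)$ is already accounted for'' is off --- an $O(T^2)$ term would \emph{not} be consistent with the claimed $\tO(m)$ amortized bound; what saves you is precisely that $\|\predvec-\insvec\|_1=O(T)$ forces $K=O(1)$.

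The genuine gap is in the SCC instantiation. The framework explicitly requires the decremental algorithm to have a \emph{worst-case} (not amortized) per-update bound: this is the balance condition of \cref{def:divide-and-conquer}, and the paper states outright that it is the analogue of requiring a worst-case incremental/decremental bound. You instead invoke a total-time bound for Roditty's algorithm ($\tO(mn)$ total, which over up to $m$ deletions amortizes to $\tO(n)$, not the $\tO(m)$ you state) and claim the analysis ``goes through'' by rebilling each window by its total cost as a decremental instance. That rebilling is not sound here for two reasons. First, in the reduction of \cref{lem:divide-and-conquer-from-incremental} both children of a window continue from the \emph{same} parent state, so amortized potential stored at the end of the parent's run can be spent by both children and, recursively, multiplied down the tree; amortized guarantees do not compose across this branching/persistent reuse of state. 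Second, \cref{lem:single-retrigger-work} needs the expected work of every window $W$ to be $O(\Gamma\cdot|W|)$ so that retriggering a small subtree is cheap; an amortized algorithm can concentrate nearly all of its cost on the few deletions inside one window, and that concentrated cost is re-incurred on every retrigger of that window, breaking the linear-in-$\ell_1$-error accounting. The fix is what the paper does: use a decremental SCC algorithm with a \emph{worst-case} per-deletion bound of $\tO(m)$ and set $\mathrm{update}(\alg)=\tO(m)$ directly in \cref{cor:decremental-to-fully-dynamic}, giving total work $\tO\left(m\cdot(T+\|\predvec-\insvec\|_1)\right)=\tO(mT)$, i.e.\ $\tO(m)$ amortized, which is the bound matching the $\Omega(m^{1-\eps})$ lower bound of~\cite{abboud2014popular}. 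With that substitution (and dropping the rebilling claim), your proof coincides with the paper's.
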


\section{Future Directions}

\paragraph{Extension to other problems.}  
One follow-up direction to our work is to find more problems to which this framework can be applied.  This could help design algorithms that are more efficient and/or simpler than the state-of-the-art fully-dynamic solution.
Alternatively, it would be interesting to see algorithms designed for this model that look very different from this framework.  
A particularly interesting follow-up question is: are there problems for which the a predicted-deletion dynamic algorithm can circumvent a fully-dynamic lower bound when the prediction error is sufficiently low?

\paragraph{Derandomization and/or robustness to adaptive adversary.}  The framework presented in this work has a key reliance on randomization and is not robust to an adaptive adversary.  It is an interesting question whether the framework can be derandomized, or otherwise made robust to adaptive adversaries.  Potential challenges and avenues, along with vulnerabilities of the framework in this work to an adaptive adversary are discussed in depth in \Cref{sec:use-of-randomness}.

\paragraph{Removing amortization.}  The framework presented in this work relies heavily on amortization to achieve the desired runtime bounds.  It is an interesting question of whether problems in this model can meet a per-update runtime bound.  
It is also not immediately clear what role prediction error should play in such a bound.  For example, one could hope to design an algorithm that, on each day \(t\), has the update time of an incremental algorithm, with an additive factor that scales with the ``error at day \(t\)."  This would require an interesting definition of error in a local sense, in contrast to the global measure of error that we consider in this work (\(\ell_1\) error).

\paragraph{Lower bounds.}  It would be interesting to get fine-grained lower bounds for problems in this model.  In particular, this framework accrues logarithmic factors in various places, including a competitive ratio for online metric matching on a line, and the expected size of the smallest node of the partition tree fully containing a subsequence \([a, b]\).  For the approach in this work, some of the logarithmic factors are necessary side effects of lower bounds for these quantities.  It would be interesting to see if there are indeed necessary for any approach to solve dynamic problems in this model, or if they can be circumvented by a different approach.  

\paragraph{Implementation.} A main motivation of this model is that it could potentially allow practitioners to take advantage of known structure in real-world data that classical fully-dynamic algorithms are oblivious to.  It would be interesting to see how algorithms designed in this framework compare with the best algorithms and heuristics in practice.

\addcontentsline{toc}{section}{Acknowledgements}
\section*{Acknowledgements} We are very grateful to Gramoz Goranci and anonymous reviewers for providing very helpful references for application problems we address in this paper.  We are also grateful to Pat Sukprasert for pointing us to references on online bipartite matching.  

\appendix
\section{Appendix}\label{appendix}

\subsection{Connection to Metric Tree Embeddings}\label{app:metric-embedding}

The local error guarantee of our random partition-tree essentially provides a randomized scheme to embed a particular line metric where the points are at integer increments, into a tree metric of low depth. Precisely, we want a low-depth partition-tree such that, for two points \(a, b\), the size (number of descendant leaves) of the lowest common ancestor of \(a\) and \(b\) is of size \(O(|a-b| \cdot \log T)\) in expectation, where \(T\) is the number of points in the metric space.  This is morally the same as finding a low-depth tree embedding of this metric space.  One direction is seen by setting edge lengths of a low-depth partition-tree such that for a parent node \(p\) and its child node \(c\), \(\mathrm{dist}(p, c) = \frac{1}{2}(|p| - |c|)\).  Thus, the distance between two leaves is precisely the size of the lowest common ancestor in the partition-tree, and this gives a low-depth tree embedding of the space.  The other direction is seen by noting that for any embedding of this particular metric space into a low-depth hierarchically well-separated tree (HST), e.g. \cite{FRT03}, the distance between two points \(a, b\) in the tree is at least the size of the lowest common ancestor in the tree, and thus the HST gives us a low-depth partition-tree. 
A line metric is, of course, itself a tree metric, and one can therefore find a ``tree embedding" with no distortion.  However, for our algorithm, we require an embedding into a tree metric that is low-depth.  This can also be achieved by other distributions on trees, e.g. the classic tree-embedding scheme of \cite{FRT03}.  

\subsection{Use of Randomness}
\label{sec:use-of-randomness}

One implication of the way we use randomness in our algorithmic framework is that we \emph{do not} expect this framework to be robust to the
strongest form of an \emph{adaptive adversary} where the 
adversary has access to the random bits used within our algorithm. 
(Although in many learning-augmented models, the definition of an adaptive adversary is often also not immediately clear.)  
Thus, it is an interesting question whether such a result can be derandomized, as a deterministic algorithm must be robust to an adaptive adversary.  

There are two places where we use randomness in our predicted-deletion framework.
\begin{enumerate}[(1)]
    \item Random partition-tree: this allows us to argue that in effect that two days that are close in time, correspond to leaves in the partition-tree that are close in the tree, in expectation.  
    \item Online metric matching: the randomized harmonic algorithm for online metric matching can be implemented quickly.
\end{enumerate}

We discuss challenges and potential avenues to derandomize these components, along with a strategy that is open to an adaptive adversary.  

\paragraph{Random partition-tree.}  The main benefit of a random partition-tree is that it preserves lengths in expectation.  That is, \Cref{lem:random-tree-preserves-lengths} tells us that for \(a, b \in [T]\), the expected size of the lowest common ancestor of \(a\) and \(b\) in a random-partition-tree drawn over \([T]\) is \(O(|b - a| \log T)\).  

It is worth noting that this guarantee does \emph{not} hold with high probability.  In fact, the probability that the size of the lowest common ancestor is \(T\) (the root node), is \(\frac{|b - a|}{T - 1}\) (i.e. this is the probability that the first divider chosen in the tree falls in the range \([a, b]\)).  

This indicates that this expectation bound is morally fulfilling two functions: it is both avoiding some bad cases, and implicitly accounting for some amortization.  
To illustrate this, consider the following examples. 

\begin{example}[Uneven error density]
Let \(T\) be \(2^i\) for some even \(i\).  Consider \(\sqrt{T}\) elements that are added on days \(1, 2, \dots, \sqrt{T}\) with predicted deletion times 
\[T/2 + 1, T/2 + 2, \dots, T/2 + \sqrt{T} .\]
Their actual deletion times will be 
\[T/2 + 1 - \sqrt{T}, T/2 + 2 - \sqrt{T}, \dots, T/2 .\]  
The contribution to the \(\ell_1\) prediction error is \(\sqrt{T}\) for each of \(\sqrt{T}\) elements, for a total of \(T\).  
\label{ex:uneven-error-density}
\end{example}

The deterministic perfectly balanced partition-tree will make the first split between \(T/2\) and \(T/2 + 1\).  So the lowest common ancestor of the predicted deletion time and the actual deletion time for each of these elements is the root node.  Thus, each of these early deletion events will trigger \(O(T)\) recomputation, for a total of \(O (T \sqrt{T}) \) work.  (We are being informal about the exact logarithmic factors here, for illustrative purposes.)

The randomized tree, on the other hand, guarantees that the expected work of triggering recomputations for these events is \(O (T \log T)\).  This is only a \(\log T\) factor off from the actual \(\ell_1\) error, as opposed to the deterministic tree that is a \(\sqrt{T}\) factor off.  

\Cref{ex:uneven-error-density} illustrates that if there is a portion of the sequence of days that is has a relatively high density of error that coincides with a ``deep split" of our tree, then the work of our algorithm could be high.  The random partition-tree allows us to avoid this by placing the ``splits" of the tree randomly.  

Now, we consider the counterpart problem. 

\begin{example}[Even error density]
    Let \(T = 2i\) for some \(i\).  Label the days \(\{1, \dots, T\}\) Consider \(i\) elements that such that element \(e_j\) for \(j \in \{0, \dots, i - 1\}\) is inserted on day \(2j + 1\) with predicted deletion \(2j + 5\).  \(e_j\)'s actual deletion time is \(2j + 2\).   The last elements \(e_{i - 2}\) and \(e_{i - 1}\) alone will have predicted deletion on the actual deletion days of \(T - 2\) and \(T\), respectively.  

    The contribution to the \(\ell_1\) prediction error is \(3\) for each of \(\frac{T}{2} - 2\) elements, for a total error of \(\frac{3T}{2} - 6 \in O(T)\).
    \label{ex:even-error-density}
\end{example}

In this example, except for the first and last possible divider, every other divider splits some actual deletion time from its corresponding deletion time.  
This says that in a deterministic partition-tree, and in most instances of a randomized partition-tree, there is some error of size 3 that triggers \(O(T)\) recomputation.  

While this is a large blow up, this does not occur for \emph{most} of the errors, and specifically for \Cref{ex:even-error-density} one can show that both the deterministic tree and the randomized tree will trigger \(O(T)\) recomputation total, which is comparable to the size of the error, \(O(T)\).
This demonstrates the implicit amortization that happens when we consider the \emph{expected} cost associated with a specific error in the randomized partition-tree. 

These two examples demonstrate that a deterministic scheme could likely achieve the amortization of the random partition-tree and handle inputs like \Cref{ex:even-error-density}.  The challenge lies in designing a deterministic scheme that can avoid splitting areas of high error density, as in \Cref{ex:uneven-error-density}.

This also suggests a strategy that an adaptive adversary could use against a random partition-tree that would break the guarantee, even if the adversary does not have direct access to the random tree.  An adversary could learn about the structure of the tree from seeing the output solutions of the framework.  As an example, consider a divide-and-conquer algorithm that computes some greedy solution over permanent elements.  From seeing the daily outputs of the divide-and-conquer algorithm, the adversary could learn about the order in which the elements were given to the divide-and-conquer algorithm.  This is highly informative about what level of the tree each element becomes permanent at, and since the adversary knows the input, they learn about where the windows of each level end.  Then, to make the algorithm incur high cost, the adversary could choose to delete elements just before ``deep splits" in the tree.  

\paragraph{Online metric matching.}  For this framework, we want an algorithm for online metric matching on the line metric that achieves a logarithmic (or polylogarithmic) competitive ratio, and can be implemented in polylogarithmic update time.  

In the online metric matching problem, the algorithm is given a set of ``servers".  Then, ``requests" arrive online and report their distances (cost to be matched) to each of the servers. For each request that arrives, the algorithm must immediately match it irrevocably to one of the servers, and that server may not be matched to any future requests.  The objective function is the total cost of the final matching.  The distance function must be consistent with some metric space.  

The online metric matching problem was introduced independently by Khuller, Mitchell, and Vazirani \cite{KMV91}, and by Kalyanasundaram and Pruhs \cite{KP93}.  Both of these works show a deterministic \((2n - 1)\) competitive algorithm, where \(n\) is the number of points in the metric space.  This is also optimal for deterministic algorithms, there is no deterministic algorithm that achieves competitive ratio better than \((2n - 1)\) in all metric spaces.  
It is possible to circumvent this lower bound using randomization.  The best known randomized algorithm for general metric spaces is due to Bansal, Buchbinder, Gupta, and Naor, and achieves a competitive ratio of \(O(\log^2 n)\) \cite{BBGN07}.

For line metrics specifically, it is possible to do better.  On the randomized side, Gupta and Lewi provide multiple randomized algorithms achieving competitive ratio \(O(\log n)\) \cite{GL12}, one of which we use in our construction.  On the deterministic side, Raghvendra shows that the deterministic \emph{robust matching algorithm} works well in a variety of settings, and in particular achieves competitive ratio \(O(\log n)\) for line metrics \cite{Rag16, Rag18}.  
In terms of competitive ratio, these results are close to optimal, as Peserico and Scquizzato show that no algorithm, deterministic or randomized, can achieve competitive ratio \(o(\sqrt{\log n})\) for the line metric \cite{PS21}.  

From the perspective of our framework, we could hope to use Raghvendra's robust matching algorithm to achieve the desired competitive ratio deterministically.  However, it is not immediately clear how to implement this algorithm with polylogarithmic update time.  The algorithm maintains an offline matching that informs the choices for the online matching.  The offline matching is maintained by finding an augmenting path on the associated flow network on each step. 
Without additional assumptions, finding such an augmenting path can take \(O(n^2)\) work.  If we were to use the robust matching algorithm as is, it would actually be the dominating factor of our runtime.  
However, the robust matching algorithm as stated can be applied to any metric space.  It is an interesting question of whether, for the line metric specifically, the robust matching algorithm could be implemented much faster.  

We note that it is possible to implement deterministic algorithms for our setting that achieve slightly relaxed guarantees.  In particular, the greedy algorithm (always assigns a request to the nearest available server) achieves the following guarantee for our problem.  

\begin{restatable}[Greedy matching guarantee]{lemma}{greedymatchingguarantee}
\label{lem:greedy-matching-guarantee}
    Given an element insertion $e$ with predicted deletion time $\trep$, we assign $e$ greedily to the nearest free time slot 
    $\widehat{\trep}$. Such an algorithm produces an assignment of elements to time slots such that \[\left(\sum_{i = 1}^t 
    |\trep^i - \widehat{\trep^i}|\right) \leq 4t \cdot \maxe \cdot \log(T) + T \cdot \maxe,\]
    where \(\maxe\) is the minimum \(\ell_\infty\) error of any feasible solution.  That is, over all possible assignments \(t_{\text{baseline}}\) of deletion times, such that \(t_{\text{baseline}}^i \neq t_{\text{baseline}}^j\) for \(i \neq j\), 
    \[\maxe = \min_{t_{\text{baseline}}} \max_{i} \left|t_{\text{baseline}}^i - \trep^i \right|.\]
\end{restatable}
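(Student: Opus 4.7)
The plan is to bound the total greedy displacement by combining a local structural property of the greedy rule with the global density constraint implied by the existence of a feasible baseline. Let $d_i \defined |\widehat{\trep^i} - \trep^i|$ denote the displacement of the $i$-th request. The starting observation is that greedy's ``nearest free slot'' rule forces a local congestion witness: if $d_i > 0$, then every integer slot strictly between $\trep^i - d_i$ and $\trep^i + d_i$ is already occupied by one of the first $i - 1$ greedy assignments, since any such slot is strictly closer to $\trep^i$ than $\widehat{\trep^i}$ and would have been chosen instead. In particular, the interval of radius $d_i$ around $\trep^i$ contains at least $2 d_i - 1$ previously placed requests.

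Next, I would exploit the baseline. Because there is a feasible assignment $t_{\text{baseline}}$ with $\max_i |t_{\text{baseline}}^i - \trep^i| \le \maxe$, the predictions cannot be too locally dense: any window of $w$ consecutive slots can contain at most $w + 2\maxe$ predictions, since the baseline matches those requests injectively into slots lying in a window of width $w + 2\maxe$. This density bound is the global constraint that limits how often greedy can be forced to make large moves.

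I would then stratify requests by the magnitude of their displacement. For $\ell = 0, 1, \dots, \lceil \log T \rceil$, let $L_\ell \defined \{ i \le t : 2^{\ell} \maxe < d_i \le 2^{\ell+1} \maxe \}$. Any $i \in L_\ell$ witnesses at least $2^{\ell+1} \maxe - 1$ previously placed requests in the window of radius $2^\ell \maxe$ around $\trep^i$; combining this with the density bound and a careful charging of each previously placed request to at most $O(1)$ ``witnessing'' requests at level $\ell$ shows $|L_\ell| = O(t / 2^\ell)$. Summing the displacement over levels yields
\[
\sum_{\ell} |L_\ell| \cdot 2^{\ell+1}\maxe \;=\; \sum_{\ell} O(t \maxe) \;=\; O(t \maxe \log T),
\]
which, after tracking constants, gives the $4 t \maxe \log T$ term. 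The residual $T \maxe$ term arises from the boundary: a small number of requests near the ends of $[1,T]$ may have no nearby free slots in one direction and can be displaced by up to $T$; since at most $O(\maxe)$ such ``boundary-constrained'' requests can exist (they must all lie near a boundary already saturated by an excess of $O(\maxe)$ predictions), their aggregate contribution is at most $T \maxe$.

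\textbf{Main obstacle.} The delicate step is the charging argument bounding $|L_\ell|$. The local structural claim gives a witness set of previously placed requests, but these are indexed by greedy \emph{slots}, not by their \emph{predictions}, whereas the baseline density bound controls predictions. Bridging this gap requires showing that each previously placed greedy slot near $\trep^i$ either has its own prediction within an $O(2^\ell \maxe)$ window of $\trep^i$, or can itself be charged to yet-earlier requests via a chain that ultimately terminates at a nearby prediction. Making this chain argument rigorous while avoiding double-counting across requests within $L_\ell$ (and across levels) is the heart of the proof; an alternative worth pursuing if the direct charging becomes unwieldy is a potential-function argument that tracks, for each prefix of requests, a weighted measure of the symmetric difference between greedy's partial matching and the baseline's partial matching, decreasing by $\Theta(1/\log T)$ on each level transition.
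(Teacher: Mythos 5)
You have the right ingredients — the density bound (any window of $w$ consecutive slots contains at most $w + 2\maxe$ predictions, because the baseline matches them injectively into a window of width $w + 2\maxe$) and the local witness property of greedy (all slots strictly closer to $\trep^i$ than $\widehat{\trep^i}$ are occupied) — but the proof is not complete, and the missing piece is exactly the step you flag as the "main obstacle." The claim $|L_\ell| = O(t/2^\ell)$ is never established: the witnesses you collect are \emph{occupied slots}, while the density bound constrains \emph{predictions}, and with greedy on a line these genuinely decouple. A saturated cluster of predictions far to the left can cascade occupants rightward, so the slots filling the radius-$2^\ell\maxe$ window around $\trep^i$ may belong to requests whose predictions are nowhere near that window; your proposed chain argument (or the potential-function alternative) is precisely where all the work lies, and as written there is no control on chain length or on double-counting of the same occupied slots as witnesses for many different $i \in L_\ell$. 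The boundary/$T\maxe$ part is similarly asserted rather than proved ("at most $O(\maxe)$ boundary-constrained requests" needs an argument). So the plan is plausible — indeed, in worst-case cascading instances one expects roughly $t/2^\ell$ requests displaced by about $2^\ell\maxe$, so the stratified bound is likely tight — but the proposal stops short of proving its key lemma.

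The paper avoids this witness-versus-prediction mismatch by counting in the dual direction: instead of bounding, per request, how many occupied slots explain its displacement, it bounds, per \emph{slot} $t'$, how many later requests ever pass over $t'$. A request crosses $t'$ only if, on the relevant side, some adjacent segment has no free slot, and applying the density bound at dyadic scales $2^i$, $i \in [\log_2 T]$, shows that at most $2\maxe$ "excess" requests per side and per scale can be forced across $t'$, i.e.\ at most $4\maxe\log_2 T$ crossings of any slot in total. Formally this is packaged as a coin scheme: each placed update is endowed with $4\maxe\log T$ coins, each new arrival takes one coin from every occupied slot it passes, and the scale-by-scale argument shows no slot runs out; summing coins over the $t$ updates gives the $4t\maxe\log T$ term, with the remaining $T\maxe$ absorbing the initial prediction offset. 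If you want to salvage your level-decomposition route, you would need an analogous per-slot (or per-interval) accounting to break the chains — at which point you have essentially rederived the paper's argument.
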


This is proven in \Cref{sec:deterministic-l-inf-bound}.  Note that we leverage the integral structure of our particular problem to get this guarantee (i.e. the requests and server are at integral positions, and there is exactly one server per day.)  This algorithm can be implemented quickly using a union-find data structure in the same way as the harmonic algorithm (see proof of \Cref{lem:preprocessing-schedule}.)

In particular, if we were to use this in place of randomized metric matching in our framework, we would achieve expected total work 
\[O \left( T \left(1 + ||\mathbf{p} - \mathbf{d}||_\infty \log^2 T \right) \cdot \mathrm{update}(A)  \cdot \log^2 T \cdot \log \log (|\mathcal{S}| T) \right),\]
and the only remaining randomized component would be the random partition-tree.  Note that \(T \cdot ||\mathbf{p} - \mathbf{d}||_\infty \) is always at least \(||\mathbf{p} - \mathbf{d}||_1\), and in many instances is much larger.

\subsection{Preprocessing Proofs}\label{app:preprocess}

In this section, we provide the proofs for preprocessing that we omitted in the main body of our paper. We first recall~\cref{lem:preprocessing-schedule}.

\initial*
\begin{proof}
    We iterate through the events in $P$ in arbitrary order; the reason is that the online metric matching algorithm can handle requests that occur in any order. 
    At each predicted event, we match it to an available day assignment, where only one event can be scheduled on each day.  
    Then, this problem becomes an instance of online metric matching, where the predicted events are requests, 
    and there is one available server at each day.  In particular, our problem instances are over the line metric.

    We implement the harmonic algorithm for online metric matching on line metrics, due to Gupta and Lewi \cite{GL12}.  In this algorithm, a request is matched to either its closest open server on the left, or the closest open server on the right.  Let \(d_\text{left}\) be the distance from the request to the closest open server to the left, and \(d_\text{right}\) be the distance from the request to the closest open server on the right.  Then the request is assigned to the left server with probability \(\frac{1/d_\text{left}}{1/d_\text{left} + 1/d_\text{right}}\), and to the right server otherwise. 

    We note that this algorithm can be implemented in amortized \(O(\log^*(T))\) time per update, as shown in \Cref{alg:random-online-matching}.  We maintain a union-find data structure that keeps track of contiguous blocks of assigned servers where initially each $t \in [T]$ is assigned to a different disjoint set.  
    For each block, we maintain metadata with the first open server to the left and the first open server to the right.  Then, when a new request arrives, finding the closest open servers can be done by looking up the request's representative element in the union-find data structure.  One request is assigned per arrival, so at most three sets must be merged for each arrival.  Since union-find can be implemented with amortized \(O\left(\log^* (T)\right)\) update time, and we use a constant number of calls to union-find (with constant additional work), 
    we can implement this algorithm with amortized \(O\left(\log^* (T)\right)\) update time.

    Now we argue that it meets our correctness guarantees.  The harmonic algorithm is \(O(\log(T))\) competitive, for \(T\) requests.  Let \(\realvec\) be the vector of the real
    update days of each event, \(\predvec\) be the predicted update times of each event, and \(\assignedvec\) be the assigned update days of each event.  We have 
    \begin{align*}
        ||\assignedvec - \realvec||_1 &\le ||\assignedvec - \predvec||_1 + ||\predvec - \realvec||_1 \\
        &\le O(\log (T)) \cdot ||\mathbf{c} - \predvec||_1 + ||\predvec - \realvec||_1, 
    \end{align*}
    where \(\mathbf{c}\) is the feasible assignment with the lowest \(\ell_1\) distance to \(\predvec\), in hindsight.  Since we know that \(\realvec\) is also a feasible assignment, we have that 
    \begin{align*}
        ||\mathbf{c} - \mathbf{p}||_1 &\le ||\mathbf{p} - \realvec||_1
    \end{align*}
    Thus, 
    \begin{align*}
        ||\assignedvec - \realvec||_1 &\le O(\log (T)) \cdot ||\predvec - \realvec||_1.
    \end{align*}

    Finally, to resolve the ordering constraints among events on the same element $e$, we iterate through all of our assigned days and maintain a set of all elements for which 
    we have seen a deletion event but not an insertion event. Suppose that for element $e$ we see an insertion event on day $t_{e, i}$ \emph{after} its deletion event on day
    $t_{e, d}$, we reassign the day for the deletion event to $t_{e, i}$. If there are multiple deletion events that occur before an insertion event, we reassign 
    each deletion event to the next available insertion event and to $T + 1$ otherwise. 
    Since we are guaranteed at most one event per day before our reassignment of events and our assignment procedure ensures that we reassign at most one deletion 
    event for each element $e$ to its corresponding insertion event, the maximum number of events assigned to each day $t \in [T]$ is two and there is at most one insertion 
    event and one deletion event assigned to $t$. We produce a new vector $\assignedvec'$ after this final processing.

    We now prove that our new $\assignedvec'$ also satisfies our error bounds. We show the following inequality since all insertion events are assigned the same days as $\assignedvec$
    and only deletions are potentially reassigned:
    \begin{align*}
        ||\assignedvec' - \realvec||_1 &\leq ||\assignedvec - \realvec||_1 + ||\mathbf{a}'_{0, \text{del}} - \realvec_{\text{del}}||_1,
    \end{align*}

    where $\mathbf{a}'_{0, \text{del}}$ and $\realvec_{\text{del}}$ are the assigned and real days, respectively, of the deletion events. We now show that 
    $||\mathbf{a}'_{0, \text{del}} - \realvec_{\text{del}}||_1 \leq ||\assignedvec - \realvec||_1$ through several cases. Let $\hat{t}_{i, e, 1} < \hat{t}_{i, e, 2} < \cdots$ 
    be the insertion
    times for element $e$ in $\assignedvec$, $\hat{t}_{d, e, 1} < \hat{t}_{d, e, 2} < \cdots$ be the deletion times for element $e$ in $\assignedvec$,
    $t_{i, e, 1} < t_{i, e, 2} < \cdots$ be the insertion times for element $e$ in $\realvec$, and $t_{d, e, 1} < t_{d, e, 2} < \cdots$ be the deletion 
    times for element $e$ in $\realvec$. We assign $t_{i, e, j} = T + 1$ for any updates where we have a greater number of predicted events than real events. 
    Then, we show, for any $j$:

    \begin{enumerate}
        \item If $\mathbf{\hat{t}_{i, e, j} < t_{i, e, j} < t_{d, e, j}}$ or $\mathbf{t_{i, e, j} < \hat{t}_{i, e, j} < t_{d, e, j}}$, then $|t_{d, e, j} - \hat{t}_{i, e, j}| < 
        |\hat{t}_{d, e, j} - t_{d, e, j}|$. 
        \item Otherwise, if $\mathbf{t_{i, e, j} < t_{d, e, j} < \hat{t}_{i, e, j}}$, then $|t_{d, e, j} - \hat{t}_{i, e, j}| < 2 \cdot |\hat{t}_{i, e, j} - t_{i, e, j}|$.
    \end{enumerate}

    Hence, via the casework above, we show that 
    
    \begin{align*}
        ||\assignedvec' - \realvec||_1 \leq 2 \cdot ||\mathbf{a}_{0, \text{ins}} - \realvec_{\text{ins}}||_1 + ||\mathbf{a}_{0, \text{del}} - \realvec_{\text{del}}||_1 
        \leq 2 \cdot ||\assignedvec - \realvec||_1,
    \end{align*}

    where $\mathbf{a}_{0, \text{ins}}$ and $\realvec_{\text{ins}}$ are the assigned and real days, respectively, of the insertion events. This proof can be extended beyond insertions 
    and deletions to any constant number of ordering constraints of events on elements. Such a reassignment to produce $\assignedvec'$ requires $O(T)$ additional work.
\end{proof}

\begin{algorithm}
\caption{Fast Randomized Online Matching}
\label{alg:random-online-matching}
\begin{algorithmic}[1]
\Require{Predicted sequence of updates $P$.}
\Ensure{Sequence of updates $\matchP$ with at most two events on each day.}
\State Initialize a union-find data structure over \(T = \max\left(|P|, \max_{(e, type, t, i) \in P}(t)\right)\) elements where the structure has procedures \find and \union
and all $t \in [T]$ are assigned to disjoint sets.
\For {\(t \in T\)}
    \State \(\textbf{left}_t := t - 1\) if \(t > 1\), \(\texttt{null}\) else.
    \State \(\textbf{right}_t := t + 1\) if \(t < T\), \(\texttt{null}\) else.
    \State \(\textbf{assigned}_t := \text{false}\)
\EndFor
\For{event $E = (e, t) \in P$} %
    \State \(t' := \find(t)\)
    \State \(d_\text{left} := t - \textbf{left}_{t'}\)
    \State \(d_\text{right} := \textbf{right}_{t'} - t\)
    \State  \(\text{goleft} := \textbf{true}\) with probability \(\frac{1/d_\text{left}}{1/d_\text{left} + 1/d_\text{right}}\), \textbf{false} else.
    \If{\text{goleft}}
        \State \(\textbf{assigned}_{\textbf{left}_{t'}} := \text{true}\)
        \If{\(\textbf{assigned}_{(\textbf{left}_{t'} - 1)}\)}
            \State \(\union(\textbf{left}_{t'} - 1, \textbf{left}_{t'})\)
            \State \(\union(\textbf{left}_{t'}, t)\)
            \State \(t'' := \find(t)\)
            \State \(\textbf{left}_{t''} := \textbf{left}_{(\textbf{left}_{t'} - 1)}\)
            \State \(\textbf{right}_{t''} := \textbf{right}_{t'}\)
        \Else
            \State \(\union(\textbf{left}_{t'}, t)\)
            \State \(t'' := \find(t)\)
            \State \(\textbf{left}_{t''} := \textbf{left}_{t'} - 1\)
            \State \(\textbf{right}_{t''} := \textbf{right}_{t'}\)
        \EndIf
    \Else 
        \If{\(\textbf{assigned}_{(\textbf{right}_{t'} + 1)}\)}
            \State \(\union(\textbf{right}_{t'} + 1, \textbf{right}_{t'})\)
            \State \(\union(\textbf{right}_{t'}, t)\)
            \State \(t'' := \find(t)\)
            \State \(\textbf{left}_{t''} := \textbf{left}_{t'}\)
            \State \(\textbf{right}_{t''} := \textbf{right}_{(\textbf{right}_{t'} + 1)}\)
        \Else
            \State \(\union(\textbf{right}_{t'}, t)\)
            \State \(t'' := \find(t)\)
            \State \(\textbf{left}_{t''} := \textbf{left}_{t'}\)
            \State \(\textbf{right}_{t''} := \textbf{right}_{t'} + 1\)
        \EndIf
    \EndIf
\EndFor
\end{algorithmic}
\end{algorithm}

\partitiontreedepth*

\begin{proof}
    Consider an arbitrary day \(t \in [T]\) which is a leaf in the partition-tree as defined in~\cref{def:random-partition-tree}. 
    We analyze the depth of the leaf containing \(t\) in the partition-tree for every $t \in [T]$.  
    Consider the path from the root of the partition-tree to this leaf, that is the path of all windows containing \(t\).  

    Consider some non-leaf window \(W\) of size \(S\) that contains \(t\); the size of window $W$ is a random variable with value $S$ conditioned
    on a set of events.  We know that with probability at least \(\frac{1}{2}\), the random partition of \(W\) results in both children of \(W\) having size at most \(\frac{3S}{4}\).  We call this the ``good event."  (Consider the center \(\lfloor \frac{S}{2}\rfloor\) elements of \(W\).  There are strictly more dividers that border one of these center elements than there are dividers that do not border these elements.  If any of these dividers is chosen, the smaller child has size at least \(\frac{S}{4}\), thus both children have size at most \(\frac{3S}{4}\).)

    Good events can occur at most \(\log_{\frac{4}{3}}(T) = \frac{\ln(T)}{\ln \frac{4}{3}} < 4 \ln (T)\) times over the windows on the root-to-leaf path to $t$ 
    before there is one element left and the path ends.  

    Now we bound the probability that \(t\)'s path in the partition-tree has length more than \(k \ln (T)\), for some constant \(k \ge 36\).  For this to occur, it is necessary that in the partitioning for the first \(k \ln(T)\) windows on \(t\)'s path, the good event occurred fewer than \(4 \ln (T)\) times.  

    We can bound the probability that good events occur fewer than $4 \ln(T)$ times with the Chernoff-Hoeffding inequality. Furthermore, for this 
    analysis, we assume that the number of windows along $t$'s path is at least $k\ln(T)$; otherwise, we have our desired property.
    Let \[X = X_1 + \dots + X_{k \ln(T)},\] 
    where \(X_j\) is the indicator random variable that the good event occurred at the \(j\)th window of \(i\)'s path.  We have \(\mathbf{E}[X_j] \ge \frac{1}{2}\) and \(\mathbf{E}[X] \ge \frac{k\ln (T)}{2}\).  Furthermore, each good event is independent of previous good events. Thus,
    \begin{align*}
        \textbf{Pr} \left[ X \le \mathbf{E}[X] - t \right] &\le e^{-\frac{t^2}{3 \mathbf{E}[X]}} \\
        \textbf{Pr} \left[X \le 4 \ln(T) \right] &= \textbf{Pr} \left[X \le \mathbf{E}[X] - (\frac{k}{2} - 4) \ln(T) \right]\\
        &\le e^{- \frac{((\frac{k}{2} - 4) \ln(T))^2}{\frac{3k\ln(T)}{2}}} \\
        &= e^{- (\frac{k}{6} - \frac{8}{3} + \frac{32}{3k}) \ln(T)} \\
        &\le e^{- \frac{k}{12} \ln T} &\text{for }k \ge 28\\
        &= T^{-\frac{k}{12}}.
    \end{align*}
    This shows that any fixed day \(t\) has depth at most \(k \ln T\) in the random partition-tree, with probability at least \(1 - T^{-\frac{k}{12}}\).  

    Now, we can take a union bound over the \(T\) days, and say that all \(T\) days have depth at most \(k \ln T\) in the random partition-tree with probability  
    \[\ge 1 - T \cdot T^{-\frac{k}{12}} = 1 - T^{-\frac{k}{12} + 1} \ge 1 - T^{-\frac{k}{24}}.\]

    To bound the expected depth of the tree, 
    we can multiply the probability of the depth being low by the depth, 
    and, otherwise, multiply the maximum possible depth of \(T\) by the remaining probability.  Using the above bound for \(k = 96\), we get  
    \[\mathbf{E}[\mathrm{depth}] \le 96 \ln T \cdot (1 - T^{-4}) + T \cdot T^{-4} = O(\log(T)).\]
\end{proof}

\maintainschedule*

\begin{proof}
    \Cref{lem:preprocessing-schedule} guarantees that at the end of preprocessing, we have at most two events scheduled per day.  Call this assignment of events \(\mathbf{a}_0\).  Over the course of the algorithm, events are rescheduled via a doubling search procedure using \textsc{ProcessEventLaterThanPrediction}.  For an event \(e\), call this sequence of reschedulings \(\mathbf{a}_1(e), \mathbf{a}_2(e), \dots\), where \(\mathbf{a}_i(e) = \mathbf{a}_0(e) + 2^i\), if it exists. 
    Thus the set of existent \(\mathbf{a}_i(e)\), for event \(e\), is some subset of the events in \(\mathbf{a}_0\), shifted by exactly \(2^i\).  Thus, each \(\mathbf{a}_i\) contributes at most a constant number of scheduled events on each day.  

    Now, we note that since there are only \(T\) possible days, an event can only be rescheduled \(O(\log T)\) times.  Thus, \(\mathbf{a}_i(e)\) only exist for \(i \in [k \cdot \log T)]\) for some constant $k \geq 1$.  
    Together, this results in at most \(O(\log T)\) events (re)scheduled for any given day. There are $T$ days, resulting in
    $O(T \log T)$ reschedules.

    Now, consider the work done by our algorithm to maintain the schedule on a given day \(t\).  The algorithm must potentially reschedule the prediction of one real event for day \(t\) using \textsc{ProcessEventEarlierThanPrediction}, and reschedule (via the guess and double procedure in \textsc{ProcessEventLaterThanPrediction}) the predicted events that were scheduled for day \(t\) but did not occur.  
    Rescheduling a predicted event using \textsc{ProcessEventEarlierThanPrediction} is done once per real event; since there are $T$ real events,
    this procedure requires $O(T)$ reschedules.
    By our previous accounting, there are at most \(O(T\log T)\) events that are rescheduled by \textsc{ProcessEventLaterThanPrediction}.  
    Hence, our algorithm does $O(T\log T)$ reschedules in total.
\end{proof}

\subsection{Best-of-All-Worlds Backstop Proofs}\label{app:backstop}

\backstop*
\begin{proof}
    At a high level, our meta-algorithm will update all of the algorithms in synchrony, but it will perform computations at the rate of the faster algorithm.  Thus, the faster algorithm will be up-to-date with the current timestep and will provide the output computation; each slower algorithm may have a backlog of computations that it has not yet performed.  This is described in~\cref{alg:backstop}.

    We fix a day \(t\) and analyze the work done by the meta-algorithm up through the \(t\)-th day.~\cref{algline:add-event-to-buffer} does \(O(t \cdot N)\) work over \(t\) updates.  We can assume that \(R_{A_i}(t) = \Omega(t)\) for all of the algorithms \(A_i\) (where the algorithms perform \(\Omega(1)\) processing per update), and thus the contribution of \(O(t \cdot N)\) to the work will be dominated by faster growing terms.  
    
    \cref{algline:interleave-computation} maintains the invariant that each of the algorithms \(A_1, \dots, A_N\) has been run for the same number of computation steps.  Thus, the first algorithm to complete the computation for all of events in its buffer after the \(t\)-th update will be precisely the algorithm \(A_i\) that minimizes \(R_{A_i}(t)\).  Over the course of all updates up to \(t\), the meta algorithm has run \(A_i\) for 
    \[R_{A_i}(t) = \min\{R_{A_1}(t), \dots, R_{A_N}(t)\} \text{ steps.}\] 
    Since it has run each of the \(N\) algorithms for the same number of steps, it has done total work 
    \[O\left( N \cdot \min\{R_{A_1}(t), \dots, R_{A_N}(t)\} \right).\]

    Correctness of the algorithm follows because a solution output on day \(t\) is the solution output by $A_i$ upon seeing exactly the stream of events up to day \(t\).  Thus, if $A_i$ is correct, the meta-algorithm is also correct.
\end{proof}

\subsection{Additional Related Work and Connections}
\label{app:related-work}

\paragraph{Sliding window and look-ahead models.}  In the \emph{sliding window model}, introduced by \cite{DGIM02}, the algorithm views an infinite stream of data, and must maintain a statistic over the last \(N\) data points seen (where \(N\) is the width of the window).  The sliding window model has a large body of work in the streaming literature \cite{PGD15, WLLSDW16, ELVZ17, BDMMUWZ20, EMMZ22, JWZ22, WZ22}, including \cite{CMS13, Rei19, BdBM21, ADKN23} who give \emph{semi-streaming} algorithms for graph problems in this model.  Interestingly, despite being an inherently dynamic model, (consisting of insertions, and deletions exactly \(N\) days after insertion), it has not been studied much in the dynamic algorithms literature.  
A related family of models are \emph{look-ahead models}, where a dynamic algorithm is given information about future events.  Examples include graph algorithms with access to the set of vertices involved in the next few updates (but not the full sequence of edge operations) \cite{KMW96}, and algorithms that have full access to the next few operations \cite{SM10}.  

The model that is most relevant to our work is the \emph{deletion look-ahead model}.  In this model, the algorithm maintains a statistic over a subset of some ground set of elements (e.g. a graph is a subset of possible edges), and has access to the future deletion times of all existing elements.  This is a strict generalization of the sliding window model.  It is known that designing an algorithm with an amortized guarantee in this model can be reduced to designing an algorithm with a worst-case guarantee in the \emph{incremental} model, where elements are only inserted and never deleted \cite{chan2011three, vdBNS19}.  This reduction and a stronger reduction that achieves a worst-case guarantee in the known-deletion model, are formalized in \cite{PR23}. \cite{PR23} actually considers a slightly more general setting in which the algorithm has access to the order in which the elements will be deleted, and not necessarily the exact deletion times.  Our main contribution considers a more general model, in which information about the future is subject to error.  We note that the \emph{predicted-deletion dynamic model} that we introduce is a strict generalization of the known-deletion model, which is itself a strict generalization of the sliding window model.

\paragraph{Algorithms with predictions. }
\emph{Algorithms with predictions}, also often called \emph{learning augmented algorithms}, is a paradigm that has been gaining much attention in recent years.  
An algorithm solicits \emph{predictions} from an untrusted source (e.g.\ a machine learning model) to help make decisions.  The goal is to design algorithms that achieve the following three desiderata \cite{LV21}: 
\begin{enumerate}[(1)]
    \item \textbf{(Consistency)} If the predictions are of high quality, the algorithm performs much better than a worst-case algorithm. 
    \item \textbf{(Competitiveness)} If the predictions are of low quality, the algorithm does not perform any worse than a worst-case algorithm. 
    \item \textbf{(Robustness)} The performance of the algorithm degrades gracefully as a function of the prediction error.
\end{enumerate}
Additionally, we want to solicit predictions that can be reasonably obtained in practice.  A detailed overview of the field is given in ~\cite{BWCA-AlgosWithPredictions}.

These desiderata present a challenge. While we would like to achieve all three for some reasonable notion of prediction, it is not a priori clear that such a guarantee is even possible.  For some problems and settings, it is not indeed possible to achieve all three simultaneously, and algorithms are designed that allow the user to trade off these objectives.  One example is the classic online problem of rent-or-buy, which exhibits an inherent trade-off between consistency and robustness \cite{KPS18, GP19, WZ20}.  Thus it is particularly interesting that it is indeed possible to achieve all three points for dynamic algorithms in our setting.  

The setting closest to the dynamic model is the \emph{warm start} setting.  In this setting, a static algorithm is given an instance, along with some additional predicted information.  Examples include graph problems \cite{DILMV21, CSVZ22, DMVW23}, in which the prediction is a candidate solution, and the quality of the prediction is measured as the distance from the prediction to the true optimal solution.  
In the warm start setting, it is often the case that algorithms are able to achieve consistency and robustness simultaneously.  A strong motivation for this model is time-series data, in which we want to solve a series of instances with the context that ``yesterday's instance is likely not too different from today's instance."  In this setting, we can think of using each day's solution as the prediction for the next day's instance as a kind of dynamic procedure.  The guarantees for such a process differ from the standard dynamic model.  In the dynamic model, the work done by the algorithm on a given day scales with the magnitude of the change in the instance, whereas in the warm start setting, the work scales with the magnitude of the change in the solution.  In general, results of these types are incomparable.  One benefit of the warm start setting is that it models a larger range of possible updates between subsequent instances.  On the other hand, we cannot expect the same kinds of update times that we get for fully dynamic algorithms, since for many of these problems, even checking if a predicted solution is optimal for a fresh instance can take time linear in the size of the instance.  Thus our work differs significantly from work on warm starts, both in terms of the kinds of predictions we expect, and in the kinds of guarantees we achieve.  

Online algorithms are the area in which algorithms with predictions were first studied, starting with the motivating work of~\cite{KBCDP18}, which demonstrated that machine learned predictions could dramatically improve the efficiency of index structures, both in theory and in practice.  Many online problems have since been studied in this model.  Some, including rent-or-buy \cite{KPS18, GP19, WZ20}, and problems related to caching \cite{LV21, BMRS20, JPS20, Roh20, Wei20,  ADJKR20, IMMR22}, utilize predictions of when future events occur.  Others such as the secretary problem \cite{DLPV21} and combinatorial optimization problems \cite{BMS20, LMRX21}, use predictions of what the optimal solution is.  An important line of work includes scheduling and queuing problems \cite{ACEPS20, LLMV20, Mit21}, in which access to predictions about parameters such as job length can allow an algorithm to circumvent strong lower bounds.  
This is related to our work, as our algorithm essentially schedules partial computations with access to predictions of the longevity of the edges involved.  In the online setting, it is often the case that problems exhibit an inherent tradeoff between consistency and robustness, and algorithms often include a tunable parameter that allows the user to trade off these objectives.

Some sequential settings require predictions of the frequencies of certain objects in a sequence, such as \cite{CGP20, JLLRW20, CEILMRSWWZ22} which design streaming algorithms, and \cite{EINRSW21} which minimizes sample complexity.  Other settings include learning good hashes and sketches of data \cite{Mit18, VKMK21, IVY19, HIKV18}, and using predicted information to maximize revenue in mechanism design settings \cite{MMV17}.

\subsection{Deterministic \(\ell_\infty\)-based bound}
\label{sec:deterministic-l-inf-bound}

In this section we show that the greedy allocation algorithm has competitive ratio bounded by a function of the minimum \(\ell_\infty\) norm of any solution.

Let the maximum error between any $\trep$ and $\tdel$ be denoted as $\maxe$. We show that our greedy strategy of assigning
each element update event prediction 
to the nearest free time slot obtains total error at most $t \cdot \maxe$ where $t$ is the total number of updates. 
Each slot can be taken by an event at most $\maxe$ away time-steps away. Thus, this means that given any two time slots $t_2$
and $t_1$ where $t_2 > t_1$, it holds that the number of events in $[t_1, t_2]$ is at most $t_2 - t_1 + 1 + 2\cdot \maxe$.
Using this observation, we prove the following lemma about the maximum error by our assignment of events to the timeline.

\greedymatchingguarantee*

\begin{proof}
    We first prove the observation that between any two time slots $t_2$ and $t_1$, the maximum number of updates with predicted
    times in $[t_1, t_2]$ is $t_2 - t_1 + 1 + 2 \cdot \maxe$. This is the case because in the true update sequence, the at any time
    slot $t'$ at most $\maxe$ updates on the left of $t'$ can have $\trep$ equal to $t'$ and at most $\maxe$ updates on the 
    right of $t'$ can have $\trep$ equal to $t'$. Now, we make the following potential argument. Give each update 
    $4\maxe \cdot \log(T)$ 
    coins when we assign the update to a time slot. 
    Each time we greedily assign a new update to the nearest free time slot, we take a coin from each update 
    that we \emph{pass on the way to the new assigned time slot}. We then need to argue that no update runs out of coins.

    For any time slot $t'$, an update passes $t'$ if the shortest distance to a free slot includes $t'$. We consider
    two contiguous segments of time of size $2^i$ for all $i \in [\log_2(T)]$
    where $t'$ is on the boundary. In other words, we consider $[t'- 2^i - 1, t']$ and $[t', t' + 2^i + 1]$.
    We consider such contiguous segments of time because only additional updates in the range $[t' - 2^{i-1}, t']$  (symmetrically
    $[t', t'+2^{i-1}]$) will cross $t'$ if there are no free slots in $[t'- 2^i - 1, t'-2^{i-1}]$ (symmetrically
    $[t', t' + 2^{i-1}]$. Otherwise, if there are free slots in those segments, then the updates
    will fill up those free slots first. By our argument above,
    for any such contiguous segment of time, there are at most $2\maxe$ additional updates that must be assigned. Hence,
    in the contiguous segment of time $[t' - 2^i - 1, t']$ at most $2\maxe$ updates are assigned and these additional updates may
    pass $t'$. The same holds for the contiguous segment of time $[t', t' + 2^i + 1]$. 
    Thus, a total of at most $4\maxe$ updates pass
    through $t'$ for each $i \in [\log_2(T)]$ and we do not use more than $4\maxe \cdot \log_2(T)$ coins from $t'$. 
    The additional $\maxe$ error is for the initial error of the prediction. 
\end{proof}

\bibliographystyle{alpha}
\bibliography{ref}
\end{document}